\newcommand\mybigxor[1]{
    \begin{tikzpicture}[baseline=(X.base), inner sep=0, outer sep=0]
        \node (X) {$\displaystyle{\bigvee_{#1}}$};
        \node at ($(X)+(0,.2)$)  {$\scriptstyle{+}$};
    \end{tikzpicture}
}
\newcommand\myxor{
    \begin{tikzpicture}[baseline=(X.base), inner sep=0, outer sep=0]
        \node (X) {$\bigvee$};
        \node at ($(X)+(0,.07)$)  {$\scriptscriptstyle{+}$};
    \end{tikzpicture}
}
\newcommand{\ed}{\text{\sf ED}}
\newcommand{\nop}[1]{}
\newcommand{\mycode}[1]{{\color{blue} \begin{verbatim} #1 \end{verbatim}}}
\newcommand{\suchthat}{\ | \ }
\newcommand{\gv}[1]{\ensuremath{\mbox{\boldmath$ #1 $}}}
\newcommand{\grad}[1]{\gv{\nabla} #1} 
\DeclareMathOperator*{\argmin}{arg\,min}
\DeclareMathOperator*{\argmax}{arg\,max}
\newcommand{\panda}{\text{\sf PANDA}}
\newcommand{\spanda}{\text{\sf \#PANDA}}
\newcommand{\faqw}{\text{\sf faqw}}
\newcommand{\subfaqw}{\text{\sf smfw}}
\newcommand{\ssubfaqw}{\text{\sf \#smfw}}
\newcommand{\ssubw}{\text{\sf \#subw}}
\newcommand{\Mod}{\text{\sf M}}
\newcommand{\InsideOut}{\text{\sf InsideOut}}
\newcommand{\K}{\mathbb{K}}    
\newcommand{\fhtw}{\text{\sf fhtw}}
\newcommand{\subw}{\text{\sf subw}}
\newcommand{\ksum}{\text{\sf $k$-sum}}
\newcommand{\opt}{\text{\sf OPT}}
\newcommand{\R}{{\mathbb R}}
\newcommand{\td}{\text{\sf TD}}
\newcommand{\dom}{\text{\sf Dom}}
\newcommand{\faq}{\text{\sf FAQ}}
\newcommand{\faqai}{\text{\sf FAQ-AI}}
\newcommand{\true}{\text{\sf true}}
\newcommand{\false}{\text{\sf false}}
\newcommand{\norm}[1]{\left\|#1\right\|}
\newcommand{\image}{\text{\sf image}}
\newcommand{\inner}[1]{\left\langle #1 \right\rangle}
\newcommand{\calH}{\mathcal H}
\newcommand{\calB}{\mathcal B}
\newcommand{\calL}{\mathcal L}
\newcommand{\calM}{\mathcal M}
\newcommand{\calV}{\mathcal V}
\newcommand{\calE}{\mathcal E}
\newcommand{\calW}{\mathcal W}
\newcommand{\bB}{{\mathbf B}}
\theoremstyle{plain}                   
\newtheorem{theorem}{Theorem}[section]
\newtheorem{lemma}[theorem]{Lemma}
\newtheorem{proposition}[theorem]{Proposition}
\newtheorem{cor}[theorem]{Corollary}
\theoremstyle{definition}              
\newtheorem{opm}{Question}
\newtheorem{conj}[theorem]{Conjecture}
\newtheorem{example}[theorem]{Example}
\newtheorem{definition}[theorem]{Definition}
\newtheorem{rmk}[theorem]{Remark}
\newtheorem{claim}{Claim}
\newcommand{\bi}{\begin{itemize}}
\newcommand{\ei}{\end{itemize}}
\newcommand{\be}{\begin{enumerate}}
\newcommand{\ee}{\end{enumerate}}
\newcommand{\bp}{\begin{proof}}
\newcommand{\ep}{\end{proof}}
\newcommand{\bcor}{\begin{cor}}
\newcommand{\ecor}{\end{cor}}
\newcommand{\bthm}{\begin{theorem}}
\newcommand{\ethm}{\end{theorem}}
\newcommand{\blmm}{\begin{lemma}}
\newcommand{\elmm}{\end{lemma}}
\newcommand{\bdefn}{\begin{definition}}
\newcommand{\edefn}{\end{definition}}
\newcommand{\bprop}{\begin{proposition}}
\newcommand{\eprop}{\end{proposition}}
\newcommand{\bconj}{\begin{conj}}
\newcommand{\econj}{\end{conj}}
\newcommand{\bopm}{\begin{opm}}
\newcommand{\eopm}{\end{opm}}
\newcommand{\brmk}{\begin{rmk}}
\newcommand{\ermk}{\end{rmk}}
\renewcommand{\vec}[1]{\ensuremath\boldsymbol{#1}}
\begin{document}

\title{Functional Aggregate Queries with Additive Inequalities}

\author{
    Mahmoud Abo Khamis \\ {\small relational\underline{AI}} \and
    Ryan R. Curtin \\ {\small relational\underline{AI}} \and
    Benjamin Moseley \\ {\small Carnegie Mellon University} \and
    Hung Q. Ngo \\ {\small relational\underline{AI}} \and
    XuanLong Nguyen \\ {\small University of Michigan} \and
    Dan Olteanu \\ {\small University of Zurich} \and
    Maximilian Schleich \\ {\small University of Washington}
}

\date{}

\maketitle

\begin{abstract}
    Motivated by fundamental applications in databases and relational
    machine learning, we formulate and study the problem of answering
    {\em functional aggregate queries} ($\faq$) in which some of the input factors
    are defined by a collection of additive inequalities between variables.
    We refer to these queries as $\faqai$ for short.

    To answer $\faqai$ in the Boolean semiring, we define {\em relaxed} tree
    decompositions and {\em relaxed} submodular and fractional hypertree width
    parameters. We show that an extension of the $\InsideOut$ algorithm
    using Chazelle's geometric data structure for solving the semigroup range search
    problem can answer Boolean $\faqai$ in time given by these new width parameters.
    This new algorithm achieves lower complexity than known solutions for $\faqai$.
    It also recovers some known results in database query answering.

    Our second contribution is a relaxation of the set of
    polymatroids that gives rise to the {\em counting} version of the submodular width, denoted
    by $\ssubw$. This new width is sandwiched between the submodular and the fractional
    hypertree widths.
    Any $\faq$ and $\faqai$ over one semiring can be answered in time
    proportional to $\ssubw$ and respectively to the relaxed version of $\ssubw$.

    We present three applications of our $\faqai$ framework to relational machine learning:
    $k$-means clustering, training linear support vector machines, and training models
    using non-polynomial loss. These optimization problems can be solved over
    a database asymptotically faster than computing the join of the database relations.
\end{abstract}


\section{Introduction}

In this article we consider the problem of computing functional aggregate queries with additive inequalities, or $\faqai$ queries for short.
Although existing algorithms such as $\InsideOut$~\cite{DBLP:conf/pods/KhamisNR16,faq-arxiv} and $\panda$~\cite{panda-pods,panda-arxiv} are able to evaluate $\faqai$ queries, they do not exploit the structure of the additive inequalities. We introduce variants of these algorithms to this effect. Whereas the prior algorithms work on hypertree decompositions of the queries, our new algorithms work on relaxations of these decompositions to achieve lower computational complexities than $\InsideOut$ and $\panda$.

Functional aggregate queries with additive inequalities can express computation needed for various database workloads and supervised and unsupervised machine learning.

On the database side, queries with inequalities occur naturally in scenarios involving temporal and spatial relationships between objects in databases. In a retail scenario (e.g., TPC-H), we would like to compute the revenue generated by a customer's orders whose dates closely precede the ship dates of their lineitems. In streaming scenarios, we would like to detect patterns of events whose time stamps follow a particular order~\cite{DBLP:series/synthesis/2010Golab}. In spatial data management scenarios, we would like to retrieve objects whose coordinates are within a multi-dimensional range or in close proximity of other objects~\cite{Mamoulis:2011:SDM:2208106}.
The evaluation of Core XPath queries over XML documents amounts to the evaluation of conjunctive queries with inequalities expressing tree relationships in the pre/post plane~\cite{Grust:2002}.

For machine learning, we show that $\faqai$ can express computation needed for $k$-means clustering, training linear support vector machines, and training models using non-polynomial loss. These optimization problems can be solved over
    a database asymptotically faster than computing the join of the database relations.

\subsection{Motivating examples}
\label{subsec:motivations}

A key insight of this article is that the efficient computation of inequality joins can  reduce the computational complexity of supervised and unsupervised machine learning.

\begin{example}
\label{ex:intro-query2}
The $k$-means algorithm divides the input dataset $\bm G$ into $k$ clusters of similar data points~\cite{DBLP:journals/prl/Jain10}. Each cluster $\bm G_i$ has a mean $\bm \mu_i \in \mathbb{R}^n$, which is chosen according to the following optimization (similarity is defined here with respect to the $\ell_2$ norm):

\begin{align}
  \min_{(\bm G_1,\dots,\bm G_k)} \sum_{i=1}^{k}
  \sum_{\bm x \in \bm G_i} \norm{\bm x - \bm\mu_i}_2^2.
  \label{eq:kmeans-optimization}
\end{align}

Let $\mu_{i,\ell}$ be the $\ell$'th component of mean vector $\bm\mu_i$. For
a data point $\bm x \in \bm G$, the function $c_{ij}$ computes the difference
between the squares of the $\ell_2$-distances from $\bm x$ to $\bm\mu_i$ and from $\bm x$
to $\bm\mu_j$:
\begin{eqnarray*}
 c_{ij}(\bm x) &=&
 \norm{\bm x - \bm\mu_i}_2^2-\norm{\bm x - \bm\mu_j}_2^2\\
 &=&\sum_{\ell \in [n]} [\mu_{i,\ell}^2 - 2  x_\ell (\mu_{i,\ell} - \mu_{j,\ell})
    - \mu_{j,\ell}^2].
\end{eqnarray*}
A data point $\bm x \in \bm G$ is
closest to mean $\bm\mu_i$ from the set of $k$ means iff $\forall j \in [k] : c_{ij}(\bm x) \leq 0$.

To compute the mean vector $\bm\mu_{i}$, we need to compute the sum of values for each dimension $\ell \in [n]$ over $\bm G_i: \sum_{\bm x \in \bm
    G_i} x_\ell$. If the dataset $\bm
G$ is the join of database relations
$(R_p)_{p\in[m]}$ over schemas $S_p\subseteq [n], \forall
p\in[m]$, we can formulate this sum computation as a datalog-like query with aggregates~\cite{logiql}:
\begin{align*}
    Q^{(i,\ell)}_1\left(\sum x_\ell\right) \leftarrow \left(\bigwedge_{p\in[m]} R_p(\bm x_{S_p}) \right) \wedge \left(\bigwedge_{j \in [k]} c_{ij}(\bm x)  \leq 0\right).
\end{align*}
The above notation means that the answer to query $Q^{(i,\ell)}_1$ is the sum of $x_\ell$
over all tuples $\bm x$ satisfying the conjunction on the right-hand side.
Section~\ref{sec:applications} gives further queries necessary to compute the $k$-means.
As we show in this article, such queries with aggregates and inequalities can
be computed asymptotically faster than the join defining
$\bm G$. \qed
\end{example}

Simple queries can already highlight the limitations of state-of-the-art evaluation techniques, as shown next.

\begin{example}
\label{ex:intro-query3}
State-of-the-art techniques take time $O(N^2)$ to compute the following query over relations of size $\leq N$:

    \begin{align*}
        Q_2() \leftarrow R(a,b) \wedge S(b,c) \wedge T(c,d) \wedge a \leq d,
    \end{align*}

\noindent Examples~\ref{ex:4:path:ineq} and~\ref{ex:4:path:ineq:count} show how to compute $Q_2$ and its counting version in time $O(N^{1.5}\log N)$ using the techniques introduced in this article.\qed
\end{example}

\subsection{The $\faqai$ problem}
\label{sec:intro:faqai}

One way to answer the above queries is to view them as
{\em functional aggregate queries} ($\faq$)~\cite{DBLP:conf/pods/KhamisNR16}
formulated in sum-product form over some semiring.
We therefore briefly introduce $\faq$ over a single semiring.

We first establish notation.
For any positive integer $n$, let $\calV = [n]$.
For $i \in \calV$,
let $X_i$ denote a variable/attribute, and $x_i$ denote a value in
the discrete domain $\dom(X_i)$ of $X_i$.
For any $K\subseteq \calV$, define
$\bm X_K = (X_i)_{i\in K}$,
$\bm x_K = (x_i)_{i\in K} \in \prod_{i\in K}\dom(X_i)$.
That is, $\bm X_K$ is a tuple of variables and $\bm x_K$ is a
tuple of values for these variables.

Consider a semiring $(\bm D, \oplus, \otimes, \bm 0, \bm 1)$.
Let $\calH=(\calV=[n],\calE)$ be a {\bf multi}-hypergraph, which means that $\calV=[n]$
is a set of vertices and $\calE$ is a {\bf multiset}\footnote{A multiset is a
collection of elements each of  which can occur multiple times.} of edges
where each edge $K \in \calE$ is a subset of $\calV$.
To each edge $K \in \calE$ we associate a function $R_K : \prod_{v\in K}\dom(X_i)
\to \bm D$ called {\em factor}.
An $\faq$ query over one semiring with free variables $F\subseteq \calV$ has the form:
\begin{align}
    Q(\bm x_F) &=
    \bigoplus_{\bm x_{\calV \setminus F} \in \prod_{i \in \calV \setminus F} \dom(X_i)}\
    \bigotimes_{K\in\calE} R_K(\bm x_K).
    \label{eqn:faq}
\end{align}
Under the Boolean semiring $(\{\true,\false\}, \vee, \wedge, \false,\true)$, the que\-ry~\eqref{eqn:faq} becomes a conjunctive query:
The factors $R_K$ represent input relations,
where $R_K(\bm x_K) = \true$ iff $\bm x_K \in R_K$, with some notational overloading.
Under the sum-product semiring, the query~\eqref{eqn:faq} counts the number of tuples in the join result for each tuple $\bm x_F$, where the factors $R_k$ are indicator functions $R_K (\bm x_K) = \bm 1_{x_K \in R_K}$.
(The notation $\bm 1_A$ denotes the indicator function of the event $A$ in the semiring $(\bm D,
\oplus, \otimes, \bm 0, \bm 1)$: $\bm 1_A = \bm 1$ if $A$ holds, and $\bm 0$ otherwise.)
To aggregate over some input variable, say $X_k$, we can designate an identity factor $R_k(x_k) = x_k$.

Throughout the article, we assume the query size to be a constant and state runtimes in data complexity.
It is known~\cite{DBLP:conf/pods/KhamisNR16} that over an arbitrary semiring,
the query~\eqref{eqn:faq} can be answered in time $O(N^{\fhtw(Q)}\cdot\log N)$, where $N$ is the size of the largest relation $R_K$, $\fhtw$ denotes the {\em fractional
hypertree width} of the query, and $Q$ has no free
variables~\cite{DBLP:journals/talg/GroheM14}.  If $Q$ has free variables,
$\fhtw$-width becomes $\faq$-width instead~\cite{DBLP:conf/pods/KhamisNR16}.
Here $N$ is the size of the largest factor $R_K$.
Over the Boolean semiring, the time can be lowered to $\tilde O(N^{\subw(Q)})$~\cite{panda-pods}, where $\subw$ is the {\em submodular width}~\cite{MR3144912} and $\tilde O$ hides a polylogarithmic factor in $N$.

Motivated by the examples in Section~\ref{subsec:motivations}, we formulate a class
of $\faq$ queries called $\faqai$:

\bdefn[$\faqai$]
\label{defn:faqai}
Given a hyperedge multiset $\calE$ that is partitioned into two multisets
$\calE = \calE_s \cup \calE_\ell$,
where $s$ stands for ``skeleton'' and $\ell$ stands for ``ligament'',
the input to a query from the $\faqai$ class is the following:
\be
\item To each hyperedge $K \in \calE_s$, there corresponds a function $R_K : \prod_{i \in
K} \dom(X_i) \to \bm D$, as in the $\faq$ case.
\item To each hyperedge $S \in \calE_\ell$, there corresponds $|S|$ functions $\theta^S_{v} :
\dom(X_v) \to \R$, one for every variable $v \in S$.
\ee
The output to the $\faqai$ query is the following:
\begin{align}
    Q(\bm x_F) &=
    \bigoplus_{\bm x_{\calV \setminus F}}
    \left(\bigotimes_{K\in \calE_s} R_K(\bm x_K)\right) \otimes
    \left(\bigotimes_{S \in \calE_\ell} \bm 1_{\sum_{v \in S} \theta^S_v(x_v) \leq 0} \right).
    \label{eqn:our:query}
\end{align}
The summation $\bigoplus$ is over tuples
$\bm x_{\calV \setminus F} \in  \prod_{i \in \calV \setminus F} \dom(X_i)$.
The (uni-variate) functions $\theta^S_v(x_v)$ can be user-defined functions, e.g.,
$\theta_1^S(x_1) = x_1^2/2$, or binary predicates with one key in $\dom(X_v)$ and a numeric value, e.g., a table \verb|salary(employee_id, salary_value)| where \verb|employee_id| is a key.
The only requirement we impose is that, given $x$, the value $\theta^S_v(x)$
can be accessed/computed in $O(1)$-time (in data complexity).
\edefn
If $\calE_\ell = \emptyset$, then we get back the $\faq$
formulation~\eqref{eqn:faq}.

\begin{example}
The queries in Section~\ref{subsec:motivations} are instances of~\eqref{eqn:our:query}:
\begin{align}
    Q^{(i,\ell)}_1() &=
    \bigoplus_{\bm x_{[n]}} x_\ell \otimes
                \left(\bigotimes_{p\in [m]} R_p(\bm x_{S_p})\right)  \otimes \left(\bigotimes_{j \in [k]}
    {\bm 1}_{c_{ij}(\bm x) \leq 0} \right), \label{eq:kmeans-query}\\
    Q_2() &=
    \bigoplus_{\bm x_{[4]}}
    R(x_1,x_2)\otimes S(x_2,x_3)\otimes T(x_3,x_4) \otimes \bm 1_{x_1-x_4 \leq 0}. \nonumber
\end{align}
\nop{$Q_1$ is over the Boolean semiring: The result of an XPath query is a set of nodes;}
Note that for a given $\bm x$, $c_{ij}(\bm x)$ can be computed in
$O(1)$-time in data complexity, which in this context means when the number of dimensions $n$
is a constant.
$Q_1$ is over the sum-product semiring. $Q_2$ can be over any semiring: Example~\ref{ex:4:path:ineq} discusses the case of the Boolean semiring while Example~\ref{ex:4:path:ineq:count} discusses the sum-product semiring. \qed
\end{example}

\subsection{Our contributions}

To answer $\faq$ queries of the form~\eqref{eqn:faq}, currently there are two dominant width
parameters: fractional hypertree width ($\fhtw$~\cite{DBLP:journals/talg/GroheM14}) and
submodular width ($\subw$~\cite{MR3144912}).\footnote{Section~\ref{subsec:width:params} overviews other
notions of widths.} It is known that $\subw \leq \fhtw$ for any query, and in the Boolean
semiring we can answer~\eqref{eqn:faq} in $\tilde
O(N^{\subw})$-time~\cite{panda-pods,MR3144912}. For non-Boolean semirings, the best known
algorithm, called $\InsideOut$ \cite{DBLP:conf/pods/KhamisNR16,DBLP:journals/sigmod/Khamis0R17},
evaluates~\eqref{eqn:faq} in time $O(N^{\fhtw}\log N)$. For queries with free variables, $\fhtw$ is replaced by the more general notion of $\faq$-width
($\faqw$)~\cite{DBLP:conf/pods/KhamisNR16}; however, for brevity we discuss
the non-free variable case here.

Following~\cite{DBLP:journals/sigmod/Khamis0R17},
both width parameters $\subw$ and $\fhtw$ can be defined via two constraint sets: the first
is the set $\td$ of all tree decompositions of the query hypergraph $\calH$,
and the second is the set of polymatroids $\Gamma_n$ on $n$ vertices of $\calH$.
The widths $\subw$ and $\fhtw$ are then defined as maximin and respectively minimax optimization problems
on the domain pair $\td$ and $\Gamma_n$, subject to ``edge domination'' constraints for
$\Gamma_n$.
Section~\ref{sec:prelim} presents these notions and other related preliminary
concepts in detail.

Our contributions include the following:

\paragraph*{Answering $\faqai$ over Boolean semiring}
On the Boolean semiring, one way to answer query~\eqref{eqn:our:query} is to apply the
$\panda$ algorithm~\cite{MR3144912}, using edge domination constraints on $\calE_s$
and the set $\td$ of {\em all} tree decompositions of $\calH = (\calV, \calE = \calE_s \cup
\calE_\ell)$. However, we can do better.
In Section~\ref{sec:relaxed-td} we define a new notion of tree decomposition:
{\em relaxed tree decomposition},
in which the
hyperedges in $\calE_\ell$ only have to be covered by adjacent TD bags.
Then, we present a variant of the $\InsideOut$ algorithm running on these
relaxed TDs
using Chazelle's classic geometric data structure~\cite{MR941937} for solving the
semigroup range search problem. We show that our $\InsideOut$ variant
meets the ``relaxed $\fhtw$'' runtime, which is the analog of $\fhtw$ on relaxed TD.
The $\panda$ algorithm can use the $\InsideOut$ variant as a blackbox to meet the
``relaxed $\subw$'' runtime. The relaxed widths are smaller than the non-relaxed
counterparts, and are strictly smaller for some classes of queries, which means our
algorithms yield asymptotic improvements over existing ones.

\paragraph*{Answering $\faq$ over an arbitrary semiring}
Next, to prepare the stage for answering $\faqai$ over an arbitrary semiring, in
Section \ref{subsec:relaxed-poly} we revisit
$\faq$ over a non-Boolean semiring, where {\bf no} known algorithm can
achieve the $\subw$-runtime. Here, we relax the set $\Gamma_n$ of polymatroids to a superset
$\Gamma_n'$ of {\em relaxed polymatroids}.
Then, by adapting the $\subw$ definition to relaxed polymatroids, we obtain a new width
parameter called ``sharp submodular width'' ($\ssubw$). We show how a variant of
$\panda$, called $\spanda$, can achieve a runtime of $\tilde O(N^{\ssubw})$ for
evaluating $\faq$ over an arbitrary semiring.
We prove that $\subw \leq \ssubw \leq \fhtw$, and that there are classes of
queries for which $\ssubw$ is unboundedly smaller than $\fhtw$.

\paragraph*{Answering $\faqai$ over an arbitrary semiring}
Getting back to $\faqai$, we apply the $\ssubw$ result under both relaxations: relaxed
TD and relaxed polymatroids, to obtain a new width parameter called the relaxed $\ssubw$. We show that the new variants of $\panda$ and $\InsideOut$ can achieve the relaxed $\ssubw$ runtime.
We also show that there are queries for which relaxed $\ssubw$ is essentially the best we can hope for, modulo $\ksum$-hardness.

\paragraph*{Applications to relational Machine Learning}
Equipped with the algorithms for answering $\faqai$, in
Section~\ref{sec:applications} we return to relational machine learning
applications over training datasets defined by feature extraction queries over relational
databases.
We show how one can train linear SVM, $k$-means, and ML models using Huber/hinge loss functions without completely materializing the output of the
feature extraction queries.
In particular, this shows that for these important classes of ML
models, one can sometimes train models in time sub-linear in the size of the training dataset.

An early version of this work appeared in the proceedings of the 38th ACM Symposium on Principles of Database Systems {\em (PODS'19)}~\cite{faqai-pods}. This article goes beyond that early version by
extending the class of loss functions supported by our framework for relational machine learning,
introducing new applications for our framework on the (probabilistic) database side,
and including detailed proofs and derivation steps for various key results.

\subsection{Related work}

Appendix~\ref{sec:applications-db}
revisits two prior results on the evaluation of queries with inequalities through $\faqai$ lenses: Core XPath queries over XML documents~\cite{GKP:XPath:2002} and inequality joins over tuple-independent probabilistic databases~\cite{OH09}.
\nop{We observe that the hypergraphs of such queries admit trivial relaxed TDs, which  explain the linearithmic time complexity for their evaluation.}
Throughout the article, we contrast our new width notions with $\fhtw$ and $\subw$  and our new algorithm $\spanda$ with the state-of-the-art algorithms $\panda$ and $\InsideOut$ for $\faq$ and $\faqai$ queries.

Prior seminal work considers the containment and minimization problem for queries with inequalities~\cite{Klug:JACM:1988}.
The efficient evaluation of such queries continues to receive good attention in the database community~\cite{Khayyat:2015:LFS:2831360.2831362}.
There is a bulk of work on queries with {\em dis}equalities (not-equal), which are at times referred to as inequalities.
Queries with disequalities are a proper subclass of $\faqai$ (since $x\neq y$ can be represented as $x < y \vee x > y$).
Prior works~\cite{Koutris2017, DBLP:journals/corr/abs-1712-07445} present several results for this proper subclass that are stronger than our general results for $\faqai$ in this work.
In particular, for queries with disequalities it suffices to consider tree decompositions only for ``skeleton'' edges (ignoring ``ligament'' edges which -in this case- are the disequalities)~\cite{Koutris2017, DBLP:journals/corr/abs-1712-07445}, whereas for the more general $\faqai$ we need to consider ``relaxed'' tree decompositions (see Def.~\ref{defn:relaxed:td}).

Section~\ref{sec:applications} reviews relevant works on machine learning.


\section{Preliminaries}
\label{sec:prelim}

We assume without loss of generality that semiring operations $\oplus$ and
$\otimes$ can be performed in $O(1)$-time. (When the assumption does not hold, for the set
semiring for instance, we can multiply the claimed runtime with the real operation's
runtime.)

\subsection{Tree decompositions and polymatroids}
\label{subsec:width:params}

We briefly define tree decompositions, $\fhtw$ and $\subw$ parameters. We refer the reader to the recent survey by
Gottlob et al.~\cite{DBLP:conf/pods/GottlobGLS16} for more details and historical
contexts.
In what follows, the hypergraph $\calH$ should be thought of as the hypergraph of
the input
query, although the notions of tree decomposition and width parameters are defined
independently of queries.

A {\em tree decomposition} of a hypergraph $\calH=(\calV,\calE)$
is a pair
$(T,\chi)$, where $T$ is a tree whose nodes are $V(T)$ and $\chi: V(T) \to 2^{\calV}$ maps each node
$t$ of the tree to a subset $\chi(t)$ of vertices such that
\begin{enumerate}
  \item every hyperedge $S\in \calE$ is a subset of some $\chi(t)$, $t\in V(T)$
(i.e. every edge is covered by some bag),
  \item for every vertex $v \in \calV$,
the set $\{t \suchthat v \in \chi(t)\}$ is a non-empty (connected) sub-tree of $T$.
This is called the {\em running intersection property}.
\end{enumerate}
The sets $\chi(t)$ are called the {\em bags}
of the tree decomposition.

Let $\td(\calH)$ denote the set of all tree decompositions of $\calH$.
When $\calH$ is clear from context, we use $\td$ for brevity.

To define width parameters, we use the polymatroid characterization
from Abo Khamis et al.~\cite{panda-pods}.
A function $f : 2^{\calV} \to \R_+$ is called a (non-negative)
{\em set function} on $\calV$.
A set function $f$ on $\calV$ is {\em modular} if
$f(S) = \sum_{v\in S} f(\{v\})$ for all $S\subseteq \calV$,
{\em monotone} if $f(X) \leq f(Y)$ whenever $X \subseteq Y$,
and {\em submodular} if $f(X\cup Y)+f(X\cap Y)\leq f(X)+f(Y)$
for all $X,Y\subseteq \calV$.
A monotone, submodular set function $h : 2^{\calV} \to \R_+$ with $h(\emptyset)
= 0$ is called a {\em polymatroid}.
Let $\Gamma_n$ denote the set of all polymatroids on $\calV=[n]$.

Given $\calH$, define the set of {\em edge dominated} set
functions:
\begin{align}
    \ed &:= \{ h \suchthat h : 2^{\calV} \to \R_+, h(S) \leq 1,
\forall S \in \calE\}.\label{eqn:ed}
\end{align}
We next define the submodular width and fractional hypertree width of
a given hypergraph $\calH$:
\begin{align}
    \fhtw(\calH) &:= \min_{(T,\chi) \in \td} \max_{h \in \ed \cap \Gamma_n}
    \max_{t\in V(T)}h(\chi(t)) \label{eqn:fhtw},\\
    \subw(\calH) &:=  \max_{h \in \ed \cap \Gamma_n}\min_{(T,\chi) \in \td}
    \max_{t\in V(T)}h(\chi(t)). \label{eqn:subw}
\end{align}
It is known~\cite{MR3144912} that $\subw(\calH) \leq \fhtw(\calH)$, and there are
classes of
hypergraphs with bounded $\subw$ and unbounded $\fhtw$. Furthermore, $\fhtw$ is
strictly less than other width notions such as (generalized) hypertree width and
tree width.

\brmk
\label{rmk:alternate-fhtw}
Prior to Abo Khamis et al.~\cite{panda-pods}, the commonly used definition of
$\fhtw(\calH)$ is~\cite{DBLP:journals/talg/GroheM14}
\[\fhtw(\calH) :=\min_{(T,\chi) \in \td} \max_{t\in V(T)}
\rho^*_\calE(\chi(t)),\]
where $\rho^*_\calE(B)$ is the fractional edge cover number of a vertex set
$B$ using the hyperedge set $\calE$. It is straightforward to show, using
linear programming duality~\cite{panda-pods}, that
\begin{align}
    \max_{t \in V(T)} \max_{h \in \ed \cap \Gamma_n} h(\chi(t)) = \max_{t\in V(T)}
    \rho^*_\calE(\chi(t)), \label{eqn:dual:fhtw}
\end{align} proving the equivalence of the two definitions. However, the
characterization~\eqref{eqn:fhtw} has two primary advantages: (i) it exposes the
minimax / maximin duality between $\fhtw$ and $\subw$, and more importantly (ii)
it makes it completely straightforward to relax the definitions by replacing the
$\ed \cap \Gamma_n$ constraints by other applicable constraints, as shall be
shown in later sections.\qed
\ermk

\bdefn[$F$-connex tree
decomposition~\cite{Bagan:CSL:07,Segoufin:2013:ECD:2448496.2448498}]
\label{defn:F-connex}
Given a hypergraph $\calH=(\calV,\calE)$ and a set $F\subseteq \calV$,
a tree decomposition $(T,\chi)$ of $\calH$ is {\em $F$-connex} if there is a subset $V'\subseteq V(T)$ that
forms a connected subtree of $T$ and satisfies $\bigcup_{t\in V'}\chi(t)=F$.
(Note that $V'$ could be empty.)

We use $\td_F$ to denote the set of all $F$-connex tree decompositions
of $\calH$. (Note that when $F=\emptyset$, $\td_F=\td$.)
\edefn

\bdefn[Non-redundant tree decomposition]
A tree decomposition $(T, \chi)$ is {\em redundant} if there are $t_1 \neq t_2 \in V(T)$
where $\chi(t_1) \subseteq \chi(t_2)$.
A tree decomposition is {\em non-redundant} if it is not redundant.
\label{defn:non-redundant-td}
\edefn
The following proposition is folklore. For completeness, we prove it in Appendix~\ref{appendix:td}.
\bprop
For every tree decomposition $(T, \chi)$ of a query $Q$, there exists a non-redundant
tree decomposition $(T', \chi')$ of $Q$ that satisfies
\[\{\chi'(t) \suchthat t \in V(T')\} \subseteq \{\chi(t) \suchthat t \in V(T)\}. \]
Moreover, if $(T, \chi)$ is $F$-connex, then $(T', \chi')$ can be chosen to be $F$-connex as well.
\label{prop:non-redundant-td}
\eprop
Based on the above proposition, we only need to consider non-redundant tree decompositions
$(T, \chi)$ in~\eqref{eqn:fhtw} and~\eqref{eqn:subw} (and later on in~\eqref{eqn:faqw} and~\eqref{eqn:subfaqw}).

\subsection{$\InsideOut$ and $\panda$}
\label{subsec:insideout:panda}

To answer the $\faq$ query~\eqref{eqn:faq}, we need a model for the representation of
the input factors $R_K$. The support of the function $R_K$ is the set of tuples
$\bm x_K$ such that $R(\bm x_K) \neq \bm 0$. We use $|R_K|$ to denote the size of its support.
For example, if $R_K$ represents an input relation, then $|R_K|$ is the number of
tuples in $R_K$. In practice, there often are factors with infinite support, e.g.,
$R_K$ represents a built-in function in a database, an arithmetic
operator, or a comparison operator as in~\eqref{eqn:our:query}.
To deal with this more general setting, the edge set $\calE$ is partitioned into two sets
$\calE = \calE_{\not\infty} \cup \calE_{\infty}$, where $|R_K|$ is finite for all $K \in
\calE_{\not\infty}$
and $|R_K|=\infty$ for all $K \in \calE_{\infty}$.
For simplicity, we often state runtimes of algorithms in terms of the ``input size''
$N := \max_{K \in \calE_{\not\infty}} |R_K|$.
Moreover, we use $|Q|$ to denote the output size of $Q$.
We always assume that $\bigcup_{S\in \calE_{\not\infty}} S = \calV$; otherwise the output size $|Q|$ could be infinite.

\paragraph*{$\InsideOut$~\cite{DBLP:conf/pods/KhamisNR16,faq-arxiv,DBLP:journals/sigmod/Khamis0R17}}
To answer~\eqref{eqn:faq}, the $\InsideOut$
algorithm
works by
eliminating variables, along with an idea called the ``indicator projection''
(see Appendix~\ref{app:insideout} for more details).
The runtime is
described by the {\em $\faq$-width} of the query, a slight generalization of $\fhtw$.
For one semiring, we can define $\faqw(Q)$ by applying
Definition~\eqref{eqn:fhtw} over a restricted set of tree decompositions and edge dominated
polymatroids. In particular, let $F \subseteq \calV$ denote the set of free variables
in~\eqref{eqn:faq}, and recall $\td_F$ from Definition~\ref{defn:F-connex}.
Then,
\begin{align}
    \ed_{\not\infty} &:= \{ h \suchthat h : 2^{\calV} \to \R_+, h(S) \leq 1,
    \forall S \in \calE_{\not\infty}\}\label{eqn:ed:f}, \\
    \faqw(Q) &:=
    \min_{(T,\chi) \in \td_F} \max_{h \in \ed_{\not\infty} \cap \Gamma_n}
    \max_{t\in V(T)}h(\chi(t)) \label{eqn:faqw}\\
    (\text{remark~\ref{rmk:alternate-fhtw}})&= \min_{(T,\chi) \in \td_F}
    \max_{t\in V(T)} \rho^*_{\calE_{\not\infty}}(\chi(t))\label{eqn:faqw2}
\end{align}
Note that $\faqw(Q) = \fhtw(\calH)$ when $F =\emptyset$ and $\calE_\infty =
\emptyset$ (i.e. $\calE =\calE_{\not\infty}$).
A simple result from Abo Khamis et al.~\cite{DBLP:conf/pods/KhamisNR16} is the following:
(Recall that throughout the article we assume the query size to be a constant and state runtimes in data complexity.)
\bthm[\cite{DBLP:conf/pods/KhamisNR16,faq-arxiv}]
$\InsideOut$ answers query~\eqref{eqn:faq} in time $O(N^{\faqw(Q)}\cdot\log N+|Q|)$.
\label{thm:insideout}
\ethm
A proof sketch of the above theorem can be found in Appendix~\ref{app:insideout}.
To solve the $\faqai$~\eqref{eqn:our:query}, we can apply
Theorem~\ref{thm:insideout} with $\calE_\infty \supseteq \calE_\ell$
since all ligament
factors are infinite. But this is suboptimal---later, we show a
new $\InsideOut$ variant that is polynomially better.

\paragraph*{$\panda$~\cite{panda-pods,panda-arxiv}}
For the Boolean semiring, i.e., when the $\faq$ query~\eqref{eqn:faq} is of the form
\begin{align}
    Q(\bm x_F) &=
    \bigvee_{\bm x_{\calV \setminus F} \in \prod_{i \in \calV \setminus F} \dom(X_i)}\
    \bigwedge_{K\in\calE} R_K(\bm x_K),
    \label{eqn:faq:boolean}
\end{align}
we can do much better than Theorem~\ref{thm:insideout}. When $F =\emptyset$,
Marx~\cite{MR3144912} showed that~\eqref{eqn:faq:boolean} can be answered in time $\tilde
O(N^{O(\subw(Q))})$. The $\panda$ algorithm~\cite{panda-pods,panda-arxiv} generalizes Marx's result
to deal with general degree constraints, and to meet precisely the $\tilde
O(N^{\subw(Q)})$-runtime (see Appendix~\ref{app:panda} for more details).
In fact, $\panda$ works with queries such as~\eqref{eqn:faq:boolean}
with free variables as well. In the context of this article, we can define the following
notion of {\em submodular $\faq$-width} in a natural way:

\begin{align}
    \subfaqw(Q) &:=
    \max_{h \in \ed_{\not\infty} \cap \Gamma_n}\min_{(T,\chi) \in \td_F}
    \max_{t\in V(T)} h(\chi(t)).
    \label{eqn:subfaqw}
\end{align}

\noindent Then, the results from Abo Khamis et al.~\cite{panda-pods} imply:
\bthm[\cite{panda-pods,panda-arxiv}]
$\panda$ answers query~\eqref{eqn:faq:boolean} in time $\tilde O(N^{\subfaqw(Q)}+|Q|)$.
\label{thm:panda}
\ethm
Appendix~\ref{app:panda} presents an overview of the core $\panda$ algorithm and its analysis.
The $\panda$ results only work for the Boolean semiring.
Section~\ref{sec:faqai} introduces a variant of $\panda$, called $\spanda$, that
also works for non-Boolean semirings.

\subsection{Semigroup range searching}
\label{subsec:semigroup}

{\sf Orthogonal range counting} (and searching)
is a classic and ubiquitous problem in computational geometry~\cite{MR2723879}:
given a set $S$ of
$N$ points in a $d$-dimensional space, build a data structure that,
given any $d$-dimensional rectangle, can efficiently return the number of enclosed
points.
More generally, there is the {\sf semigroup range searching} problem~\cite{MR941937}, where
each point $\bm p \in S$ of the $N$ input points also has a weight $w(\bm p) \in G$,
where $(G, \oplus)$ is a {\em semigroup}.\footnote{In a semigroup we
can add two elements using $\oplus$, but there is no additive inverse.}
The problem is: given a $d$-dimensional rectangle
$R$, compute $\bigoplus_{\bm p \in S \cap R} w(\bm p)$.

Classic results by Chazelle~\cite{MR941937} show that there are data
structures for {\sf semigroup range searching} which can be constructed in time
$O(N\log^{d-1} N)$, and answer rectangular queries in $O(\log^{d-1} N)$-time.
Also, this is almost the best we can hope for~\cite{MR1072265}.
There are more recent improvements to Chazelle's result (see, e.g.,
Chan et al.~\cite{Chan:2011:ORS:1998196.1998198}),
but they are minor (at most a $\log$ factor), as the
original results were already very close to matching the lower bound.

Most of these range search/counting problems can be reduced to the {\sf dominance range
searching} problem (on semigroups),
where the query is represented by a point $\bm q$, and the objective is
to return $\bigoplus_{\bm q \preceq \bm p, \bm p \in S} w(\bm p)$. Here, $\preceq$ denotes
the ``dominance'' relation (coordinate-wise $\leq$). We can think of $\bm q$ as the
lower-corner of an infinite rectangle query.

\nop{
\subsection{SVM and $k$-means}


A vast number of machine learning algorithms contain at its computational core
the optimization of an objective function evaluated via aggregation over a large
set of training data.
In many instances the objective function is non-differentiable, due to the non-differentiability
of either the loss (e.g., zero-one loss, hinge loss, Huber loss), or the regularizer
(e.g., sparsity-inducing penalty such as the lasso), or the presence of non-differentiable
components embedded within the model structure (e.g., thresholding function
in decision trees, and ReLU activation function in neural nets). A common approach
to non-differentiable optimization is via a subgradient-based algorithm.
It turns out that the evaluation of subgradient(s) reduces to exactly that of a FAQ
with (additive) inequalities. Alternatively, optimization with non-differentiable
objectives can also be posed as a constraint optimization problem. The contraints
tend to be additive inequalities, as a consequence of the fact that the learning
objectives are additively defined over the training data set.

\mjs{It appears we are overloading $\theta$ here. The $\theta$ parameters for SVM are not the same as the $\theta$ functions for ligament edges.}

\begin{ex} The support vector machine algorithm involves searching in the space
of linear discriminant function $f(x) = \langle \bm\theta, \bm x\rangle +b$, where $\bm\theta \in \mathbb{R}^d$
by minimizing the hinge loss aggregated over the training data pair $\bm D=\{\bm x, y\} \subset
\mathbb{R}^d \times \{\pm 1\}$.
\begin{align}
    J(\bm\theta) &=
    \sum_{(\bm x,y)\in\bm D}  \phi(y f(\bm x)) +
   \frac{\lambda}{2}\norm{\bm\theta}_2^2
   \label{eqn:svm:hinge:loss}
\end{align}
where the hinge loss takes the form $\phi(\alpha) := \max\{0,1-\alpha\}$, a piecewise linear function.
Since $\phi$ is non-differentiable but convex, a suitable generalization for differentiation and
optimization with respect to $\phi$ is via the notion of subgradients for which a subgradient-based
optimization algorithm can be applied (cf., \cite{Boyd-book}).
In particular, computing the subgradients of $\phi$ reduces to evaluating the inequality
$\alpha \geq 1$ or not.  Alternatively, one may consider the following equivalent problem
of optimization with inequality constraints. This is the original (maximum margin)
primal formulation of the SVM:
\begin{align}
   \min_{\xi, \bm \theta} &\qquad \frac{\lambda}{2} \norm{\bm\theta}^2 + \sum_{(\bm x,y)\in\bm D}\xi_{\bm
   x,y} \label{eqn:svm:primal}\\
   \text{s.t.}
   &\qquad y f(\bm x) \geq 1-\xi_{\bm x,y}, && \forall (\bm x,y) \in \bm D\\
   &\qquad \xi_{\bm x,y} \geq 0, && \forall (\bm x,y)\in\bm D.
\end{align}
This formulation brings the connection to FAQ with additive inequalities into a sharper
focus. In the sequel we shall see how our general machinery can be brought to bear on
fast training of SVM and its variants on large-scale relational data without the need for
materializing the training set.
\end{ex}

Next we will consider the application of our technique to unsupervised learning, exemplified by
the popular $k$-means clustering algorithm.

\begin{ex} A clustering task aims to subdivide a set of (unlabeled)
data, say $\bm D \subset \mathbb{R}^d$ into $k$ clusters of ``similar''
data points: $\bm D = \cup_{i=1}^{k}
\bm D_i$, where $k$ is given. Each cluster $D_i$ is represented by a cluster mean
$\bm \mu_i \in \mathbb{R}^d$. One of the most ubiquitous clustering methods,
Loyd's $k$-means clustering algorithm, involves the following optimization problem
with respect to the partition $(\bm D_1,\ldots, \bm D_k)$ and the $k$ mean variables $\mu_i$s:
\begin{align}
    \min_{(\bm D_1,\dots,\bm D_k)} \sum_{i=1}^{k} \sum_{\bm x \in \bm D_i} \norm{\bm x -
    \bm\mu_i}_2^2
\end{align}
The $k$-means algorithm proceeds by alternatively updating the partition
$(\bm D_i)_{i=1}^{k}$ and the corresponding $k$ means $\mu_i$s while
the objective function is repeatedly decreased.
It is clear that for a large relational data set $\bm D$, even the evaluation
of the above objective function is prohibitive.
What is slightly less obvious is that both updating steps
of the $k$-means algorithm also involve computational tasks that can be
posed as evaluations of FAQs. Moreover, the computationally more challenging step of
the two, the updating of the partition $(\bm D_i)_{i=1}^{k}$, involves FAQs with
additive inequalities. Section~\ref{$k$-means} will shed light on
this delightful application and the computational gain in great details.
\end{ex}
}



\section{Relaxed tree decompositions and relaxed polymatroids}
\label{sec:faqai}

\subsection{Connection to semigroup range searching}
We always assume that $\bigcup_{S\in \calE_{s}} S = \calV$; otherwise the output size $|Q|$ could be infinite.
We start with a special case of~\eqref{eqn:our:query} in which the skeleton part $\calE_s$
contains only {\em two} hyperedges $U$ and $L$.
Consider the aggregate query of the form

\begin{align}
    Q(\bm x_F) = \bigoplus_{\bm x_{\calV \setminus F}} \Phi_1(\bm x_U) \otimes
    \Phi_2(\bm x_L) \otimes
\left(\bigotimes_{S \in \calE_\ell}  \bm 1_{\sum_{v \in S} \theta^S_v(x_v) \leq 0} \right),
    \label{eqn:two:bags}
\end{align}

\noindent where $\Phi_1$ and $\Phi_2$ are two input functions/relations over variable sets $U$ and $L$, respectively.
We prove the following simple but important lemma:
\blmm
Let $N = \max\{|\Phi_1|, |\Phi_2|\}$, and $k = |\calE_\ell|$. For $F \subseteq U$,
query~\eqref{eqn:two:bags} can be answered in time $O(N \cdot (\log N)^{\max(k-1,1)})$.
\label{lmm:two:bags}
\elmm
\bp
If there is a hyperedge $S \in \calE_\ell$ for which $S \subseteq U$, then in a
$O(N\log N)$-time pre-processing step we can ``absorb'' the factor
$\bm 1_{\sum_{v\in S}\theta^S_v(x_v) \leq 0}$ into the factor $\Phi_1$, by replacing
$\Phi_1(\bm x_U)$ with the product $\Phi_1(\bm x_U) \otimes \bm 1_{\sum_{v\in S}\theta^S_v(x_v) \leq 0}$.
In particular, this product
can be computed by iterating over tuples $\bm x_U$ satisfying $\Phi_1(\bm x_U) \neq \bm 0$
and for each such tuple $\bm x_U$, testing whether the inequality $\sum_{v\in S}\theta^S_v(x_v) \leq 0$
holds. If it does, then the indicator $\bm 1_{\sum_{v\in S}\theta^S_v(x_v) \leq 0}$ takes a value of $\bm 1$,
hence the value of $\Phi_1(\bm x_U)$ remains unchanged after the product. Otherwise,
both the indicator $\bm 1_{\sum_{v\in S}\theta^S_v(x_v) \leq 0}$ and its product with $\Phi_1(\bm x_U)$
take a value of $\bm 0$. A similar absorption can be done with $S \subseteq L$.
Hence, without loss of generality we can assume that $S \not\subseteq L$ and $S \not\subseteq U$
for all $S \in \calE_\ell$.

Moreover, we only need to show that we can compute~\eqref{eqn:two:bags} for $F=U$,
because after $Q(\bm x_U)$ is computed, we can ``aggregate away'' variables
$\bm x_{U\setminus F}$ in $O(N\log N)$-time by computing the aggregation:
\[Q(\bm x_F) = \bigoplus_{\bm x_{U\setminus
F}} Q(\bm x_U).\]
The above aggregation can be computed by sorting tuples $\bm x_U$ satisfying $Q(\bm x_U) \neq \bm 0$
lexicographically based on $(\bm x_F, \bm x_{U\setminus F})$
so that tuples $\bm x_U$ sharing the same $\bm x_F$-prefix become consecutive.
Then for each distinct $\bm x_F$-prefix, we aggregate away $Q(\bm x_U)$ over all
tuples $\bm x_U$ sharing that prefix.

Abusing notation somewhat, for each $S \in \calE_\ell$ and each $T \subseteq S$, define the
function $\theta^S_T : \prod_{v\in T}\dom(X_v) \to \R$ by
\begin{align}
    \theta^S_T(\bm x_T) &:= \sum_{v \in T}\theta^S_v(x_v).
\end{align}
Fix a tuple $\bm x_U$ such that $\Phi_1(\bm x_U) \neq \bm 0$. A tuple $\bm x_L$ is said
to be {\em $\bm x_U$-adjacent} if $\pi_{U\cap L} \bm x_U = \pi_{U\cap L}\bm x_L$.
We show how to compute the following sum in poly-logarithmic time:
\begin{eqnarray}
    \bigoplus_{\bm x_{L \setminus U}} \Phi_1(\bm x_U) \otimes \Phi_2(\bm x_L)
    \otimes \left(\bigotimes_{S \in \calE_\ell}  \bm 1_{\sum_{v \in S} \theta^S_v(x_v) \leq 0} \right) =
    \label{eqn:partial:sum}\\
    \ \ \ \Phi_1(\bm x_U)\otimes
    \bigoplus_{\bm x_{L \setminus U}}
    \Phi_2(\bm x_L)\otimes
    \left(\bigotimes_{S \in \calE_\ell}  \bm 1_{
    \theta^S_{S\cap U}(\bm x_{S\cap U}) \leq -\theta^S_{S\setminus U} (\bm
    x_{S\setminus U})} \right). \label{eqn:total:weight}
\end{eqnarray}
where the inner sum ranges only over tuples $\bm x_L$
which are $\bm x_U$-adjacent. This is because the value of $\bm x_{U\cap L}$ has been fixed
and tuples $\bm x_L$ that are not $\bm x_U$-adjacent are inconsistent with the fixed value of
$\bm x_{U\cap L}$.

Now, for the fixed $\bm x_U$ and for each $\bm x_L$ define the following $k$-dimensional points:
\begin{align*}
    \bm q(\bm x_U) &= (q_S(\bm x_U))_{S\in \calE_\ell} \quad \text{where} \quad
    q_{S}(\bm x_U) := \theta^S_{S\cap U}(\bm x_{S\cap U}), \\
    \bm p(\bm x_L) &= (p_S(\bm x_L))_{S\in \calE_\ell} \quad \text{where} \quad
    p_{S}(\bm x_L) := -\theta^S_{S\setminus U}(\bm x_{S\setminus U}).
\end{align*}
We write $\bm q(\bm x_U) \preceq \bm p(\bm x_L)$ to say that
$\bm q(\bm x_U)$ is
dominated by $\bm p(\bm x_L)$ {\em coordinate-wise}: $q_S(\bm x_U)
\leq p_S(\bm x_L) \;\forall\; S \in \calE_\ell$.
Assign to each point $\bm p(\bm x_L)$ a ``weight'' of $\Phi_2(\bm x_L)$.
Now, taking~\eqref{eqn:total:weight},

\begin{align}
    \bigoplus_{\bm x_{L \setminus U}}
    \Phi_2(\bm x_L)
    \otimes \left(\bigotimes_{S \in \calE_\ell}  \bm 1_{\theta^S_{S\cap U} (\bm
    x_{S\cap U}) \leq -
    \theta^S_{S\setminus U}(\bm x_{S\setminus U})} \right)
    &=
    \bigoplus_{\bm x_{L \setminus U}}
    \left(\bigotimes_{S \in \calE_\ell}  \bm 1_{q_S(\bm x_U) \leq p_S(\bm
    x_L)}\right) \otimes \Phi_2(\bm x_L) \\
    &=
    \bigoplus_{\bm x_{L \setminus U}}
    \bm 1_{\bm q(\bm x_U) \preceq \bm p(\bm x_L)} \otimes \Phi_2(\bm x_L).
\end{align}

\noindent (The equality $\bigotimes_{S \in \calE_\ell}  \bm 1_{q_S(\bm x_U) \leq p_S(\bm
x_L)} = \bm 1_{\bm q(\bm x_U) \preceq \bm p(\bm x_L)}$ used above follows from the definition
of the component-wise $\preceq$.)
The expression thus computes, for a given ``query point'' $\bm q(\bm x_U)$,
the weighted sum over all points $\bm p(\bm x_L)$ that dominate the query point.
This is precisely the {\sf dominance range counting} problem, which---modulo a
$O(N(\log N)^{\max(k-1,1)})$-preprocessing step---can be solved in time $O((\log
N)^{\max(k-1,1)})$ \cite{MR941937}, as reviewed in
Section~\ref{subsec:semigroup}.

\ep

\begin{example}
    Let $R$ be a binary relation.
    Suppose we want to count the number of tuples satisfying $R(a,b) \wedge R(b,c) \wedge
    a<c$. By setting $F=\emptyset$, $U=\{a,b\}$, $L=\{b,c\}$,
    the problem can be reduced to the form~\eqref{eqn:two:bags} with $k=1$,
    $\calE_\ell=\{ \{a,c\} \}$. We can thus compute this count in time
    $O(N\log N)$.\qed
\end{example}

\subsection{Relaxed tree decompositions}
\label{sec:relaxed-td}

Equipped with this basic case, we can now proceed to solve the general setting
of~\eqref{eqn:our:query}. To this end, we define a new width parameter.

\bdefn[Relaxed tree decomposition]
Let $\calH = (\calV,$ $\calE = \calE_s \cup \calE_\ell)$ denote a multi-hypergraph
whose edge multiset is partitioned into $\calE_s$ and $\calE_\ell$.
A {\em relaxed} tree decomposition of $\calH$ (with respect to the partition $\calE_s \cup
\calE_\ell$) is a pair $(T, \chi)$, where $T=(V(T),E(T))$ is
a tree whose nodes and edges are $V(T)$ and $E(T)$ respectively, and $\chi : V(T) \to 2^{\calV}$ satisfies the following properties:
\bi
\item[(a)] The running intersection property holds: for each node $v\in\calV$ the set $\{t \in V(T) \suchthat v \in \chi(t)\}$ is a connected subtree in $T$.
\item[(b)] Every ``skeleton'' edge $S \in \calE_s$ is covered by some bag $\chi(t)$, $t \in V(T)$.
\item[(c)] Every ``ligament'' edge $S \in \calE_\ell$ is covered by the union of two {\em adjacent}
    bags $s$ and $t$, i.e. $S \subseteq \chi(s) \cup \chi(t)$, where $\{s, t\} \in E(T)$.
\ei
Let $\td^\ell(\calH)$ denote the set of all relaxed tree decompositions of $\calH$
(with respect to the skeleton-ligament partition).
When $\calH$ is clear from context we use $\td^\ell$ for the sake of brevity.
Given $F\subseteq \calV$, let $\td^\ell_F$ denote the set of all relaxed $F$-connex tree decompositions
of $\calH$.
\label{defn:relaxed:td}
\edefn

The new condition (c) in the above definition is needed so that later we can utilize
Lemma~\ref{lmm:two:bags} to compute aggregate queries over the relaxed tree decomposition.
In particular, the two adjacent bags $s$ and $t$ in condition (c) will play the role of
$U$ and $L$ from Lemma~\ref{lmm:two:bags} and the corresponding query~\eqref{eqn:two:bags}.

\subsubsection{$\faqai$ on a general semiring}

We use relaxed TDs in conjunction with Lemma~\ref{lmm:two:bags} to answer
$\faqai$ with a relaxed notion of $\faqw$.
In particular, the {\em relaxed} width parameters of $\calH$ are defined in
exactly the
same way as the usual width parameters defined in Section~\ref{sec:prelim},
except we allow the TDs to range over relaxed ones.

\bdefn[Relaxed $\faqw$]
Let $Q$ be an $\faqai$ query~\eqref{eqn:our:query}, and
$\calH = (\calV, \calE = \calE_s \cup \calE_\ell)$ be its hypergraph. Furthermore, let
$\calE_{\not\infty} \subseteq \calE_s$ denote the set of hyperedges $K \in
\calE$ for which
$|R_K|<\infty$. Then, the {\em relaxed $\faq$-width} of
$Q$ is defined by

\begin{align}
    \faqw_\ell(Q) &:=
    \min_{(T,\chi) \in \td^\ell_F} \max_{h \in \ed_{\not\infty} \cap \Gamma_n}
    \max_{t\in V(T)}h(\chi(t)) \label{eqn:relaxed:faqw}
\end{align}

When $F=\emptyset$, $\faqw_\ell$ collapses to $\fhtw_\ell$ which is the {\em relaxed $\fhtw$} for $\faqai$ $Q$ {\em without} free variables:

\begin{align}
    \fhtw_\ell(Q) &:=
    \min_{(T,\chi) \in \td^\ell_\emptyset}
    \max_{h \in \ed_{\not\infty} \cap \Gamma_n}
    \max_{t\in V(T)} h(\chi(t))
    \label{eqn:relaxed:fhtw}
\end{align}
\edefn

A relaxed tree decomposition $(T,\chi)$ of $Q$ is {\em optimal} if its width is
equal to $\faqw_\ell$, i.e., \[\faqw_\ell(Q) =  \max_{h \in \ed_{\not\infty}
\cap \Gamma_n} \max_{t\in V(T)}h(\chi(t)).\]

\bthm
Any $\faqai$ query $Q$ of the form~\eqref{eqn:our:query} on any semiring can be
answered in time
$O(N^{\faqw_\ell(Q)} \cdot (\log N)^{\max(k-1,1)}+|Q|)$, where $k$ is the maximum number of additive inequalities covered by a pair
of adjacent bags in an optimal relaxed tree decomposition.\footnote{Note that $k$ can be a lot smaller than $|\calE_\ell|$ since different additive inequalities can be covered by different pairs of adjacent bags in an optimal relaxed hypertree decomposition.}
\label{thm:relaxed:faqw}
\ethm
\bp
We first consider the case of no free variables because
this case captures the key idea.
Fix an optimal relaxed tree decomposition $(T, \chi)$. We first compute, for each bag $t \in V(T)$
of the tree decomposition, a factor $\Phi_t : \prod_{i \in \chi(t)} \dom(X_i) \to \bm D$
such that
\begin{align}
    Q() &=
    \bigoplus_{\bm x_{\calV}}
    \left(\bigotimes_{K\in \calE_s} R_K(\bm x_K)\right) \otimes
    \left(\bigotimes_{S \in \calE_\ell} \bm 1_{\sum_{v \in S} \theta^S_v(x_v) \leq 0} \right) \\
    &=\bigoplus_{\bm x_{\calV}}
    \left(\bigotimes_{t\in V(T)} \Phi_t(\bm x_{\chi(t)})\right) \otimes
    \left(\bigotimes_{S \in \calE_\ell} \bm 1_{\sum_{v \in S} \theta^S_v(x_v) \leq 0} \right). \label{eqn:query:F:empty}
\end{align}
To define the factors $\Phi_t$, we need the notion of
{\em indicator projection}~\cite{DBLP:journals/sigmod/Khamis0R17,faq-arxiv,DBLP:conf/pods/KhamisNR16};
see Appendix~\ref{app:insideout} for some background about the $\InsideOut$ algorithm
where this notion was originally developed.
\bdefn[Indicator Projection~\cite{DBLP:conf/pods/KhamisNR16,faq-arxiv,DBLP:journals/sigmod/Khamis0R17}]
For a given $K \in \calE_{\not\infty}$ and $I \subseteq \calV$ such that
$J := K \cap I \neq \emptyset$, the indicator
projection of $R_K$ onto the set $I$ is a function
$\pi_{K,I} : \prod_{v\in J} \dom(X_v) \to \{\bm 0, \bm 1\}$ defined by

\begin{align}
    \pi_{K,I}(\bm x_J) &:=
    \begin{cases}
        \bm 1 & \exists \bm x_{K\setminus J} \text { s.t. } R_K((\bm x_J, \bm x_{K\setminus J})) \neq \bm 0,\\
        \bm 0 & \text{otherwise.}
    \end{cases}
\end{align}
\label{defn:indicator:proj}
\edefn
Based on the above definition, it is easy to verify that for any
$K \in \calE_{\not\infty}$ and $I \subseteq \calV$ such that
$J := K \cap I \neq \emptyset$, we have the identity
\begin{equation}
    R_K(\bm x_k) = R_K(\bm x_k) \otimes \pi_{K,I}(\bm x_J).
    \label{eq:indicator:proj:prop}
\end{equation}

Recall from Definition~\ref{defn:relaxed:td} that every $K \in \calE_s$ is
covered by
at least one bag $\chi(t)$ for $t \in V(T)$.
Fix an arbitrary coverage assignment $\alpha : \calE_s \to V(T)$, where $K$ is
covered
by the bag $\chi(\alpha(K))$. Then, the factors $\Phi_t$ are defined by:
\begin{align}
    \Phi_t(\bm x_{\chi(t)}) &:= \bigotimes_{K \in \alpha^{-1}(t)} R_K(\bm x_K)
    \otimes \bigotimes_{\substack{K \in \calE_{\not\infty} \\ K \cap \chi(t) \neq \emptyset}}
    \pi_{K,\chi(t)}(\bm x_{K \cap \chi(t)}). \label{eqn:phi:t}
\end{align}

\begin{claim}
    The factors $\Phi_t$ defined by~\eqref{eqn:phi:t} satisfy~\eqref{eqn:query:F:empty}.
\end{claim}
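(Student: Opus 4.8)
The plan is to verify that the substitution of the bag-factors $\Phi_t$ for the original factors $R_K$ leaves the query value unchanged. The two places where equality~\eqref{eqn:query:F:empty} could fail are (i) factors $R_K$ with $K\in\calE_s$ might be dropped or duplicated when we aggregate them into a single bag via the coverage assignment $\alpha$, and (ii) the indicator-projection factors $\pi_{K,\chi(t)}$ that we inject into every adjacent bag are "new" and must be shown to be harmless. I would organize the argument around these two points, so that the claim reduces to the identity~\eqref{eq:indicator:proj:prop} plus a running-intersection bookkeeping step.

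First I would expand $\bigotimes_{t\in V(T)}\Phi_t(\bm x_{\chi(t)})$ using the definition~\eqref{eqn:phi:t}. Each $R_K$ with $K\in\calE_s$ appears exactly once in this product — in the bag $\chi(\alpha(K))$ — since $\alpha$ is a function; because $K$ is covered by $\chi(\alpha(K))$, the argument $\bm x_K$ is well-defined. Hence $\bigotimes_{t}\bigotimes_{K\in\alpha^{-1}(t)}R_K(\bm x_K)=\bigotimes_{K\in\calE_s}R_K(\bm x_K)$, recovering the skeleton part verbatim. It then remains to show that the extra product of indicator projections,
\[
    \bigotimes_{t\in V(T)}\ \bigotimes_{\substack{K\in\calE_{\not\infty}\\ K\cap\chi(t)\neq\emptyset}}\pi_{K,\chi(t)}(\bm x_{K\cap\chi(t)}),
\]
equals $\bm 1$ whenever $\bigotimes_{K\in\calE_s}R_K(\bm x_K)\neq\bm 0$, and is absorbed harmlessly otherwise. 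For the former, fix $\bm x_{\calV}$ with $\bigotimes_{K\in\calE_s}R_K(\bm x_K)\neq\bm 0$; then in particular $R_K(\bm x_K)\neq\bm 0$ for the $K$ under consideration (note $\calE_{\not\infty}\subseteq\calE_s$), so by the definition of the indicator projection, $\pi_{K,\chi(t)}(\bm x_{K\cap\chi(t)})=\bm 1$ for every such $K$ and $t$. Thus the whole extra factor is $\bm 1$ and~\eqref{eqn:query:F:empty} holds on that tuple. For tuples where the skeleton product is $\bm 0$, both sides of~\eqref{eqn:query:F:empty} contribute $\bm 0$ to the outer $\bigoplus$, using~\eqref{eq:indicator:proj:prop} to rewrite $R_K=R_K\otimes\pi_{K,\chi(t)}$ and the annihilation property $\bm 0\otimes a=\bm 0$ of the semiring; so equality holds termwise over all $\bm x_{\calV}$.

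The main obstacle — and the reason the indicator projections are inserted at all — is not this claim per se but the subsequent step: after establishing~\eqref{eqn:query:F:empty}, one eliminates the bags of $T$ one leaf at a time, and at each leaf elimination the running-intersection property must guarantee that every skeleton edge and every ligament edge still "living" in the remaining subtree is fully available, so that Lemma~\ref{lmm:two:bags} (with the leaf bag and its unique neighbor playing the roles of $L$ and $U$, and the ligament edges covered by that adjacent pair playing the role of $\calE_\ell$) applies with cost $O(N^{\faqw_\ell}\cdot(\log N)^{\max(k-1,1)})$ per step. I would therefore, after the claim, argue: (1) the indicator projections ensure that the intermediate factor at any bag has support size $O(N^{\rho^*_{\calE_{\not\infty}}(\chi(t))})=O(N^{\faqw_\ell})$ — this is exactly the $\InsideOut$ bound, via Remark~\ref{rmk:alternate-fhtw} and the duality~\eqref{eqn:dual:fhtw}; (2) when a leaf $t$ with neighbor $s$ is processed, the ligament edges $S\subseteq\chi(s)\cup\chi(t)$ are discharged by Lemma~\ref{lmm:two:bags}, the remaining ligament edges are untouched because by condition (c) they are covered by adjacent pairs disjoint from $\{t\}$ or entirely inside $T\setminus\{t\}$, and running intersection shifts $\chi(t)\cap\chi(s)$-information into $s$; (3) after $|V(T)|-1$ eliminations a single bag remains and a final $\bigoplus$ gives $Q()$. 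For the free-variable case I would invoke $F$-connexity of the relaxed TD exactly as in the standard $\faqw$ argument: eliminate non-free bags first so that the connected subtree realizing $F$ is processed last, keeping $\bm x_F$ un-aggregated, with the $|Q|$ term accounting for enumerating the output. The delicate point to get right is that Lemma~\ref{lmm:two:bags} requires $F\subseteq U$ in its statement, so when it is used inside this elimination the "free" set passed to it is $\chi(t)\cap\chi(s)$ (the separator that must survive), which is indeed a subset of the bag acting as $U$; I would flag this identification explicitly.
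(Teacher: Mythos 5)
Your proof of the claim is correct and essentially the paper's own argument: the paper likewise regroups the skeleton factors by the coverage assignment $\alpha$ (each $R_K$ appearing exactly once, in bag $\alpha(K)$) and absorbs every indicator projection $\pi_{K,\chi(t)}$ via the identity \eqref{eq:indicator:proj:prop}, which is precisely your pointwise nonzero/zero case analysis packaged as a single identity. (Your surrounding sketch of the leaf-elimination phase is outside the claim; the only point to adjust there is that the paper invokes Lemma~\ref{lmm:two:bags} with $F=U=\chi(t_2)$, the full parent bag, not the separator $\chi(t_1)\cap\chi(t_2)$, since $\Phi_{t_2}$, the remaining bags, and the ligament indicators covered by the pair still depend on $\bm x_{U\setminus L}$, so those variables must not be aggregated away at that step.)
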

The above claim can be proved as follows:
\begin{eqnarray}
\bigotimes_{K\in \calE_s} R_K(\bm x_K) &=&
\bigotimes_{t\in V(t)}\bigotimes_{K \in \alpha^{-1}(t)} R_K(\bm x_K)\\
\text{(by~\eqref{eq:indicator:proj:prop})}
&=& \bigotimes_{t\in V(t)}\underbrace{\left[\bigotimes_{K \in \alpha^{-1}(t)} R_K(\bm x_K)
\otimes \bigotimes_{\substack{K \in \not\infty \\ K \cap \chi(t) \neq \emptyset}}
\pi_{K,\chi(t)}(\bm x_{K \cap \chi(t)})\right]}_{\Phi_t(\bm x_{\chi(t)})}
\end{eqnarray}

For every $t \in V(T)$, the query $\Phi_t$ can be reduced to a join query and solved
using a worst-case optimal join algorithm~\cite{DBLP:conf/pods/000118,
DBLP:conf/pods/NgoPRR12, DBLP:conf/icdt/Veldhuizen14} as follows.
For every $K\in\calE_{\not\infty}$ where $K\cap \chi(t) \neq \emptyset$,
define $\overline\pi_{K,\chi(t)}$ to be the {\em support} of the factor $\pi_{K,\chi(t)}$,
which is the set of tuples $\bm x_{K\cap \chi(t)}$ satisfying
$\pi_{K,\chi(t)}(\bm x_{K \cap \chi(t)}) \neq \bm 0$:
\[\overline\pi_{K,\chi(t)}:=
\left\{\bm x_{K\cap \chi(t)} \in \prod_{v\in K \cap \chi(t)} \dom(X_v)\suchthat
\pi_{K,\chi(t)}(\bm x_{K \cap \chi(t)}) \neq \bm 0
\right\}.\]
$\overline\pi_{K,\chi(t)}$ can be viewed as a relation over variables $\bm x_{K\cap \chi(t)}$.
Computing $\Phi_{t}$ can be reduced to solving the
join query $\overline\Phi_{t}$ defined as:
\begin{eqnarray}
    \overline\Phi_{t}(\bm x_{\chi(t)}) &:=&
\Join_{\substack{K\in\calE_{\not\infty}\\K\cap \chi(t) \neq \emptyset}} \overline\pi_{K,\chi(t)}(\bm x_{K\cap \chi(t)})
\end{eqnarray}
This is because once we solve the join query $\overline\Phi_{t}$,
the factor $\Phi_{t}$ can be computed as follows:
\begin{eqnarray*}
    \Phi_{t}(\bm x_{\chi(t)}) =
    \begin{cases}
        \bigotimes_{K \in \alpha^{-1}(t)} R_K(\bm x_K)
        &\text{if $\bm x_{\chi(t)}\in \overline\Phi_{t}$}\\
        \bm 0&\text{ otherwise}
    \end{cases}
\end{eqnarray*}
where $\overline\Phi_{t}$ above denotes the {\em output} of the join query $\overline\Phi_{t}$.
The join query $\overline\Phi_{t}$ can be computed using a worst-case optimal
join algorithm in time
\begin{equation}
O(N^{\rho^*_{\calE_{\not\infty}}(\chi(t))}\cdot \log N) =
O(N^{\max_{h\in\ed_{\not\infty}\cap \Gamma_n} h(\chi(t))}\cdot \log N).
\end{equation}
Over all $t \in V(T)$, our runtime is bounded by
$O(N^w\log N)$, where
\begin{equation}w := \max_{t \in V(T)}\max_{h\in\ed_{\not\infty}\cap \Gamma_n}
h(\chi(t)).
\label{eq:w-of-td}
\end{equation}
Moreover for every $t \in V(T)$, the output size of the join query $\overline\Phi_{t}$ is bounded by
$N^{\rho^*_{\calE_{\not\infty}}(\chi(t))} \leq N^w$, thanks to the AGM bound~\cite{10.1109/FOCS.2008.43,DBLP:journals/talg/GroheM14}.

Next we compute~\eqref{eqn:query:F:empty} in time $O(N^w\log N)$.
We will make use of the fact that $(T,\chi)$ is a relaxed TD.
Fix an arbitrary root of the tree decomposition $(T,\chi)$;
following $\InsideOut$ (Appendix~\ref{app:insideout}), we compute~\eqref{eqn:query:F:empty} by eliminating
variables from the leaves of $(T,\chi)$ up to the root.
Thanks to Proposition~\ref{prop:non-redundant-td}, we can assume the tree decomposition
to be non-redundant.
Let $t_1$ be any
leaf of $(T,\chi)$, $t_2$ be its parent, where $L = \chi(t_1)$ and $U
= \chi(t_2)$.
Because of non-redundancy, we have $L\setminus U \neq \emptyset$.
Now write~\eqref{eqn:query:F:empty} as follows:
\begin{align}
&
\bigoplus_{\bm x_{\calV}}
\left(\bigotimes_{t\in V(T)} \Phi_t(\bm x_{\chi(t)})\right) \otimes
\left(\bigotimes_{S \in \calE_\ell} \bm 1_{\sum_{v \in S} \theta^S_v(x_v) \leq 0} \right) \nonumber \\
    &\ \ \ \ =
 \bigoplus_{\bm x_{\calV\setminus (L\setminus U)}} \bigoplus_{\bm x_{L \setminus U}}
\left(\bigotimes_{t\in V(T)}  \Phi_t(\bm x_{\chi(t)})\right) \otimes
\left(\bigotimes_{S \in \calE_\ell}  \bm 1_{\sum_{v \in S} \theta^S_v(x_v) \leq 0} \right) \nonumber \\
    &\ \ \ \ =
 \bigoplus_{\bm x_{\calV\setminus (L\setminus U)}}
\left( \bigotimes_{t\in V(T)\setminus\{t_1,t_2\}} \Phi_t(\bm x_{\chi(t)})\right) \otimes
\left(
\bigotimes_{\substack{S \in \calE_\ell \nonumber\\ S \cap (L\setminus U)=\emptyset}}
 \bm 1_{\sum_{v \in S} \theta^S_v(x_v) \leq 0} \right) \nonumber \\
&\ \ \ \ \ \ \otimes \underbrace{\left[ \bigoplus_{\bm x_{L \setminus U}}
\Phi_{t_1}(\bm x_L)  \otimes  \Phi_{t_2}(\bm x_U)
 \otimes
\left(
\bigotimes_{\substack{S \in \calE_\ell \\ S \cap (L\setminus U)\neq \emptyset}}
 \bm 1_{\sum_{v \in S} \theta^S_v(x_v) \leq 0} \right)
\right].}_{\text{a sub-query $\varphi_U(\bm x_{U})$ of the form~\eqref{eqn:two:bags} with free vars $U$}}
\label{eqn:two-bags-subquery}
\end{align}
The third equality uses the semiring's distributive law.
(Note that $S\cap(L\setminus U)\neq \emptyset$ implies that $S \subseteq (L
\cup U)$ thanks to Definition~\ref{defn:relaxed:td} and the fact that $t_2$ is
the
only neighbor of $t_1$.)
Lemma~\ref{lmm:two:bags} implies
that we can compute the sub-query $\varphi_U(\bm x_{U})$ from~\eqref{eqn:two-bags-subquery} in the allotted time. The above step eliminates all
variables in $L \setminus U$.
In particular after this step,
the original query $Q()$ from~\eqref{eqn:query:F:empty} becomes:
\begin{eqnarray}
Q() &=& \bigoplus_{\bm x_{\calV\setminus (L\setminus U)}}
\left( \varphi_U(\bm x_{U}) \otimes \bigotimes_{t\in V(T)\setminus\{t_1,t_2\}} \Phi_t(\bm x_{\chi(t)})\right) \otimes
\left(
\bigotimes_{\substack{S \in \calE_\ell\\ S \cap (L\setminus U)=\emptyset}}
 \bm 1_{\sum_{v \in S} \theta^S_v(x_v) \leq 0} \right)
 \label{eqn:eliminating:L-U}
\end{eqnarray}
The above is an $\faq$ of the same form~\eqref{eqn:our:query} except that
it no longer involves the variables $\bm x_{L\setminus U}$ (Recall
that $L\setminus U\neq \emptyset$).
It admits a tree decomposition that results from the original tree decomposition
$(T, \chi)$ by removing the leaf $t_1$.
In particular, the new factor $\varphi_U(\bm x_{U})$ in~\eqref{eqn:eliminating:L-U}
is covered by the bag $\chi(t_2)$ and all other properties of tree decompositions continue to hold
after the removal of $t_1$.
By induction on the number of variables, we solve the new query~\eqref{eqn:eliminating:L-U}
in time $O(N^w\log N)$. Induction completes the proof.
(In the base case, we have a query with no variables where the theorem holds trivially.)

When the query has free variables, the algorithm proceeds similarly to the case of an $\faq$ with free
variables~\cite{DBLP:conf/pods/KhamisNR16,faq-arxiv}. See Appendix~\ref{app:insideout}
for a recap of how to handle free variables in an $\faq$.
\ep

\begin{example}
\label{ex:relaxed-td}
Given three binary relations $R, S$ and $T$,
consider a query $Q$ that counts the number of tuples $(a, b, c, d)$ that
satisfy:
\begin{equation}
R(a, b) \wedge S(b, c) \wedge T(c, d) \wedge
(a \leq c) \wedge (c \leq b) \wedge (d \leq b).
\end{equation}
The query $Q$ has $\calE_s = \calE_{\not\infty} = \{\{a, b\}, \{b, c\}, \{c,
d\}\}$
and $\calE_\ell = \calE_{\infty} =$ $\{\{a, c\}, \{b, c\},$ $\{b, d\}\}$.
Let $N = \max\{|R|, |S|, |T|\}$.
Note that $\faqw(Q) = 2$.
In fact, any of the previously known algorithms,
e.g.~\cite{DBLP:conf/pods/KhamisNR16,DBLP:journals/sigmod/Khamis0R17}, would
take time $O(N^2)$ to answer $Q$.
However, this query has $\faqw_\ell(Q) = 1$, and by
Theorem~\ref{thm:relaxed:faqw}, it can be answered in time $O(N\cdot \log N)$.
(Note that here $2=k < |\calE_\ell| = 3$.)
An optimal relaxed tree decomposition is shown in Figure~\ref{fig:relaxed-td}.\qed
\end{example}

\begin{figure}
\centering{
\tikzstyle{TDNode} = [draw, ellipse, align=center]
\tikzstyle{TDEdge} = [thick]
\tikzstyle{LEdge} = [dashed]
\begin{tikzpicture}[scale=.9, every node/.style={transform shape}]
\pgfmathsetmacro{\radius}{.5}
\pgfmathsetmacro{\width}{.4}
\node[TDNode] at(0,0) (ab) {$a$\hspace{\width in}$b$};
\node[TDNode] at(4,0) (cd) {$c$\hspace{\width in}$d$};
\node[TDNode] at(2,1) (bc) {$b$\hspace{\width in}$c$};
\draw[TDEdge] (ab)--(bc);
\draw[TDEdge] (cd)--(bc);
\draw[LEdge] ($(ab)+(-\radius,0.15)+(-0.05,0)$) to[out = 80, in = 130] ($(bc)+(\radius,0.1)$);
\draw[LEdge] ($(cd)+(+\radius,0.15)+(0.05,0)$) to[out = 100, in = 50]
($(bc)+(-\radius,0.1)$);
\draw[LEdge] ($(bc)+(-\radius,-0.1)$) to[out = -30, in = 210]
($(bc)+(+\radius,-0.1)$);
\node at (-0.1, 1.2) [rotate=0] {$\leq$};
\node at (4.1, 1.2) [rotate=0] {$\geq$};
\node at (2, 0.95) {$\geq$};
\end{tikzpicture}
}
\caption{An optimal relaxed tree decomposition for the query in
Example~\ref{ex:relaxed-td}. {\em Ligament} edges are dashed.
Each {\em skeleton} edge is held in one bag.}
\label{fig:relaxed-td}
\end{figure}
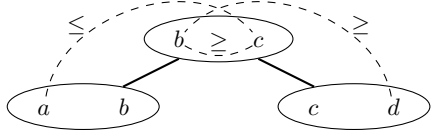

We next give a couple of simple lower and upper bounds for $\faqw_\ell$.
The upper bound shows that, effectively $\faqw_\ell$ is the best we can hope for, if the $\faqai$ query is arbitrary. The lower bound shows that, while the relaxed tree decomposition idea can improve the runtime by a polynomial factor, it cannot improve the runtime over straightforwardly applying $\InsideOut$ (over non-relaxed tree decompositions) by more than a polynomial factor.

\bprop
For any positive integer $m$, there exists an $\faqai$ query of the
form~\eqref{eqn:our:query} for which $F=\emptyset$, $\faqw_\ell(Q) \geq m$ and
it cannot be answered in time $o(N^{\faqw_\ell(Q)})$, modulo $\ksum$ hardness.
\label{prop:kum:lowerbound}
\eprop

\bp
It is widely assumed~\cite{DBLP:conf/soda/PatrascuW10,DBLP:conf/icalp/LincolnWWW16} that $O(N^{\lceil k/2 \rceil})$ is the best runtime for
$\ksum$, which is the following problem: given $k$ number sets $R_1,\dots,R_k$ of  maximum size $N$,
determine whether there is a tuple $\bm t \in R_1 \times \cdots \times R_k$ such that
$\sum_{i\in[k]} t_i = 0$. We can reduce $\ksum$ to our problem: Consider the
query $Q$ over the Boolean semiring:
\begin{align}
	Q() \leftarrow \left(\bigwedge_{i\in[k]} R_i(x_i)\right) \wedge
	\left(\sum_{i\in[k]} x_i\leq 0\right) \wedge \left(\sum_{i\in[k]} x_i\geq
	0\right).
	\label{eqn:our:ksum:query}
\end{align}
The answer to $Q$ is true iff there is a tuple  $(x_1,\dots,x_k) \in
R_1\times\cdots\times R_k$ such that $\sum_{i\in[k]}x_i = 0$. The reduction
shows that our
query~\eqref{eqn:our:ksum:query} is $\ksum$-hard. For this query, $\faqw_\ell(Q) = \lceil k / 2 \rceil$.
\nop{
	By answering the above $\faqai$ query and then perturbing the instance to reduce each number
	$x_i$ by a tiny amount, the difference between the two answers is exactly the number of
	tuples $(x_1,\dots,x_k)$ summing up to $0$. The reduction also shows that our
	query~\eqref{eqn:our:ksum:query} is $\ksum$ hard even in the Boolean semiring!
	For query~\eqref{eqn:our:ksum:query}, $\faqw_\ell(Q) = \lceil k / 2 \rceil$. This  completes the proof.
}
\ep

\bprop
For any $\faqai$ query $Q$ of the form~\eqref{eqn:our:query},
we have $\faqw_\ell(Q) \geq \frac 1 2 \faqw(Q)$;
in particular, when $Q$ has no free variables $\fhtw_\ell(Q) \geq \frac 1 2 \fhtw(Q)$.
\label{prop:faqw:1:2}
\eprop

\bp
Let $(T, \chi)$ denote a relaxed tree decomposition of $Q$ with fractional hypertree
width $\faqw_\ell(Q)$. Construct a new (non-relaxed) tree decomposition
$(T',\chi')$ for $Q$ as
follows.
Each vertex $t$ in $V(T)$ is also a vertex in $V(T')$
with $\chi'(t) = \chi(t)$.
Moreover, to each edge $\{s,t\} \in E(T)$ there corresponds an additional vertex
$st$ in $V(T')$ whose
bag
is $\chi'(st) = \chi(s) \cup \chi(t)$. As for the edge set of $T'$, for each
edge $\{s, t\} \in
E(T)$, there are two corresponding edges in $E(T')$, namely $\{s, st\}$ and
$\{t, st\}$.
We can verify
that $(T',\chi')$ is a (non-relaxed) tree decomposition of $Q$.
Moreover because each bag of $(T', \chi')$ is covered by at most two bags of $(T, \chi)$, the $\faq$-width of $(T', \chi')$ is at most
$2\cdot\faqw_\ell(Q)$.
Finally, if $(T, \chi)$ is $F$-connex, then so is $(T', \chi')$.
\ep

\subsubsection{$\faqai$ on the Boolean semiring}

Before explaining how we can adapt $\panda$ to solve an $\faqai$ query on the Boolean semiring, we give the intuition with an example.

\begin{example}
    \label{ex:4:path:ineq}
    Consider the following $\faqai$:
    \begin{align}
        Q() \leftarrow R(a,b) \wedge S(b,c) \wedge T(c,d) \wedge a \leq d.
        \label{eq:4:path:ineq}
    \end{align}
    Here $\faqw_\ell(Q) = \faqw(Q) = 2$.
    Using fractional hypertree width measure and $\InsideOut$ (even with
    relaxed
    TDs and Theorem~\ref{thm:relaxed:faqw}), the
best runtime is $O(N^2)$, because no matter which (relaxed) TD
we choose,
the worst-case bag relation size is $\Theta(N^2)$.
However the $\panda$ framework~\cite{panda-pods,panda-arxiv} can solve many queries, including this one,
in time smaller than the FAQ-width. At a very high level, the way $\panda$ achieves this is
by carefully
partitioning the input data and then choosing a possibly different tree decomposition
for each part. Query~\eqref{eq:4:path:ineq} accepts two non-redundant
and non-trivial\footnote{A tree decomposition is trivial if it consists of only one bag containing
all the variables.} relaxed tree decompositions. The first tree decomposition consists of the
bags $\{a, b, c\}$ and $\{c, d\}$ while the second has the bags $\{a, b\}$ and $\{b, c, d\}$.
The $\panda$ framework utilizes both tree decompositions simultaneously to solve this query.
In particular, for each tuple $(a, b, c, d)$ satisfying the body of query~\eqref{eq:4:path:ineq},
we make sure that this tuple is ``captured by'' at least one of the two tree decompositions
in the sense that it will reported by a query over this tree decomposition.
We realize this intuition using the following {\em disjunctive Datalog rule}:
    \begin{align}
        U(a,b,c) \vee W(b,c,d) \leftarrow R(a,b) \wedge S(b,c) \wedge T(c,d).
        \label{eqn:simple:ddlog}
    \end{align}
    In the above rule, there are two relations in the head $U$ and $W$, and they form a solution to
the rule iff the following holds: if $(a,b,c,d)$ satisfies the body, then either
$(a,b,c) \in U$ or $(b,c,d) \in W$. Via information-theoretic
inequalities~\cite{panda-pods,panda-arxiv}, we are able to show that $\panda$ can compute a
solution $(U,W)$ to the above disjunctive Datalog rule in time $\tilde O(N^{1.5})$. In particular, both $|U|$ and $|W|$ are bounded by $N^{1.5}$.

    Given such a solution $(U,W)$ to~\eqref{eqn:simple:ddlog} (which is not necessarily unique), it is
    straightforward
to verify that the following also holds, using the distributivity of $\vee$ over
$\wedge$:
    \begin{equation}
        (R(a,b) \wedge W(b,c,d)) \vee (U(a,b,c) \wedge T(c,d))
        \quad\leftarrow\quad R(a,b) \wedge S(b,c) \wedge T(c,d).
    \end{equation}
    By semijoin-reducing $W$ against $S$ and  $T$ (i.e. by replacing $W(b, c, d)$ with
    $(W(b, c, d)\Join S(b, c)) \Join T(c, d)$),
    and similarly by semjoin-reducing $U$ against
    $R$ and $S$, we conclude that
    \begin{equation*}
        (R(a,b) \wedge W(b,c,d)) \vee (U(a,b,c) \wedge T(c,d))
    \quad\equiv\quad R(a,b) \wedge S(b,c) \wedge T(c,d).
    \end{equation*}
    Finally, we have a rewrite of the original body:
    \begin{multline}
        [R(a,b) \wedge W(b,c,d) \wedge a \leq d] \vee [U(a,b,c) \wedge T(c,d) \wedge a \leq d] \\
        \equiv R(a,b) \wedge S(b,c) \wedge T(c,d) \wedge a \leq d.
        \label{eqn:simple:ddlog:final}
    \end{multline}
    The above captures precisely our intuition that every tuple $(a, b, c, d)$ satisfying
    the body of~\eqref{eq:4:path:ineq} should be reported by either one of the two relaxed
    tree decompositions.
    By defining intermediate rules, we can compute $Q$ from them:
    \begin{align}
        Q_1() &\leftarrow R(a,b) \wedge W(b,c,d) \wedge a \leq d, \\
        Q_2() &\leftarrow U(a,b,c) \wedge T(c,d) \wedge a \leq d, \\
        Q() &\leftarrow Q_1() \vee Q_2().
    \end{align}
    $Q_1$ and $Q_2$ are of the
    form~\eqref{eqn:two:bags}, and thus they each can be answered in $\tilde
    O(N^{1.5})$-time (since $|U|, |W| \leq N^{1.5}$). This implies that $Q$ can
    be answered in $\tilde
    O(N^{1.5})$-time overall.\qed
\end{example}

The strategy outlined in the above example uses $\panda$ to evaluate an $\faqai$
query over the Boolean semiring. The resulting algorithm achieves a natural generalization of the submodular $\faq$-width defined in~\eqref{eqn:subfaqw}:

\bdefn
    Given an $\faqai$ query $Q$~\eqref{eqn:our:query} over the Boolean semiring. The {\em relaxed submodular $\faq$-width} of $Q$ is defined by
\begin{align}
    \subfaqw_\ell(Q) &:=  \max_{h \in \ed_{\not\infty} \cap
    \Gamma_n}\min_{(T,\chi) \in \td^\ell_F} \max_{t\in V(T)}h(\chi(t)).
    \label{eqn:relaxed:subfaqw}
\end{align}
\edefn
(Recall that the set of relaxed tree decompositions $\td_F^\ell$ was defined in
Definition~\ref{defn:relaxed:td}.)

\bthm
Any $\faqai$ query $Q$ of the form~\eqref{eqn:our:query} on the Boolean semiring can be answered in time
$\tilde O(N^{\subfaqw_\ell(Q)}+|Q|)$.
\label{thm:subfaqw:boolean}
\ethm

\bp
As in the proof of Theorem~\ref{thm:relaxed:faqw}, we first assume there are no
free variables; the generalization to $F \neq \emptyset$ is a straightforward generalization
of techniques developed in~\cite{DBLP:conf/pods/KhamisNR16,faq-arxiv} and reviewed in
Appendix~\ref{app:insideout}.
When
$F=\emptyset$, the query~\eqref{eqn:our:query} is written in Datalog as:
\begin{align}
	Q() &\leftarrow \bigwedge_{K \in \calE_s} R_K \wedge
	\bigwedge_{S \in \calE_\ell} \left[\sum_{v \in S}\theta^S_v \leq 0 \right].
	\label{eqn:new:body}
\end{align}
We write $R_K$ instead of $R_K(\bm x_K)$ and $\theta^S_v$ instead of $\theta^S_v(x_v)$ to avoid clutter. It will be implicit throughout this proof that the subscript of a factor/function
indicates its arguments.
To answer query~\eqref{eqn:new:body}, the first step is to find
one relation $S^{(T,\chi)}_{\chi(t)}$ (over variables $\chi(t)$) for every bag $t \in V(T)$ of every relaxed tree decomposition $(T,\chi) \in \td^\ell_\emptyset$ such that the relations $S^{(T,\chi)}_{\chi(t)}$ together form a solution to the following equation:
\begin{align}
	\bigwedge_{K \in \calE_s} R_K
	&\equiv
	\bigvee_{(T,\chi) \in \td^\ell_\emptyset} \bigwedge_{t\in V(T)}
	S^{(T,\chi)}_{\chi(t)}.
	\label{eqn:equiv:form}
\end{align}
Note that the right-hand side of~\eqref{eqn:equiv:form} is a
{\em Boolean tensor decomposition} of
the left-hand side:
In particular under the Boolean semiring
$(\bm D, \oplus, \otimes, \bm 0, \bm 1) = (\{\true,\false\}, \vee, \wedge, \false,\true)$,
the left-hand side of~\eqref{eqn:equiv:form} can be viewed as an $n$-dimensional tensor
where $n = |\calV| = |\cup_{K \in \calE_s} K|$ while the right-hand side is an equivalent sum of a product of tensors.
The idea of using Boolean tensor decomposition to speed up query
evaluation was used in the context of queries
with {\em dis}equalities~\cite{DBLP:journals/corr/abs-1712-07445}.
Assuming that we can compute the intermediate relations
$S^{(T,\chi)}_{\chi(t)}$ efficiently satisfying~\eqref{eqn:equiv:form},
then~\eqref{eqn:new:body} can be answered by answering
for each $(T,\chi) \in \td_\emptyset^\ell$ an intermediate query:
\begin{align}
	Q^{(T,\chi)}() &\leftarrow \bigwedge_{t\in V(T)} S^{(T,\chi)}_{\chi(t)}
	\wedge
	\bigwedge_{S \in \calE_\ell} \left[\sum_{v \in S}\theta^S_v \leq 0 \right].
	\label{eqn:intermediate}
\end{align}
The final answer $Q$ is obtained by the Datalog rule:
\begin{align}
	Q() &\leftarrow
    \bigvee_{(T,\chi) \in \td^\ell_\emptyset} Q^{(T,\chi)}().
    \label{eqn:intermediate:union}
\end{align}
The key point here is that each intermediate query~\eqref{eqn:intermediate} is
an $\faqai$ query~\eqref{eqn:our:query} with $\faqw_\ell \leq 1$.
\footnote{We can also show here that $\faqw_\ell$ is exactly 1 although this is not needed for the proof
of Theorem~\ref{thm:subfaqw:boolean}. In particular, by comparing~\eqref{eqn:relaxed:faqw}
to~\eqref{eqn:fhtw}, we can see that for any query $Q$,
$\faqw_\ell(Q) \geq \fhtw(\calH_{\not\infty} := (\calV, \calE_{\not\infty}))$,
and $\fhtw$ for any hypergraph is at least
$1$~\cite{DBLP:journals/talg/GroheM14}.}
This is because $Q^{(T,\chi)}()$ admits a relaxed tree decomposition $(T,\chi)$ where
each bag $\chi(t)$ for $t \in V(T)$ is covered by one relation
$S^{(T,\chi)}_{\chi(t)} \in \calE_{\not\infty}$, hence
$\faqw_\ell(Q^{(T,\chi)}()) \leq
\max_{h \in \ed_{\not\infty} \cap \Gamma_n}
\max_{t\in V(T)}h(\chi(t)) \leq 1$.
By Theorem~\ref{thm:relaxed:faqw} each intermediate query~\eqref{eqn:intermediate} can be answered in time
$\tilde O(M)$ where
\begin{align}
	M &= \max_{(T,\chi) \in \td^\ell_\emptyset} \max_{t\in V(T)} |S^{(T,\chi)}_{\chi(t)}|.
	\label{eqn:M}
\end{align}

It remains to show how to compute tables $S^{(T,\chi)}_{\chi(t)}$ that form a solution to~\eqref{eqn:equiv:form}; to do so, we apply
distributivity of $\vee$ over $\wedge$ to rewrite the right-hand side of~\eqref{eqn:equiv:form} as follows.
Let $\calM$ be the collection of {\em all} maps $\beta : \td^\ell_\emptyset \to 2^\calV$
such that $\beta(T,\chi) = \chi(t)$ for some $t \in V(T)$; in other words, $\beta$ selects
one bag $\chi(t)$ out of each tree decomposition $(T,\chi)$. Then, from the distributive
law we have
\begin{align}
	\bigvee_{(T,\chi) \in \td^\ell_\emptyset} \bigwedge_{t\in V(T)}
	S^{(T,\chi)}_{\chi(t)}
	&\equiv
	\bigwedge_{\beta \in \calM}
	\bigvee_{(T,\chi) \in \td^\ell_\emptyset} S^{(T,\chi)}_{\beta(T,\chi)},
	\label{eqn:equiv:form:2}
\end{align}
which means to solve the relational equation~\eqref{eqn:equiv:form} we can instead
solve the equation
\begin{align}
	\bigwedge_{\beta \in \calM}
	\bigvee_{(T,\chi) \in \td^\ell_\emptyset} S^{(T,\chi)}_{\beta(T,\chi)}
	&\equiv \bigwedge_{K \in \calE_s} R_K.
	\label{eqn:equiv:form:3}
\end{align}
To solve the above equation, for each $\beta \in \calM$ we can find tables $S^{(T,\chi)}_{\beta(T,\chi)}$ that form a solution to the following equation
\begin{align}
	\bigvee_{(T,\chi) \in \td^\ell_\emptyset} S^{(T,\chi)}_{\beta(T,\chi)}
	&\equiv \bigwedge_{K \in \calE_s} R_K.
	\label{eqn:equiv:form:3.5}
\end{align}
To do that, for each $\beta \in \calM$, we compute a solution to
the following disjunctive Datalog rule:
\begin{align}
	\bigvee_{(T,\chi) \in \td^\ell_\emptyset}
	W^{(T,\chi)}_{\beta(T,\chi)}
	&\leftarrow \bigwedge_{K \in \calE_s} R_K.
	\label{eqn:equiv:form:4}
\end{align}
Once we obtain the relations $W^{(T,\chi)}_{\beta(T,\chi)}$, we can semijoin-reduce them
against the input relations $R_K$ (i.e. replace $W^{(T,\chi)}_{\beta(T,\chi)}$ with
$W^{(T,\chi)}_{\beta(T,\chi)}\Join R_K$ for each input relation $R_K$ where $K \subseteq \beta(T,\chi)$),
in order to obtain $S^{(T,\chi)}_{\beta(T,\chi)}$
that solve~\eqref{eqn:equiv:form:3}.
Once we obtain those $S^{(T,\chi)}_{\beta(T,\chi)}$, we plug them in~\eqref{eqn:intermediate}
to obtain an $\faqai$ query of the form~\eqref{eqn:our:query} for each relaxed tree decomposition
$(T, \chi) \in \td^\ell_\emptyset$.
We use Theorem~\ref{thm:relaxed:faqw} to solve each one of those queries in time $\tilde O(M)$ where $M$
was given by~\eqref{eqn:M}. This is the step of the algorithm where the additive inequalities
$\bigwedge_{S \in \calE_\ell} \left[\sum_{v \in S}\theta^S_v \leq 0 \right]$
participate in the computation.
Once we obtain the solutions to queries~\eqref{eqn:intermediate},
we use \eqref{eqn:intermediate:union} to obtain the answer of the original $\faqai$ query.

The only step in the above algorithm that we haven't specified yet is how to evaluate each
disjunctive Datalog rule~\eqref{eqn:equiv:form:4}. We do so by running
the $\panda$ algorithm, which computes the rule in time bounded by $\tilde O(N^{e(\beta)})$, where
\begin{align}
	e(\beta) &=
	\max_{h \in \ed_{\not\infty} \cap \Gamma_n}
	\min_{(T,\chi) \in \td^\ell_\emptyset}
	h(\beta(T,\chi)).
\end{align}
Maximizing over $\beta \in \calM$, the runtime is bounded by $\tilde O(N^w)$, where
\begin{align}
	w &=
	\max_{\beta \in \calM} e(\beta) \\
	&=
	\max_{\beta \in \calM}
	\max_{h \in \ed_{\not\infty} \cap \Gamma_n}
	\min_{(T,\chi) \in \td^\ell_\emptyset}
	h(\beta(T,\chi)) \\
	&=
	\max_{h \in \ed_{\not\infty} \cap \Gamma_n}
	\max_{\beta \in \calM}
	\min_{(T,\chi) \in \td^\ell_\emptyset}
	h(\beta(T,\chi)) \\
	&=
	\max_{h \in \ed_{\not\infty} \cap \Gamma_n}
	\min_{(T,\chi) \in \td^\ell_\emptyset}
	\max_{t \in V(T)}
	h(\chi(t)) = \subfaqw_\ell(Q).\label{eq:next-to-last}
\end{align}
The first equality in~\eqref{eq:next-to-last} follows from the minimax lemma in~\cite{panda-arxiv}.
Our reasoning above also shows that $M$ from~\eqref{eqn:M} is bounded
by $N^{\subfaqw_\ell(Q)}$.
\ep

\subsection{Relaxed polymatroids}
\label{subsec:relaxed-poly}

A key step in the proof of Theorem~\ref{thm:subfaqw:boolean} is to find the Boolean tensor
decomposition~\eqref{eqn:equiv:form} of the product over $R_K$. In a non-Boolean semiring, this becomes a tensor decomposition on this semiring:
\begin{align}
     \bigotimes_{K \in \calE_s} R_K
     &=
    \bigoplus_{(T,\chi) \in \td^\ell_F} \bigotimes_{t\in V(T)} S^{(T,\chi)}_{\chi(t)}.
     \label{eqn:equiv:form:general}
\end{align}
In order to compute this tensor decomposition, we can still follow the script of the proof
of Theorem~\ref{thm:subfaqw:boolean}, working on the parameter space of the input factors
$R_K$; however, for the equality in~\eqref{eqn:equiv:form:general}
to hold (it is an identity over the value-space of the factors), it suffices to ensure the
following property:

\begin{quote}
    For any $\bm x_{\calV}$ s.t. $\bigotimes_{K\in \calE_s} R_K(\bm x_K) \neq \bm 0$,
    there is {\em exactly one} tree
    decomposition $(T,\chi) \in \td^\ell_F$ for which
        \begin{align}
    \bigotimes_{t\in V(T)} S^{(T,\chi)}_{\chi(t)}(\bm x_{\chi(t)}) =
    \bigotimes_{K\in \calE_s} R_K(\bm x_K),
    \label{eqn:desired}
\end{align}
while for the other TDs, the left-hand side \mbox{above is $\bm 0$}.
\end{quote}
Essentially, the property ensures that we do not have to perform inclusion-exclusion (IE) over
the tree decompositions in $\td_F^\ell$.\footnote{IE is difficult for two reasons: (1) IE
computation explodes the runtime, and (2) in a general semiring there may not be additive
inverses and thus IE may not even apply.}
We do not know how to ensure this property in general. However, under a relaxed notion
of polymatroids, the property above holds.
Since this idea applies to $\faq$ queries in general, we start with our result on
$\faq$ queries first, before specializing it to $\faqai$.

\subsubsection{$\faq$ over an arbitrary semiring}

To explain how we can guarantee the property~\eqref{eqn:desired} for an $\faq$
query over an arbitrary semiring, consider the following example. Suppose that
we would like to evaluate the (aggregate) query
\begin{align}
    Q() &= \bigoplus_{\bm x_{[4]}} R_{12}(x_1,x_2) \otimes R_{23}(x_2,x_3) \otimes R_{34}(x_3,x_4) \otimes R_{41}(x_4,x_1).
    \label{eqn:4:cycle}
\end{align}
We write $R_{ij}$ instead of $R_{ij}(x_i,x_j)$ for short. The factors $R_{ij}$
are functions of two variables $R_{ij} : \dom(X_i) \times \dom(X_j) \to \R$, and they
are represented by {\em ternary} relations in a database. Abusing notation we will also use $R_{ij}$ to refer to its support, i.e., the binary relation over $(X_i,X_j)$ such that $(x_i, x_j) \in R_{ij}$ iff $R_{ij}(x_i,x_j) \neq \bm 0$.

There are only two non-trivial tree decompositions for the
``$4$-cycle'' query~\eqref{eqn:4:cycle}: one with bags $\{1,2,3\}$ and $\{3,4,1\}$,
and the other with bags $\{1,2,4\}$ and $\{2,3,4\}$.\footnote{The trivial TD with one bag $\{1, 2, 3, 4\}$ can always be replaced by a non-trivial TD in the considered bounds/algorithms without making them any worse. Similarly, redundant TDs can be replaced by non-redundant ones.}
To evaluate the query,
we first solve the relational equation~\eqref{eqn:equiv:form:general}, but only on the
supports; i.e., we would like to find relations $S_{123}, S_{341}$, $S_{234}$, and $S_{412}$
such that
\begin{align}
        & R_{12} \wedge R_{23} \wedge R_{34} \wedge R_{41} \equiv (S_{123}
        \wedge S_{341}) \vee (S_{234} \wedge S_{412})
        \label{eqn:4:cycle:rel} \equiv\\
        & (S_{123} \vee S_{234}) \wedge
                (S_{123} \vee S_{412})\ \wedge
                (S_{341} \vee S_{234}) \wedge
                (S_{341} \vee S_{412}).\nonumber
\end{align}
The second $\equiv$ is due to the distributivity of $\vee$ over $\wedge$. Since the last
formula is in CNF, we can solve each term separately by solving $4$ different disjunctive
Datalog rules:
\begin{align}
    (S_{123} \vee S_{234}) &\leftarrow R_{12} \wedge R_{23} \wedge R_{34}
    \wedge R_{41}, \label{eqn:4:cycle:rel:1}\\
    (S_{123} \vee S_{412}) &\leftarrow R_{12} \wedge R_{23} \wedge R_{34}
    \wedge R_{41}, \label{eqn:4:cycle:rel:2}\\
    (S_{341} \vee S_{234}) &\leftarrow R_{12} \wedge R_{23} \wedge R_{34}
    \wedge R_{41}, \label{eqn:4:cycle:rel:3}\\
    (S_{341} \vee S_{412}) &\leftarrow R_{12} \wedge R_{23} \wedge R_{34}
    \wedge R_{41}. \label{eqn:4:cycle:rel:4}
\end{align}
Applying the proof-to-algorithm conversion idea from $\panda$~\cite{panda-pods,panda-arxiv}, the above disjunctive
Datalog rules can be solved with the $\panda$ algorithm. It is beyond the scope of this article to describe the $\panda$ algorithm in full details. However, we can
describe a solution.
Let $N = \max\{|R_{12}|,|R_{23}|,$ $|R_{34}|,|R_{41}|\}$. For each input relation/factor,
define their ``light'' parts as follows.
\begin{align}
    R_{ij}^\ell &:= \{ (x_i,x_j) \in R_{ij} \ : \ |\sigma_{X_i=x_i}R_{ij}|\leq
    \sqrt N\}.
\end{align}
Also, for every $R_{ij}$, define $R^h_{ij} := R_{ij} \setminus R^\ell_{ij}$. Then, one
can verify that the following is a solution to the relational
equations \eqref{eqn:4:cycle:rel:1}-\eqref{eqn:4:cycle:rel:4}:
\begin{align*}
    S_{ijk} &= R_{ij} \Join R^l_{jk} \cup \pi_i R^h_{ij} \Join R_{jk}.
\end{align*}
The above is not yet solution to \eqref{eqn:4:cycle:rel}.
However we can refine it as follows to obtain such a solution:
\begin{align}
    S_{ijk} &= (R_{ij} \Join R^l_{jk} \cup \pi_i R^h_{ij} \Join R_{jk})\Join R_{ij} \Join R_{jk}\nonumber\\
    &=R_{ij} \Join R^l_{jk} \cup R^h_{ij} \Join R_{jk}.\label{eqn:S_ijk-rel}
\end{align}
(These extra relations $R_{ij} \Join R_{jk}$ that are joined into $S_{ijk}$ to turn them
into a solution to~\eqref{eqn:4:cycle:rel} will be referred to as ``filters''
in the proof of Theorem~\ref{thm:subfaqw:general} below.)
It is straightforward to verify that each $S_{ijk}$ can be computed in
$\tilde O(N^{1.5})$-time.
However, \eqref{eqn:4:cycle:rel} alone is not enough to guarantee \eqref{eqn:desired}. Instead, we now need $S_{ijk}$ to satisfy the following stronger condition (where $\myxor$ denotes the {\em exclusive} OR):
\begin{align}
    & R_{12} \wedge R_{23} \wedge R_{34} \wedge R_{41} \equiv (S_{123}
    \wedge S_{341}) \myxor (S_{234} \wedge S_{412})
    \equiv\label{eqn:S_ijk-rel:xor}\\
    & (S_{123} \myxor S_{234}) \wedge
    (S_{123} \myxor S_{412})\ \wedge
    (S_{341} \myxor S_{234}) \wedge
    (S_{341} \myxor S_{412}).\nonumber
\end{align}
Luckily, in this particular example, our previous solution for $S_{ijk}$ from \eqref{eqn:S_ijk-rel} happens to be a solution to \eqref{eqn:S_ijk-rel:xor} as well.
Once we have the relations $S_{ijk}$ from \eqref{eqn:S_ijk-rel}, we can extend them naturally into factors (so that
they are represented by $4$-ary relations) satisfying~\eqref{eqn:desired}. In particular,
as functions with range $\R$, they are defined by
\begin{align}
    S_{ijk}(x_i,x_j,x_k) &:= R_{ij}(x_i,x_j) \otimes R_{jk}(x_j,x_k).
    \label{eqn:S_ijk-fun}
\end{align}
Finally the query $Q$ from \eqref{eqn:4:cycle}
can be computed by taking the sum of two queries:
\begin{eqnarray}
Q() = \left[\bigoplus_{\bm x_{[4]}} S_{123}(x_1, x_2, x_3)\otimes S_{341}(x_3, x_4, x_1)\right] \bigoplus \left[\bigoplus_{\bm x_{[4]}} S_{234}(x_2, x_3, x_4)\otimes S_{412}(x_4, x_1, x_2)\right]
\end{eqnarray}

The above sketch does not work for a general $\faq$ query because the relational solution returned by $\panda$ is not guaranteed to satisfy~\eqref{eqn:desired}.
(If we could do that, then we would have been able to solve $\#\text{\sf CSP}$ queries in submodular width time, but the latter is unlikely to be possible since the submodular width tightly characterizes the hardness of $\text{\sf CSP}$ queries~\cite{MR3144912}.)
We could however restrict $\panda$ forcing it to maintain~\eqref{eqn:desired}
at the cost of weakening the runtime bound achieved by $\panda$.
In particular, $\panda$'s runtime is upperbounded by the submodular ($\faq$) width, which is a maximum over some set of polymatroids (See Section~\ref{subsec:insideout:panda}).
We will now replace these polymatroids with a superset, called {\em $\calE$-polymatroids}, leading to a larger version of the submodular ($\faq$) width called
``sharp submodular ($\faq$) width''.
The latter captures the runtime of our new version of $\panda$, called $\spanda$.

\bdefn[$\calE$-polymatroids and $\Gamma_{n|\calE}$]
Given a collection $\calE$ of subsets of $\calV$, a set function $h: 2^\calV
\to \R_+$ is said to be a {\em $\calE$-polymatroid} if it satisfies the following:
(i) $h(\emptyset) = 0$,
(ii) $h(X) \leq h(Y)$ whenever $X \subseteq Y$, and
(iii) $h(X\cup Y)+h(X\cap Y)\leq h(X)+h(Y)$ for every pair $X, Y \subseteq \calV$
\underline{\bf such that $X \cap Y \subseteq S$ for some $S \in \calE$}.\footnote{The \underline{\bf underlined} part is the only distinction between $\calE$-polymatroids and polymatroids. If we drop it, we get back the original definition of polymatroids.}
In particular, a $2^\calV$-polymatroid is a polymatroid as defined in
Section~\ref{subsec:width:params}. For $\calV=[n]$, let $\Gamma_{n|\calE}$
denote the set
of all $\calE$-polymatroids on $\calV$.
\label{defn:calE-poly}
\edefn

The following definition is a straightforward generalization of $\subfaqw$ from~\eqref{eqn:subfaqw},
where we replace $\Gamma_n$ by the relaxed polymatroids $\Gamma_{n|\calE}$.

\bdefn[\#-submodular $\faq$-width]
Given an $\faq$ query~\eqref{eqn:faq} whose hypergraph is
$\calH = (\calV, \calE = \calE_{\not\infty} \cup \calE_{\infty})$, its
{\em \#-submodular $\faq$-width}, denoted by $\ssubfaqw(Q)$, is defined by
\begin{align}
\ssubfaqw(Q) &:=
\max_{h \in \ed_{\not\infty} \cap \Gamma_{n|\calE_{\not\infty}}}
\min_{(T,\chi) \in \td_F}
\max_{t\in V(T)} h(\chi(t)).
\label{eqn:ssubfaqw}
\end{align}
When there are no free variables, i.e., $F=\emptyset$, we define $\ssubw(Q) := \ssubfaqw(Q)$, to mirror the case when $\faqw(Q) = \fhtw(Q)$.
\edefn

Under the above new width parameter,
we can now maintain condition~\eqref{eqn:desired} allowing us to solve $\faq$ queries over any semiring:

\bthm
Any $\faq$ query $Q$ of the form~\eqref{eqn:faq} on any semiring can be
answered in time
$\tilde O(N^{\ssubfaqw(Q)}+|Q|)$.
\label{thm:subfaqw:general}
\ethm

The proof of Theorem~\ref{thm:subfaqw:general} involves
an appropriate adaptation of $\panda$ called $\spanda$, to be described below.
Appendix~\ref{app:panda} presents an overview of the original $\panda$ algorithm. Readers unfamiliar
with $\panda$ are recommended to read that appendix first before reading the following proof.

\bp
The $\panda$ algorithm~\cite{panda-pods,panda-arxiv} takes as input a disjunctive
Datalog query of
the form
\begin{equation}
	\bigvee_{B \in \calB}G_B(\bm x_B) \leftarrow \bigwedge_{K\in\calE} R_K(\bm x_K).
	\label{eqn:disjunctive:datalog:query}
\end{equation}
The above query has an input relation $R_K$ for each hyperedge $K
\in \calE$ in the query's hypergraph $\calH=(\calV, \calE)$.
The output to the above query is a collection of tables $G_B$, one for each
``goal'' (or ``target'') $B$ in the
collection of goals $\calB$.
The output tables $(G_B)_{B\in\calB}$ must satisfy the logical implication
in~\eqref{eqn:disjunctive:datalog:query}:
In particular, for each tuple $\bm x_\calV$ that satisfies the conjunction
$\bigwedge_{K\in\calE} R_K(\bm x_K)$,
the disjunction $\bigvee_{B \in \calB}G_B(\bm x_B)$ must hold.
Query~\eqref{eqn:simple:ddlog} is an example of
\eqref{eqn:disjunctive:datalog:query}.
A disjunctive Datalog query~\eqref{eqn:disjunctive:datalog:query} can have many
valid outputs.
The $\panda$ algorithm computes one such output in time
$\tilde O(N^{e})$, where
\begin{equation}
	e =
	\max_{h\in\ed_{\not\infty}\cap\Gamma_n}\min_{B\in\calB} h(B).
	\label{eq:panda-runtime}
\end{equation}
(Recall notation from Section~\ref{subsec:insideout:panda}.)

In what follows, we describe a variant of $\panda$, called $\spanda$, that takes
a disjunctive Datalog query~\eqref{eqn:disjunctive:datalog:query}, and computes
the following:
\bi
\item A collection of tables $(G_B)_{B\in\calB}$ that form a valid
output to query~\eqref{eqn:disjunctive:datalog:query}, i.e. that satisfy the
logical implication in~\eqref{eqn:disjunctive:datalog:query}.
\item Moreover, associated with each output table $G_B$, $\spanda$ additionally
computes a collection of ``filter'' tables $\left(F_K^{(B)}\right)_{K \in
	\calE}$, one table $F_K^{(B)}$ for each hyperedge $K \in \calE$ in the input
hypergraph $\calH$. The output tables $G_B$ along with the associated filters
$\left(F_K^{(B)}\right)_{K \in \calE}$ satisfy the following condition:
For each tuple $\bm x_\calV$ that satisfies the conjunction
$\bigwedge_{K\in\calE} R_K(\bm x_K)$, there is {\em exactly one} target
$B\in\calB$
where the conjunction $\displaystyle{\bigwedge_{K \in
		\calE}
	F_K^{(B)}(\bm x_K)}$ holds,
and for that target $B$, $G_B(\bm x_B)$ holds as well.
In particular, the following equivalences hold:
\begin{eqnarray}
	\mybigxor{B \in \calB}\left[
	\displaystyle{\bigwedge_{K
			\in
			\calE}
		F_K^{(B)}(\bm x_K)}\right]
	\quad&\equiv&\quad\bigwedge_{K\in\calE} R_K(\bm x_K),\label{eq:spanda:invar1}\\
	\left[\displaystyle{\bigwedge_{K
			\in
			\calE}
		F_K^{(B)}(\bm x_K)}\right]
	&\equiv&
	\left[G_B(\bm x_B) \wedge
	\displaystyle{\bigwedge_{K
			\in
			\calE}
		F_K^{(B)}(\bm x_K)}\right],
	\quad\forall B\in\calB,\label{eq:spanda:invar2}
\end{eqnarray}
where $\myxor$ above denotes the exclusive OR.
Equations~\eqref{eq:spanda:invar1} and \eqref{eq:spanda:invar2} together imply
\begin{eqnarray}
	\mybigxor{B \in \calB}\left[G_B(\bm x_B) \wedge
\displaystyle{\bigwedge_{K
        \in
        \calE}
    F_K^{(B)}(\bm x_K)}\right]
\quad&\equiv&\quad\bigwedge_{K\in\calE} R_K(\bm x_K),\label{eq:spanda:invar3}
\end{eqnarray}
Comparing the above to \eqref{eqn:disjunctive:datalog:query}, note that the purpose of the filters $F_K^{(B)}$ is to keep the goals $G_B(\bm x_B)$ disjoint from one another allowing us to replace $\vee$ with $\myxor$ and ultimately maintain condition~\eqref{eqn:equiv:form:general}.
(As we will see later, in $\spanda$, we start with filters $F_K^{(B)}$ that are identical to the corresponding input relations $R_K$, and we keep removing tuples from $F_K^{(B)}$ to maintain \eqref{eq:spanda:invar1} and \eqref{eq:spanda:invar2} throughout the algorithm.)
\ei
$\spanda$ computes the above output tables $(G_B)_{B\in\calB}$ and $\left(\left(F_K^{(B)}\right)_{K \in
	\calE}\right)_{B\in\calB}$ in time $\tilde O(N^{e'})$ where
\begin{equation}
	e' =
	\max_{h\in\ed_{\not\infty}\cap\Gamma_{n|\calE_{\not\infty}}}\min_{B\in\calB}
	h(B).
	\label{eq:spanda-runtime}
\end{equation}

Now we briefly explain how to tweak the $\panda$ algorithm into $\spanda$
satisfying the above characteristics. We refer the reader to Appendix~\ref{app:panda}
and~\cite{panda-pods,panda-arxiv}
for more details about $\panda$.
At a high level, the $\panda$ algorithm starts with proving an exact upperbound
on $e$ from~\eqref{eq:panda-runtime} using a sequence of proof steps, called
the {\em proof sequence}
(see Lemmas~\ref{lmm:panda:LP-form},~\ref{lmm:panda:LP-to-inequality}, and~\ref{lmm:panda:ps}).
Then $\panda$ interprets each step in the proof sequence as a relational
operator, and then uses
this sequence of relational operators as a query plan to actually compute the
query in
time $\tilde
O(N^e)$.
One of the proof steps used in $\panda$ is {\em the decomposition step}
$h(Y) \rightarrow h(X) + h(Y|X)$ for some $X\subseteq Y\subseteq \calV$.
The relational operator corresponding to this decomposition step is the
``partitioning'' operator, in which
we take an input (or intermediate) table $R_Y$ and partition it
into a small number $k = O(\log |R_Y|)$ of tables $R_Y^{(1)}, \ldots, R_Y^{(k)}$,
based on the degrees of variables in $Y$ with respect to variables in
$X\subseteq Y$.
In particular, define the degree of $Y$ w.r.t. a tuple $\bm t_X \in \pi_X R_Y$
and w.r.t. to $X$
as follows:
\begin{eqnarray}
	\deg_{R_Y}(Y|\bm t_X) &:=& \left|\left\{\bm t'_Y \in R_Y\suchthat \bm t'_X
	= \bm t_X\right\}\right|,\\
    \deg_{R_Y}(Y|X) &:=& \max_{\bm t_X \in \pi_X R_Y}\deg_{R_Y}(Y|\bm t_X).
\end{eqnarray}
In the partitioning step, we partition tuples $\bm t_X \in \pi_X R_Y$ into $k$
buckets based on $\deg_{R_Y}(Y|\bm t_X)$ and partition $R_Y$ accordingly.
Specifically, for each $j \in [k]$, we define
\begin{eqnarray}
R_X^{(j)} &:=& \left\{\bm t_X \in \pi_X R_Y \suchthat 2^{j-1} \leq \deg_{R_Y}(Y|\bm t_X) < 2^j\right\},\\
R_Y^{(j)} &:=& \left\{\bm t_Y \in R_Y \suchthat \bm t_X \in R_X^{(j)} \right\}.
\end{eqnarray}
After partitioning, $\panda$ creates $k$ independent branches of the problem,
where in the $j$-th branch, $R_Y$ is replaced by both $R_X^{(j)}$ and $R_Y^{(j)}$.
Note that for each $j \in [k]$, the following holds:
\begin{equation}
\log_2 |R_Y| \quad\geq\quad \log_2 |R_X^{(j)}| \quad+\quad \log_2 \left[\deg_{R_Y^{(j)}}(Y|X)\right] - 1
\end{equation}
The above inequality mirrors the proof step $h(Y) \rightarrow h(X) + h(Y|X)$
exemplifying the way the entire $\panda$ algorithm mirrors the proof sequence
of the bound in~\eqref{eq:panda-runtime} allowing its runtime to be bounded by~\eqref{eq:panda-runtime} (see~\cite{panda-pods,panda-arxiv} for more details).
After each partitioning step, $\panda$ continues on each one of the $k$ branches of the problem independently and ends up computing a potentially different target
$G_{B}$ for some $B\in \calB$ within each branch.

From the proof sequence construction described in~\cite{panda-pods,panda-arxiv},
we note the following:
If the constructed proof sequence that is used to prove the bound on $e$
in \eqref{eq:panda-runtime} contains a decomposition step
$h(Y) \rightarrow h(X) + h(Y|X)$,
then the proof of the bound on $e$ must have relied on some submodularity
constraint on
$h$ of the
form $h(X) + h(Y \cup Z) \leq h(Y) + h(X \cup Z)$ for some $Z\subseteq\calV$
where $Y \cap Z = \emptyset$.
In particular, such a submodularity can be broken down into the sum of two inequalities:
\begin{eqnarray}
h(Y) &=& h(X) + h(Y|X)\\
h(Y|X) &\geq& h(Y\cup Z | X \cup Z)
\end{eqnarray}
which in turn are converted into two proof steps in the proof sequence:
\begin{eqnarray}
h(Y) &\rightarrow& h(X) + h(Y|X)\label{eq:decomp-submod}\\
h(Y|X) &\rightarrow& h(Y\cup Z | X \cup Z)
\end{eqnarray}
Moreover, the above is the only place in the proof sequence construction~\cite{panda-pods,panda-arxiv} where a decomposition step~\eqref{eq:decomp-submod}
is introduced.
However, the new bound~\eqref{eq:spanda-runtime} used in $\spanda$ only
relies on submodularities $h(X) + h(Y \cup Z) \leq h(Y) + h(X \cup Z)$
where $X \subseteq K$ for some $K \in \calE$. (Recall
$\Gamma_{n|\calE_{\not\infty}}$ from
Definition~\ref{defn:calE-poly}.)
Therefore, in $\spanda$, whenever we apply a partitioning step of $R_Y$ into
$R_Y^{(1)},\ldots, R_Y^{(k)}$ based on the degrees $\deg_{R_Y}(Y|\bm t_X)$ of
$\bm t_X \in \pi_X R_Y$,
we know that there is some input relation $R_K$ with $X \subseteq K$.
Therefore we can refine the corresponding filter $F_K^{(B)}$
by semijoining it with $R_X^{(j)}=\pi_X R_Y^{(j)}$  on the $j$-th branch, i.e. by taking $F_K^{(B)}
\leftarrow F_K^{(B)} \ltimes R_X^{(j)}$.
Moreover, this update of filters $F_K^{(B)}$
maintains~\eqref{eq:spanda:invar1} and~\eqref{eq:spanda:invar2}.
(Initially, we start with filters $F_K^{(B)}$ that are identical to the corresponding input
relations $R_K$, which trivially satisfy both~\eqref{eq:spanda:invar1} and~\eqref{eq:spanda:invar2}.)

Now that we have described the $\spanda$ algorithm satisfying the above
properties,
we explain how to use it as a blackbox to solve an $\faq$ query $Q$
of the form~\eqref{eqn:faq} in time
$\tilde O(N^{\ssubfaqw(Q)}+|Q|)$.
Following the same notation as in the proof of
Theorem~\ref{thm:subfaqw:boolean},
let $\calM$ be the collection of {\em all} maps $\beta : \td^\ell_F \to
2^\calV$
such that $\beta(T,\chi) = \chi(t)$ for some $t \in V(T)$; in other words,
$\beta$ selects
one bag $\chi(t)$ out of each tree decomposition $(T,\chi)$.
Let $\bB$ be the collection of images of all $\beta \in \calM$, i.e.
\begin{equation}
    \bB = \{ \calB \suchthat \calB = \image(\beta)
    \text{ for some } \beta \in \calM \}.
    \label{eqn:defn:bB}
\end{equation}
For each $\calB \in \bB$, we use $\spanda$ to solve the following rule (i.e. to produce relations $G_B$ and $F_K^{(B)}$ that satisfy the equivalence):
\begin{equation}
	\mybigxor{B\in\calB}\left[G_{B}
	\wedge
	\displaystyle{\bigwedge_{K
			\in
			\calE}
		F_K^{(B)}}\right]
	\quad\equiv\quad\bigwedge_{K\in\calE} R_K.
\end{equation}
The solutions collectively satisfy the following:
\[
\bigwedge_{\calB\in\bB}\mybigxor{B\in\calB}\left[G_{B}
\wedge
\displaystyle{\bigwedge_{K
        \in
        \calE}
    F_K^{(B)}}\right]
\quad\equiv\quad\bigwedge_{K\in\calE} R_K.
\]
Let $M=|\bB|$ and suppose $\bB =\{\calB_1, \calB_2, \ldots, \calB_M\}$.
By distributing the conjunction $\bigwedge_{\calB\in\bB}$ over $\myxor$, we get
\[\mybigxor{(B_1, \ldots, B_M)\in\prod_{i=1}^M\calB_i}\quad
\bigwedge_{j\in[M]}\left[G_{B_j}
\wedge
\displaystyle{\bigwedge_{K
        \in
        \calE}
    F_K^{(B_j)}}\right]
\quad\equiv\quad\bigwedge_{K\in\calE} R_K.\]
Using the same diagonalization argument from~\cite{panda-pods,panda-arxiv}, we can prove
the following claim:
\begin{claim}
For every $(B_1, \ldots, B_M)\in\prod_{i=1}^M\calB_i$, there must
exist a tree decomposition $(\bar T, \bar \chi)\in\td^\ell_F$ such that for every $t\in V(\bar T)$, $\bar\chi(t) = B_j$ for some $j \in [M]$.
\label{clm:diag}
\end{claim}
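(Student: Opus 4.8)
The plan is to prove Claim~\ref{clm:diag} by contradiction, via exactly the diagonalization argument behind the minimax lemma of~\cite{panda-arxiv}. Suppose the claim fails for some tuple $(B_1,\ldots,B_M)\in\prod_{i=1}^M\calB_i$. Unpacking the negation: for \emph{every} relaxed tree decomposition $(T,\chi)\in\td^\ell_F$ there is at least one node $t\in V(T)$ whose bag satisfies $\chi(t)\notin\{B_1,\ldots,B_M\}$.

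First I would turn this pointwise statement into a single selector map. For each $(T,\chi)\in\td^\ell_F$, choose a node $t(T,\chi)\in V(T)$ with $\chi(t(T,\chi))\notin\{B_1,\ldots,B_M\}$, which exists by the contradiction hypothesis, and define $\beta(T,\chi):=\chi(t(T,\chi))$. By construction $\beta(T,\chi)$ is a bag of $(T,\chi)$, so $\beta\in\calM$; there is no set-theoretic subtlety in this choice, since only the finitely many distinct bag collections matter and $2^\calV$ is finite.

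Next I would invoke the definition of $\bB$ in~\eqref{eqn:defn:bB}: since $\beta\in\calM$, its image $\image(\beta)$ lies in $\bB$, hence $\image(\beta)=\calB_i$ for some $i\in[M]$. On one hand $B_i\in\calB_i=\image(\beta)$, so $B_i=\beta(T,\chi)$ for some $(T,\chi)\in\td^\ell_F$. On the other hand $\beta$ was built so that $\beta(T,\chi)\notin\{B_1,\ldots,B_M\}$ for every $(T,\chi)$, which forces $B_i\notin\{B_1,\ldots,B_M\}$ — contradicting $B_i\in\{B_1,\ldots,B_M\}$. This contradiction proves the claim.

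The argument is short and purely combinatorial: no structural property of relaxed tree decompositions (running intersection, coverage of skeleton or ligament edges) is used — only that $\calM$ ranges over \emph{all} bag-selecting maps on $\td^\ell_F$ and that $\bB$ collects \emph{all} of their images, which is precisely why the same proof works verbatim for $\td_F$. Accordingly, the one point to get right is the logical bookkeeping that converts ``every TD has a bag outside $\{B_1,\ldots,B_M\}$'' into ``some $\beta\in\calM$ takes all its values outside $\{B_1,\ldots,B_M\}$''; I do not expect any genuine obstacle beyond phrasing this step cleanly.
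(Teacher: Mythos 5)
Your proposal is correct and matches the paper's own argument essentially verbatim: both proceed by contradiction, package the hypothesis into a bag-selecting map $\bar\beta\in\calM$, use the definition of $\bB$ in~\eqref{eqn:defn:bB} to identify $\image(\bar\beta)$ with some $\calB_j$, and derive the contradiction from $B_j\in\calB_j$. Your added remarks on the choice of selector and on the argument working identically for $\td_F$ are fine but not substantively different from the paper's proof.
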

Assuming Claim~\ref{clm:diag} is correct, and thanks to~\eqref{eq:spanda:invar2}, we can rewrite the conjunction as
\begin{equation}\bigwedge_{j\in[M]}\left[G_{B_j}
\wedge
\displaystyle{\bigwedge_{K
        \in
        \calE}
    F_K^{(B_j)}}\right]
\quad\equiv\quad
\left[\bigwedge_{t \in V(\bar T)}G_{\bar\chi(t)}\right]
\wedge
\left[\bigwedge_{j\in[M]}\bigwedge_{K \in \calE} F_K^{(B_j)}\right].
\label{eq:drop-Gs}
\end{equation}
The right-hand side of~\eqref{eq:drop-Gs} is an $\faq$ query.
We solve it by running $\InsideOut$ over the tree
decomposition $(\bar T, \bar\chi)$.
We repeat the above for every $(B_1, \ldots, B_M)\in\prod_{i=1}^M\calB_i$.
Afterwards, because we have an {\em
	exclusive}
OR over $(B_1, \ldots, B_M)\in\prod_{i=1}^M\calB_i$, we can simply sum up corresponding query results.

From~\eqref{eq:spanda-runtime}, the total runtime is $\tilde O(N^w + |Q|)$,
where
\begin{align*}
	w &=
	\max_{\beta \in \calM}
	\max_{h \in \ed_{\not\infty} \cap \Gamma_{n|\calE_{\not\infty}}}
	\min_{(T,\chi) \in \td_F}
	h(\beta(T,\chi)) \\
	&=
	\max_{h \in \ed_{\not\infty} \cap \Gamma_{n|\calE_{\not\infty}}}
	\max_{\beta \in \calM}
	\min_{(T,\chi) \in \td_F}
	h(\beta(T,\chi)) \\
	&=
	\max_{h \in \ed_{\not\infty} \cap \Gamma_{n|\calE_{\not\infty}}}
	\min_{(T,\chi) \in \td_F}
	\max_{t \in V(T)}
	h(\chi(t))
	= \ssubfaqw(Q).
\end{align*}
Finally we include the proof of Claim~\ref{clm:diag} for completeness, following the corresponding proof in~\cite{panda-pods,panda-arxiv}.
Consider a fixed $(B_1, \ldots, B_M)\in\prod_{i=1}^M\calB_i$.
Assume to the contrary that for every tree decomposition $(T, \chi)\in\td^\ell_F$, there is some bag $\bar\beta(T, \chi) := \chi(t)$ for some $t \in V(T)$ such that
$\chi(t) \notin \{B_1, \ldots, B_M\}$. By definition of $\bB$, $\image(\bar\beta)=\calB_j$ for some $j \in [M]$. Therefore, $B_j = \bar\beta(T, \chi)$ for some $(T, \chi)\in\td^\ell_F$.
But this contradicts the claim that for every tree decomposition $(T, \chi)\in\td^\ell_F$, $\bar\beta(T, \chi)\notin\{B_1, \ldots, B_M\}$.
\ep

The following proposition shows
that while $\ssubfaqw(Q)$ can be larger than $\subfaqw(Q)$, it is not larger
than
$\faqw(Q)$ and can be unboundedly smaller for classes of queries.

\bprop[Connecting $\ssubfaqw$ to $\subfaqw$ and $\faqw$]~
\be
\item[(a)] For any $\faq$ query $Q$, the following holds:
\begin{align}
    \subfaqw(Q) \leq \ssubfaqw(Q) \leq \faqw(Q).
    \label{eqn:gaps}
\end{align}
In particular, when $Q$ has no free variables, we have
\begin{align}
    \subw(Q) \leq \ssubw(Q) \leq \fhtw(Q).
    \label{eqn:gaps:no:free}
\end{align}
\item[(b)] Furthermore, there are classes of queries $Q$ for which the gap
between
$\ssubfaqw(Q)$ and $\faqw(Q)$ is unbounded,
and so is the gap between $\ssubw(Q)$ and $\fhtw(Q)$.
\ee
\label{prop:ssubw-fhtw}
\eprop

\bp
First we prove part (a). The first inequality  in~\eqref{eqn:gaps}
follows directly from the definitions of $\ssubfaqw$ and $\subfaqw$ along with
the
fact that $\Gamma_n \subseteq \Gamma_{n|\calE_{\not\infty}}$.
To prove the second inequality in~\eqref{eqn:gaps}, we use the
following variant of the {\em
	Modularization Lemma} from~\cite{panda-arxiv}:
\begin{claim}[Variant of the Modularization Lemma~\cite{panda-arxiv}]
	{\em Given a hypergraph $\calH=(\calV=[n], \calE)$ and a set $B\subseteq \calV$, we
		have
		\begin{equation}
			\max_{h \in \ed \cap \Gamma_{n|\calE}} h(B)=
			\max_{h \in \ed \cap \Mod_n} h(B),
			\label{eq:modularization}
		\end{equation}
		where $\ed$ is given by \eqref{eqn:ed} and $\Mod_n$ denotes the set of all
		modular functions $h:2^\calV \rightarrow
		\R_+$. (A function $h:2^\calV \rightarrow
		\R_+$ is modular if $h(X) = \sum_{i \in X} h(i), \forall X \subseteq \calV$.)
	}
	\label{clm:modularization}
\end{claim}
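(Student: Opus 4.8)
The plan is to show that both sides of \eqref{eq:modularization} equal the fractional edge cover number $\rho^*_\calE(B)$. One inequality is free: a modular function satisfies condition (iii) of Definition~\ref{defn:calE-poly} with equality for \emph{every} pair, hence $\Mod_n \subseteq \Gamma_{n|\calE}$ and $\max_{h \in \ed \cap \Mod_n} h(B) \le \max_{h \in \ed \cap \Gamma_{n|\calE}} h(B)$. Moreover, the LP-duality argument behind Remark~\ref{rmk:alternate-fhtw} in fact produces a \emph{modular} optimizer (the dual of the fractional edge cover LP of $B$, read as a set function supported on $B$), so $\max_{h \in \ed \cap \Mod_n} h(B) = \rho^*_\calE(B)$ as well; if some $v \in B$ lies in no edge of $\calE$ both sides are $+\infty$, so I may assume $B \subseteq \bigcup_{S \in \calE} S$. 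Everything thus reduces to the bound $\max_{h \in \ed \cap \Gamma_{n|\calE}} h(B) \le \rho^*_\calE(B)$.

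To prove this I would mimic the modularization construction used for ordinary polymatroids. Fix $h \in \ed \cap \Gamma_{n|\calE}$, order the vertices as $v_1, \dots, v_n$ so that $B = \{v_1, \dots, v_{|B|}\}$, write $h(v \mid Z) := h(Z \cup \{v\}) - h(Z)$, and define a modular $g$ by $g(\{v_j\}) := h(v_j \mid \{v_1, \dots, v_{j-1}\})$, which is $\ge 0$ by monotonicity of $h$. Telescoping over $j \le |B|$ and using $h(\emptyset) = 0$ gives $g(B) = h(B)$. It then suffices to show $g \in \ed$, i.e. $g(S) \le h(S) \le 1$ for every $S \in \calE$; granting this, $h(B) = g(B) \le \max_{g \in \ed \cap \Mod_n} g(B) = \rho^*_\calE(B)$, which closes the chain of inequalities.

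The inequality $g(S) \le h(S)$ is the crux. I would expand $h(S)$ as the telescoping sum $\sum_{j\,:\,v_j \in S} h(v_j \mid Z_j)$ with $Z_j := S \cap \{v_1, \dots, v_{j-1}\}$, and compare it term-by-term with $g(S) = \sum_{j\,:\,v_j \in S} h(v_j \mid \{v_1, \dots, v_{j-1}\})$. For each $j$ with $v_j \in S$ this needs the ``diminishing marginal returns'' inequality $h(v_j \mid Z_j) \ge h(v_j \mid \{v_1, \dots, v_{j-1}\})$, which is exactly condition (iii) of Definition~\ref{defn:calE-poly} applied to $X = Z_j \cup \{v_j\}$ and $Y = \{v_1, \dots, v_{j-1}\}$: then $X \cup Y = \{v_1, \dots, v_{j-1}\} \cup \{v_j\}$, $X \cap Y = Z_j$, and $Z_j \subseteq S \in \calE$ makes the pair admissible for an $\calE$-polymatroid — crucially, $\{v_1, \dots, v_{j-1}\}$ itself need not be contained in any edge. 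Summing the term-wise inequalities over $j$ with $v_j \in S$ yields $g(S) \le h(S)$. The whole delicacy lies precisely here: the restricted submodularity of $\calE$-polymatroids is just strong enough to run the greedy/chain argument \emph{inside a single edge} $S$, which is all the construction ever invokes; everything else is routine telescoping and LP duality.
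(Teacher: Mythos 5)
Your proof is correct and takes essentially the same route as the paper: order the vertices with $B$ first, define the modular function from the telescoping marginals $h(v_j \mid \{v_1,\dots,v_{j-1}\})$ (the paper's $\bar h$), and verify edge-domination by invoking the restricted submodularity of $\calE$-polymatroids only with intersections $Z_j \subseteq S$, which is exactly the paper's induction on $|F|$ unrolled into a term-by-term comparison. The detour through $\rho^*_\calE(B)$ and LP duality is superfluous---your chain $h(B)=g(B)\le \max_{g\in \ed\cap \Mod_n} g(B)$ already closes the argument---but harmless.
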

\bp[Proof of Claim~\ref{clm:modularization}]
Obviously, the LHS of \eqref{eq:modularization} is lowerbounded by the RHS.
Next, we prove LHS $\leq$ RHS.
W.L.O.G. we assume $B = [k]$ for some $k \in [n]$.
Let $h^* = \argmax_{h \in \ed \cap \Gamma_{n|\calE}} h(B)$.
Define a function $\bar h:2^\calV\rightarrow \R_+$ as follows:
\[\bar h(F) = \sum_{i\in F} (h^*([i])-h^*([i-1])).\]
Obviously $\bar h \in \Mod_n$ and $\bar h(B) = h^*(B)$.
Next, we prove $\bar h \in \ed$ by proving that
for every $F\subseteq [n]$ where $F\subseteq E$ for some
$E\in\calE$, the following holds:
$\bar h(F) \leq h^*(F)$.

The proof is by  induction on $|F|$.
The base case when $|F|=0$ is trivial.
For the inductive step, consider some $F$ where $F \subseteq E$ for some $E \in
\calE$.
Let $j$ be the maximum integer in $F$, then
by noting that $|F\cap [j-1]|< |F|$, we have
\begin{eqnarray*}
	\bar h(F) &=& h^*([j]) - h^*([j-1]) + \sum_{i \in F-\{j\}}
	(h^*([i])-h^*([i-1]))\\
	&=& h^*([j]) - h^*([j-1]) + \bar h(F \cap [j-1])\\
	&=& h^*(F \cup [j-1]) - h^*([j-1]) + \bar h(F \cap [j-1])\\
	&\leq & h^*(F \cup [j-1]) - h^*([j-1]) +
	h^*(F \cap [j-1])
	\leq  h^*(F).
\end{eqnarray*}
The first inequality above is by induction hypothesis,
and the second inequality follows from the fact that $h^*$ is
a $\calE$-polymatroid (recall Definition~\ref{defn:calE-poly}).
Both steps rely on the fact that $F\cap[j-1] \subseteq E$ for some $E\in\calE$.
Consequently, $\bar h \in \ed \cap \Mod_n$.
Since $\bar h(B) = h^*(B)$, this proves Claim~\ref{clm:modularization}.
\ep
Now we prove the second inequality in~\eqref{eqn:gaps}:
\begin{eqnarray*}
	\ssubfaqw(Q) &=&
	\max_{h \in \ed_{\not\infty} \cap \Gamma_{n|\calE_{\not\infty}}}
	\min_{(T,\chi) \in \td_F}
	\max_{t\in V(T)} h(\chi(t))\\
	&\leq&
	\min_{(T,\chi) \in \td_F}
	\max_{h \in \ed_{\not\infty} \cap \Gamma_{n|\calE_{\not\infty}}}
	\max_{t\in V(T)} h(\chi(t))\\
	&=& \min_{(T,\chi) \in \td_F}
	\max_{t\in V(T)}\max_{h \in \ed_{\not\infty} \cap \Gamma_{n|\calE_{\not\infty}}}
	h(\chi(t))\\
	\text{(Claim~\ref{clm:modularization})}&=& \min_{(T,\chi) \in \td_F}
	\max_{t\in V(T)}\max_{h \in \ed_{\not\infty} \cap \Mod_n}
	h(\chi(t))\\
	\text{(Strong duality)}&=& \min_{(T,\chi) \in \td_F}
	\max_{t\in V(T)}\rho^*_{\calE_{\not\infty}}(\chi(t))
	= \faqw(Q).
\end{eqnarray*}
The fact that $\max_{h \in \ed_{\not\infty} \cap \Mod_n}
h(\chi(t)) = \rho^*_{\calE_{\not\infty}}(\chi(t))$ follows
from the two sides being dual linear programs. (Recall the definition of
$\rho^*$
from Section~\ref{subsec:width:params}.)

Now, we prove part (b) of Proposition~\ref{prop:ssubw-fhtw}.
In ~\cite{panda-arxiv}, we constructed a class of
graphs/queries
where the gap between $\fhtw$ and $\subw$ is unbounded.
We will re-use the same construction here and prove that the upperbound on
$\subw$ that we proved in~\cite{panda-arxiv} is also an upperbound on $\ssubw$.
The upperbound proof is going to be different from~\cite{panda-arxiv} though
since here we can only use $\calE$-polymatroid properties to prove the bound
(recall Definition~\ref{defn:calE-poly}).

Given integers $m$ and $k$, consider a graph $\calH=(\calV, \calE)$ which is
an
``$m$-fold $2k$-cycle'': The vertex set $\calV := I_1\cup\ldots\cup
I_{2k}$ is a disjoint union of $2k$-sets of vertices.
Each set $I_j$ has $m$ vertices in it, i.e., $I_j :=\{I_j^1, I_j^2, \ldots,
I_j^m\}$.
There is no edge between any two vertices within the set $I_j$ for every
$j \in [2k]$, i.e., $I_j$ is an independent set.
The edge set $\calE$ of the hypergraph is the union of $2k$ complete
bipartite graphs $K_{m,m}$:
\[ \calE := (I_1 \times I_2) \cup (I_2 \times I_3) \cup \cdots \cup
(I_{2k-1} \times I_{2k}) \cup (I_{2k} \times I_1).
\]
Finally consider an $\faq$ query $Q$ that has a finite-sized input factor $R_K$
for every $K \in \calE$, i.e., $\calE_{\not\infty} = \calE$ and
$\calE_{\infty}=\emptyset$ (recall notation from
Section~\ref{subsec:insideout:panda}).
Assuming $Q$ has no free variables, then $\faqw(Q)=\fhtw(Q)$
and $\ssubfaqw(Q) = \ssubw(Q)$.

We proved in~\cite{panda-arxiv} that $\fhtw(Q) \geq 2m$.
Next we prove that $\ssubw(Q)$ $\leq m(2-1/k)$.
Let $h$ be any function in $\ed_{\not\infty} \cap
\Gamma_{n|\calE_{\not\infty}}$.
We recognize two cases:
\bi
\item Case 1: $h(I_i) \leq \theta$ for some $i \in [2k]$. WLOG assume $h(I_1)
\leq
\theta$. Consider the TD

\begin{tikzpicture}[color=green!50!black, scale = .85, every
node/.style={transform shape, inner sep=1.5pt}]
\node[draw,ellipse] (123) {$I_1 \cup I_2 \cup I_3$};
\node[draw,ellipse, right = 0.25of 123] (134) {$I_1 \cup I_3 \cup I_4$};
\node[draw,ellipse, right = 0.5in of 134] (1k) {$I_1 \cup I_{2k-1} \cup
	I_{2k}$};
\draw (123) -- (134);
\draw[dotted] (134) -- (1k);
\end{tikzpicture}

For bag $B=I_1 \cup I_i \cup I_{i+1}$, using $\calE_{\not\infty}$-polymatroid
properties
(Definition~\ref{defn:calE-poly}), we
have
\begin{eqnarray*}
	h(B) &\leq& h(I_1)+h(I_i \cup I_{i+1})\\
	&\leq& h(I_1)+\sum_{j = 1}^m h\left(\left\{I_i^j, I_{i+1}^j\right\}\right)
	\leq \theta+m.
\end{eqnarray*}
\item Case 2: $h(I_i) > \theta$ for all $i \in [2k]$. Consider the TD

\begin{tikzpicture}[color=blue, scale = .85, every node/.style={transform
	shape,
	inner sep=1.5pt}]
\node[draw,ellipse] (B1) {$I_1 \cup I_2 \cup \cdots \cup I_{k+1}$};
\node[draw,ellipse, right = 0.15in of B1] (B2) {$I_{k+1} \cup I_{k+2} \cup
	\cdots \cup
	I_{2k} \cup I_1$};
\draw (B1) -- (B2);
\node[below = 0.21 of B1] {Bag $B_1$};
\node[below = 0.21 of B2] {Bag $B_2$};
\end{tikzpicture}

For convenience, given any vertex $I_i^j$, define the vertex set $\calV_i^j$ as
follows:
\[\calV_i^j := I_1\cup I_2\cup\ldots\cup I_{i-1} \cup\left\{I_i^1, I_i^2,
\ldots,
I_i^{j-1}\right\}.\]
From $\calE_{\not\infty}$-polymatroid properties, we have
\begin{eqnarray*}
	h(B_1) &=& h(I_1 \cup I_2) + \sum_{i=3}^{k+1}\sum_{j=1}^m
	h\left(\left\{I_i^j\right\} \cup \calV_i^j \;|\; \calV_i^j\right)\\
	&\leq& h(I_1 \cup I_2) + \sum_{i=3}^{k+1}\sum_{j=1}^m
	h\left(\left\{I_i^j, I_{i-1}^j\right\} \;|\; \left\{I_{i-1}^j\right\}\right)\\
	&=& h(I_1 \cup I_2) + \sum_{i=3}^{k+1}\sum_{j=1}^m
	h\left(\left\{I_i^j, I_{i-1}^j\right\}\right)
	- \sum_{i=3}^{k+1}\sum_{j=1}^m
	h\left(\left\{I_{i-1}^j\right\}\right)\\
	&\leq& h(I_1 \cup I_2) + \sum_{i=3}^{k+1}\sum_{j=1}^m
	h\left(\left\{I_i^j, I_{i-1}^j\right\}\right)
	- \sum_{i=3}^{k+1}h(I_{i-1})\\
	&\leq& \sum_{i=2}^{k+1}\sum_{j=1}^m
	h\left(\left\{I_i^j, I_{i-1}^j\right\}\right)
	- \sum_{i=3}^{k+1}h(I_{i-1})
	\leq km-(k-1)\theta.
\end{eqnarray*}
\ei
In a symmetric way, we can also show that $h(B_2)\leq km-(k-1)\theta$.
By setting $\theta = (1-1/k)m$, we prove that
$\ssubw(Q) \leq m(2-1/k)$.
Since $\fhtw(Q) \geq 2m$, this proves part (b).
\ep

\begin{example}
Consider again the count query $Q$ in~\eqref{eqn:4:cycle}, which we showed
earlier how to compute in time $\tilde O(N^{1.5})$.
Since $Q$ has no free variables, $\faqw(Q) = \fhtw(Q) = 2$ and $\ssubfaqw(Q) =
\ssubw(Q)$.
In the proof of Proposition~\ref{prop:ssubw-fhtw},
we show that $\ssubw(Q) \leq 1.5$.
Therefore, the $\spanda$ algorithm from the proof of
Theorem~\ref{thm:subfaqw:general}
can compute~\eqref{eqn:4:cycle} in time $\tilde O(N^{1.5})$.
In fact, the $\tilde O(N^{1.5})$ algorithm we described earlier
for~\eqref{eqn:4:cycle} is just
a specialization of $\spanda$.
The proof of Proposition~\ref{prop:ssubw-fhtw} offers a family of similar
examples.\qed
\end{example}

\subsubsection{$\faqai$ over an arbitrary semiring}

Finally, we put everything together to solve the $\faqai$ problem. The only
(very natural) change is to replace the tree decompositions by their relaxed version, and
the technical details flow through.

\bdefn
Given an $\faqai$ query~\eqref{eqn:our:query} whose hypergraph is
$\calH = (\calV, \calE = \calE_s \cup \calE_\ell = \calE_{\not\infty} \cup
\calE_{\infty})$, its
{\em relaxed \#-submodular $\faq$-width}, denoted
by $\ssubfaqw_\ell(Q)$, is \mbox{defined by}
\begin{align}
\ssubfaqw_\ell(Q) &:=
\max_{h \in \ed_{\not\infty} \cap \Gamma_{n|\calE_{\not\infty}}}
\min_{(T,\chi) \in \td_F^\ell}
\max_{t\in V(T)} h(\chi(t)).
\label{eqn:relaxed:ssubfaqw:general}
\end{align}
When $F = \emptyset$, we define $\ssubw_\ell(Q) :=
\ssubfaqw_\ell(Q)$.
\edefn

\bthm
Any $\faqai$ query $Q$ of the form~\eqref{eqn:our:query} on any semiring can be
computed in time
$\tilde O(N^{\ssubfaqw_\ell(Q)}+|Q|)$.
\label{thm:relaxed:subfaqw:general}
\ethm

The proof of the above theorem is very similar to that of Theorem~\ref{thm:subfaqw:general}.
The key difference is that instead of running $\InsideOut$ on individual $\faq$
queries obtained after applying $\spanda$,
we now run the $\InsideOut$ variant from Theorem~\ref{thm:relaxed:faqw}.
The proof is thus omitted.

\begin{example}
Consider the following count query (which is similar to the counting version of query $Q_2$ from Example~\ref{ex:intro-query3}):
    \begin{align}
Q() = \sum_{a, b, c, d} R(a,b) \cdot S(b,c) \cdot T(c,d) \cdot \bm 1_{a +b + c
+ d\leq 0}.
\end{align}
Let $N := \max\{|R|, |S|, |T|\}$.
For the above query $\faqw(Q) = \faqw_\ell(Q)$ $= \ssubfaqw(Q) = 2$.
Any of the previously known algorithms, including the one from
Theorem~\ref{thm:relaxed:faqw} {\em and} the one from
Theorem~\ref{thm:subfaqw:general}, would need time $O(N^2)$ to compute $Q$.
We show below that $\ssubfaqw_\ell(Q) \leq 1.5$.
As an example of Theorem~\ref{thm:relaxed:subfaqw:general}, we also show how to
compute the above query in $\tilde O(N^{1.5})$.
(Using the same method, we can also solve the counting version of $Q_2$
from Example~\ref{ex:intro-query3} in the same time.)

First we prove that for the above query, $\ssubfaqw_\ell(Q) \leq 1.5$.
Here $F = \emptyset$.
We will use two {\em relaxed} tree decompositions in $\td_F^\ell$:
The first $(T_1, \chi_1)$ has two bags $\{a, b, c\}$ and $\{c, d\}$.
The second $(T_2, \chi_2)$ has two bags $\{a, b\}$ and $\{b, c, d\}$.
(Both are relaxed TDs because the ligament edge $\bm 1_{a +b + c
	+ d\leq 0}$ is not
contained in any bag; recall Definition~\ref{defn:relaxed:td}.)
Following~\eqref{eqn:relaxed:ssubfaqw:general}, for each $h \in
\ed_{\not\infty} \cap \Gamma_{n|\calE_{\not\infty}}$, we will pick one TD or
the other.
In particular, given some $h \in
\ed_{\not\infty} \cap \Gamma_{n|\calE_{\not\infty}}$:
\bi
\item If $h(b) \geq 1/2$, then $h(bc|b) \leq 1/2$. We pick $(T_1, \chi_1)$.
From $\calE_{\not\infty}$-polymatroid properties (Def.~\ref{defn:calE-poly}), we have
\begin{eqnarray*}
	h(abc) &=& h(ab) + h(abc|ab) \leq h(ab) + h(bc|b) \leq 1.5,\\
	h(cd) &\leq& 1.
\end{eqnarray*}
\item If $h(b) < 1/2$, we pick $(T_2, \chi_2)$.
\begin{eqnarray*}
	h(ab) &\leq& 1,\\
	h(bcd) &=& h(b) + h(bcd|b) \leq h(b) + h(cd) \leq 1.5.
\end{eqnarray*}
\ei
This proves that $\ssubfaqw_\ell(Q) \leq 1.5$.

Finally, as a special case of $\spanda$, we explain how to solve the above
query in time $\tilde O(N^{1.5})$ (where recall $N:=\max\{|R|, |S|, |T|\}$).
Let
\begin{eqnarray*}
	S^\ell &:=&\left\{(b, c) \in S \suchthat |\{c' \suchthat (b, c') \in S\}|\leq
	\sqrt{N}\right\},\\
	S^h &:=& S \setminus S^\ell.
\end{eqnarray*}
Now we can write
\begin{eqnarray*}
	Q() &=& \sum_{a, b, c, d} R(a,b)  \left(S^\ell(b,c)+S^h(b, c)\right)
	T(c,d) \cdot
	\bm 1_{a +b + c
		+ d\leq 0}\\
	&=& Q^\ell() + Q^h(), \text{where}\\
	Q^\ell() &:=& \sum_{a, b, c, d} \underbrace{R(a,b) \cdot S^\ell(b,c)}_{U(a, b, c)} \cdot
	T(c,d) \cdot
	\bm 1_{a +b + c
		+ d\leq 0},\\
	Q^h() &:=& \sum_{a, b, c, d} R(a,b) \cdot \underbrace{S^h(b,c) \cdot T(c,d)}_{W(b, c, d)} \cdot
	\bm 1_{a +b + c
		+ d\leq 0}.\\
\end{eqnarray*}
Both $U$ and $W$ above have sizes $\leq N^{1.5}$.
Using the algorithm from the proof of Theorem~\ref{thm:relaxed:faqw},
$Q^\ell$ can be answered in time $O(N^{1.5}\log N)$ using the relaxed TD $(T_1,
\chi_1)$,
while $Q^h$ can be answered in the same time using $(T_2, \chi_2)$.
\qed
\label{ex:4:path:ineq:count}
\end{example}

\section{Applications to relational Machine Learning}
\label{sec:applications}

Our $\faqai$ formalism and solution are directly
applicable to learning a class of machine learning models, which includes
supervised models (e.g., robust regression, SVM classification), and
unsupervised models (e.g., clustering via $k$-means). In this section, we show that the core
computation of these optimization problems can be
formulated in $\faqai$ over the sum-product semiring.

\subsection{Training ML models over databases}
\label{sec:applications:intro}

A typical machine learning model is learned over a training dataset $\bm G$. We
consider the common scenario where the input data is a relational database $\bm I$,
and the training dataset $\bm G$ is the result of a feature extraction join
query $Q$ over $\bm I$~\cite{SOC:SIGMOD:16,ANNOS:PODS:2018, ANNOS:TODS:2020, KuNaPa15,
  MADlib:2012}. Each tuple $(\bm x, y) \in \bm G$ consists of a vector of
features $\bm x$ of length $n$ and a label $y$. We consider that the feature extraction
query $Q$ has the hypergraph $\calH = (\calV, \calE_s)$, where $\calE_s$
is the set of its skeleton hyperedges.

A supervised machine learning model is a function $f_{\bm\beta}(\bm x)$ with
parameters $\bm\beta$ that is used to predict the label $y$ for unlabeled
data. The parameters are obtained by minimizing the objective function:
\begin{align}
  J(\vec\beta) = \sum_{(\bm x,y) \in \bm G} {\mathcal L}
  \left(y,f_{\bm\beta}(\bm x)\right) + \lambda
  \Omega(\vec\beta), \label{eqn:generic:J}
\end{align}
where $\mathcal L(a,b)$ is a loss function, $\Omega$ is a regularizer, e.g.,
$\ell_1$ or $\ell_2$ norm, and the constant $\lambda \in (0,1)$ controls the
influence of regularization.

Previous work has shown that for polynomial loss functions, such as square loss
$\mathcal L(a, b) = (a - b)^2$, the core computation for optimizing the
objective $J(\vec\beta)$ amounts to $\faq$ evaluation~\cite{ANNOS:PODS:2018}.  In many instances, however, the loss function
is non-polynomial, either due to the structure of the loss, or the presence of
non-polynomial components embedded within the model structure (e.g., ReLU
activation function in neural nets)~\cite{murphy2013}.

\nop{
}

Examples of commonly used non-polynomial loss functions are: (1) hinge loss, used
to learn classification models like linear support vector machines
(SVM)~\cite{murphy2013}, or generalized low rank models (glrm) with boolean
principal component analysis (PCA)~\cite{glrm}; (2) Huber loss,
used to learn regression models that are robust to
outliers~\cite{murphy2013}; (3) scalene loss, used to learn quantile
regression models~\cite{glrm}; (4) epsilon insensitive loss, used to
learn SVM regression models~\cite{murphy2013}; and (5) ordinal hinge loss, used to learn ordinal regression models or ordinal PCA (another
glrm)~\cite{glrm}.

Any optimization problem with the above non-polynomial loss functions can
benefit from our evaluation algorithm for $\faqai$ by reformulating computations
in the optimization algorithm as $\faqai$ expressions over the feature extraction join query $Q$.
We next exemplify this reformulation for the following problems:
\bi
\item Learning a robust linear regression model using Huber loss, which can be solved with
gradient-descent optimization
\item Learning a linear regression model using the scalene, epsilon
insensitive, and ordinal hinge loss functions.
\item Learning a linear support vector machine
(SVM) for binary classification using hinge loss, which can be solved with
subgradient-based optimization algorithms or with a cutting-plane algorithm for
the primal formulation of linear SVM classification.
\item We also consider $k$-means unsupervised
clustering and give an $\faqai$ reformulation of the computation done in an iteration of the algorithm over the dataset $\bm G$.
\ei

The advantage of $\faqai$ reformulation is that the $\faqai$ expressions for the aforementioned optimization problems can be evaluated over relaxed tree
decompositions of the feature extraction query $Q$ and do not require the explicit materialization of its result $\bm G$. The size of and time to compute $\bm G$ is $\tilde O(|I|^{\rho^*(Q)})$~\cite{DBLP:conf/pods/NgoPRR12}. The solution to these optimization problems can be computed in time sub-linear in the size of $\bm G$, using $\InsideOut$ or $\spanda$.

\subsection{Background: Gradient-based Optimization}
\label{sec:background:gradient}

In this section, we overview gradient-based optimization algorithms for convex
and differentiable objective functions of the form~\eqref{eqn:generic:J}. A
gradient-based optimization algorithm employs the first-order gradient
information to optimize $J(\bm\beta)$. It repeatedly updates the parameters
$\bm\beta$ by some step size $\alpha$ in the direction of the gradient
$\grad J(\bm\beta)$ until convergence. To guarantee convergence, it is common to
use backtracking line search to ensure that the step size $\alpha$ is
sufficiently small to decrease the loss for each step. Each update step requires
two computations: (1) {\em Point evaluation}: Given $\vec\theta$, compute the
scalar $J(\vec\theta)$; and (2) {\em Gradient computation}: Given $\vec\theta$,
compute the vector $\grad J(\vec\theta)$.

There exist several variants of gradient descent algorithms, e.g., batch
gradient descent or stochastic gradient descent, as well as many different
algorithms to choose a valid step size~\cite{murphy2013}. For this work, we
consider the batch gradient descent (BGD) algorithm with the Armijo backtracking
line search condition, as depicted in Algorithm~\ref{algo:bgd}. A common choice
for setting the step size is a function that is inversely related to number of
iterations of the algorithm, for instance $\alpha = \frac{1}{\lambda t}$ at
iteration $t$, where $\lambda$ is the regularization parameter
from~\eqref{eqn:generic:J}~\cite{pegasos}.

\begin{algorithm}[th]
    \caption{BGD with Armijo line search.}
    \label{algo:bgd}
    $\bm\beta \gets $ a random point\;
    \While{not converged yet}{
        $\alpha \gets $ next step size\;
        $\bm d \gets \grad J(\bm\beta)$\;
        \texttt{// Line search with Armijo condition}\;
        \While{$\left(J(\bm\beta-\alpha \bm d) \geq J(\bm\beta)-\frac
            \alpha 2 \norm{\bm d}_2^2\right)$}{
            $\alpha \gets \alpha/2$\;
        }
        $\bm\beta \gets \bm\beta - \alpha \bm d$\;
    }
\end{algorithm}

\subsection{Robust linear regression with Huber loss}
\label{sec:huber:loss}

A linear regression model is a linear function
$f_{\bm\beta}(\bm x) = \bm\beta^\top \bm x = \sum_{i \in [n]} \beta_i x_i$ with
features $\bm x = (x_1 = 1, x_2, \ldots, x_n)$ and parameters
$\bm\beta = (\beta_1, \ldots, \beta_n)$. For a given feature vector $\bm x$, the
model is used to estimate the (continuous) label $y \in \R$. We learn the model
parameters by minimizing the objective $J(\bm\beta)$ with the Huber loss
function, which is defined as:
\begin{align}\label{eqn:huber:loss}
  {\mathcal L}(a, b) =
  \begin{cases}
    \frac{1}{2} (a - b)^2
    & \text{if } | a - b| \leq 1, \\
    \frac{1}{2}|a - b| - \frac{1}{2}& \text{otherwise}.
  \end{cases}
\end{align}

Huber loss is equivalent to the square loss when
$|a - b| \leq 1$ and to the absolute loss
otherwise\footnote{Without loss of generality, we use a simplified
  Huber loss. The threshold between absolute and square loss is given
  by a constant $\delta$ and the absolute loss is
  $\frac{\delta}{2}|a - b| - \frac{\delta^2}{2}$.}.
In contrast to the absolute loss, Huber loss is differentiable at all points. It is also more robust to outliers than the square loss.

To learn the parameters, we use batch gradient-descent optimization, which
repeatedly updates the parameters in the direction of the gradient
$\grad J(\bm\beta)$ until convergence. We provide details on gradient-based
optimization in Section~\ref{sec:background:gradient}.
In this
section, we focus on the core computation of the algorithm, which is the
repeated computation of the objective $J(\bm\beta)$ and its gradient
$\grad J(\bm\beta)$.

The gradient $\grad J(\bm\beta)$ is the vector of partial
derivatives with respect to parameters $(\beta_j)_{j\in [n]}$.
(Note that the derivative of $\bm 1_{\theta(x)\geq 0}$ with respect to $x$ for any function $\theta$ of $x$ is always $0$ whenever it is defined.)
The objective
function $J(\bm\beta)$ (with $\ell_2$ regularization) and its partial
derivative with respect to $\beta_j$ are:
\begin{align}
  J(\bm \beta)
  &=
    \frac{1}{2} \sum_{(\bm x, y) \in \bm G} \bigl[(y - f_{\bm\beta}(\bm x))^2 \cdot
    \bm 1_{|y - f_{\bm\beta}(\bm x)| \leq 1}
    + (|y - f_{\bm\beta}(\bm x)| - 1) \cdot
    \bm 1_{|y - f_{\bm\beta}(\bm x)| > 1}\bigr]+  \frac{\lambda}{2} \norm{\bm\beta}^2_2,  \label{eq:pointevalhuber}\\
  \frac{\partial J(\bm\beta)}{\partial \beta_j}
  &=  - \sum_{(\bm x,y)\in\bm G} \bigl[(y - f_{\bm\beta}(\bm x))\cdot x_j \cdot
    \bm 1_{|y - f_{\bm\beta}(\bm x)| \leq 1}  \label{eq:gradevalhuber}\\
  &\quad + \frac{1}{2}(x_j  \cdot  \bm 1_{y -
f_{\bm\beta}(\bm x) > 0} -  x_j  \cdot
    \bm 1_{y - f_{\bm\beta}(\bm x) < 0})
     \cdot  \bm 1_{|y - f_{\bm\beta}(\bm x)| > 1} \bigr] + \lambda\beta_j\nonumber \\
  &=   -  \sum_{(\bm x,y)\in\bm G}  (y - f_{\bm\beta}(\bm x))\cdot x_j \cdot
    \bm 1_{|y - f_{\bm\beta}(\bm x)| \leq 1} \nonumber \\
  &\quad {}- 1/2  \sum_{(\bm x,y)\in\bm G}  x_j
 \cdot  \bm 1_{y - f_{\bm\beta}(\bm x) > 1}
    + 1/2  \sum_{(\bm x,y)\in\bm G}  x_j
      \cdot  \bm 1_{y - f_{\bm\beta}(\bm x) < -1} \;\;+ \lambda\beta_j. \nonumber
\end{align}

Our observation is that we can compute $J(\bm \beta)$ and
$\frac{\partial J(\bm\beta)}{\partial \beta_j}$ without materializing $\bm G$,
by reformulating their data-dependent computation as a few $\faqai$
expressions. We explain the details next.

\subsubsection{Reformulating the objective $J(\bm \beta)$ with Huber loss into \faqai{} expressions}
\label{sec:pointevalhuber}

We show that the objective $J(\bm\beta)$ from~\eqref{eq:pointevalhuber}
can be reformulated into $O(n^2)$ $\faqai$ expressions of the
form~\eqref{eqn:our:query}.

First, we consider the case where $|y - f_{\bm\beta}(\bm x)| \leq 1$, i.e. the
square loss term of $J(\bm\beta)$.  For ease of notation, let
$c_1(y,\bm x) = |y - f_{\bm\beta}(\bm x)| \leq 1$.
\begin{align*}
	&\sum_{(\bm x, y) \in \bm G} (y - f_{\bm\beta}(\bm x))^2 \cdot
	\bm 1_{c_1(y,\bm x)} \\
	&= \sum_{(\bm x, y) \in \bm G} y^2  - 2 y f_{\bm\beta}(\bm x) + (f_{\bm\beta}(\bm x))^2 \cdot
	\bm 1_{c_1(y,\bm x)} \\
	&= \sum_{(\bm x, y) \in \bm G} y^2 \cdot   \bm 1_{c_1(y,\bm x)}
	- 2 \sum_{(\bm x, y) \in \bm G} y \cdot f_{\bm\beta}(\bm x) \cdot\bm 1_{c_1(y,\bm x)} + \sum_{(\bm x, y) \in \bm G} (f_{\bm\beta}(\bm x))^2 \cdot
	\bm 1_{c_1(y,\bm x)} \\
	&= \sum_{(\bm x, y) \in \bm G} y^2 \cdot   \bm 1_{c_1(y,\bm x)}
	- 2 \sum_{i \in [n]}\sum_{(\bm x, y) \in \bm G} \beta_i \cdot y  \cdot x_i \cdot\bm 1_{c_1(y,\bm x)} + \sum_{i \in [n]}\sum_{j \in [n]}\sum_{(\bm x, y) \in \bm G} \beta_i \cdot \beta_j \cdot x_i \cdot x_j \cdot \bm 1_{c_1(y,\bm x)} \\
	&= \sum_{(\bm x, y) \in \bm G} y^2 \cdot   \bm 1_{y - f_{\bm\beta}(\bm x) \leq 1}
	\cdot\bm 1_{y - f_{\bm\beta}(\bm x) \geq 0} \\
	&\qquad + \sum_{(\bm x, y) \in \bm G} y^2 \cdot
	\bm 1_{y - f_{\bm\beta}(\bm x) \geq -1} \cdot \bm 1_{y - f_{\bm\beta}(\bm x) < 0}\\
	& \qquad - 2 \sum_{i \in [n]}\sum_{(\bm x, y) \in \bm G} \beta_i \cdot y  \cdot x_i \cdot
	\bm 1_{y - f_{\bm\beta}(\bm x) \leq 1}\cdot\bm 1_{y - f_{\bm\beta}(\bm x) \geq 0}\\
	& \qquad - 2 \sum_{i \in [n]}\sum_{(\bm x, y) \in \bm G} \beta_i \cdot y  \cdot x_i \cdot
	\bm 1_{y - f_{\bm\beta}(\bm x) \geq -1} \cdot \bm 1_{y - f_{\bm\beta}(\bm x) < 0}\\
	&\qquad + \sum_{i \in [n]}\sum_{j \in [n]}\sum_{(\bm x, y) \in \bm G} \beta_i \cdot \beta_j
	\cdot x_i \cdot x_j \cdot
	\bm 1_{y - f_{\bm\beta}(\bm x) \leq 1}\cdot\bm 1_{y - f_{\bm\beta}(\bm x) \geq 0}\\
	&\qquad + \sum_{i \in [n]}\sum_{j \in [n]}\sum_{(\bm x, y) \in \bm G} \beta_i\cdot\beta_j
	\cdot x_i \cdot x_j \cdot
	\bm 1_{y - f_{\bm\beta}(\bm x) \geq -1} \cdot \bm 1_{y - f_{\bm\beta}(\bm x) < 0}\\
\end{align*}

Each summation over the training dataset $\bm G$ in the final reformulation
above can be expressed as one $\faqai$ query with two ligament hyperedges. For
instance, the first summation over $\bm G$ is equivalent to the following
$\faqai$ expression:
\begin{align*}
	Q() = \sum_{y, \bm x_\calV}  y^2 \cdot \underbrace{\bm 1_{y - f_{\bm\beta}(\bm x) \leq 1}
		\cdot\bm 1_{y - f_{\bm\beta}(\bm x) \geq 0}}_{\text{ligaments in }\calE_\ell} \cdot
	\left(\prod_{F \in \calE_s} R_F(x_F) \right)
\end{align*}

The absolute loss function for the case $|y - f_{\bm\beta}(\bm x)| > 1$ can be
reformulated similarly:
\begin{align*}
	&\sum_{(\bm x, y) \in \bm G} (|y - f_{\bm\beta}(\bm x)| - 1) \cdot
	\bm 1_{|y - f_{\bm\beta}(\bm x)| > 1}\\
	&= \sum_{(\bm x, y) \in \bm G} (y - f_{\bm\beta}(\bm x) - 1) \cdot
	\bm 1_{y - f_{\bm\beta}(\bm x) > 1}+\sum_{(\bm x, y) \in \bm G}(f_{\bm\beta}(\bm x) - y - 1) \cdot
	\bm 1_{y - f_{\bm\beta}(\bm x) < -1}\\
	&= \sum_{(\bm x, y) \in \bm G} y  \cdot \bm 1_{y - f_{\bm\beta}(\bm x) > 1}
	- \sum_{i \in [n]} \sum_{(\bm x, y) \in \bm G} \beta_i \cdot x_i \cdot
	\bm 1_{y - f_{\bm\beta}(\bm x) > 1} \\
	&\qquad-\sum_{(\bm x, y) \in \bm G} y  \cdot \bm 1_{y - f_{\bm\beta}(\bm x) < -1}
	+ \sum_{i \in [n]} \sum_{(\bm x, y) \in \bm G} \beta_i \cdot x_i \cdot
	\bm 1_{y - f_{\bm\beta}(\bm x) < -1} \\
	&\qquad- \sum_{(\bm x, y) \in \bm G} \bm 1_{y - f_{\bm\beta}(\bm x) > 1}
	- \sum_{(\bm x, y) \in \bm G} \bm 1_{y - f_{\bm\beta}(\bm x) < -1}\\
\end{align*}

All of these terms can be reformulated as $O(n)$ $\faqai$ expressions of the
form~\eqref{eqn:our:query}.

Overall, the objective $J(\bm\beta)$ with Huber loss for learning robust linear
regression models can be computed with $O(n^2)$ $\faqai$ expressions, and
without materializing the training dataset $\bm G$.
Section~\ref{sec:gradevalhuber}
shows that the same
holds for $\frac{\partial J(\bm\beta)}{\partial \beta_j}$.

\subsubsection{Reformulating the gradient $\frac{\partial J(\bm\beta)}{\partial \beta_j}$ with Huber loss into \faqai{} expressions}
\label{sec:gradevalhuber}
We rewrite the  first of the three summations in $\frac{\partial J(\bm\beta)}{\partial \beta_j}$ from~\eqref{eq:gradevalhuber} as follows:
\begin{align}
	&\sum_{(\bm x,y)\in\bm G}  (y -
	\sum_{i\in[n]}  \beta_i x_i)  \cdot
	 x_j  \cdot
	\bm 1_{|y - f_{\bm\beta}(\bm x)| \leq 1} \\
	&=  \sum_{(\bm
		x,y)\in\bm G} y  \cdot  x_j
	 \cdot
	\bm 1_{|y - f_{\bm\beta}(\bm x)| \leq 1}
	- \sum_{i\in[n]}
	\sum_{(\bm x,y)\in\bm G} \beta_i  \cdot
	 x_i  \cdot  x_j
	 \cdot
	\bm 1_{|y - f_{\bm\beta}(\bm x)| \leq 1}\nonumber\\
	&=\sum_{(\bm x,y)\in\bm G} y \cdot x_j \cdot
	\bm 1_{y - f_{\bm\beta}(\bm x) \leq 1} \cdot
	\bm 1_{y - f_{\bm\beta}(\bm x) > 0}\nonumber \\
	&\qquad + \sum_{(\bm x,y)\in\bm G}
	y \cdot x_j \cdot
	\bm 1_{y - f_{\bm\beta}(\bm x) \geq -1} \cdot
	\bm 1_{y - f_{\bm\beta}(\bm x) < 0}\nonumber\\
	&\qquad -\sum_{i\in[n]} \sum_{(\bm x,y)\in\bm G}
	\beta_i \cdot x_i \cdot x_j \cdot
	\bm 1_{y - f_{\bm\beta}(\bm x) \leq 1} \cdot
	\bm 1_{y - f_{\bm\beta}(\bm x) > 0} \nonumber \\
	&\qquad -\sum_{i\in[n]} \sum_{(\bm x,y)\in\bm G}
	\beta_i\cdot x_i \cdot x_j \cdot
	\bm 1_{y - f_{\bm\beta}(\bm x) \geq -1} \cdot
	\bm 1_{y - f_{\bm\beta}(\bm x) < 0} . \nonumber
\end{align}

The four terms can be expressed as $O(n)$ $\faqai$ expressions of the
form~\eqref{eqn:our:query}. For instance, the first part of the expression is
equivalent to the following $\faqai$ query:
\begin{align}
	Q() &= \sum_{y, \bm x_{\calV}} y \cdot x_j  \cdot
	\underbrace{
		\bm 1_{y - f_{\bm\beta}(\bm x) \leq 1} \cdot
		\bm 1_{y -  f_{\bm\beta}(\bm x) > 0} }_{\text{ligaments } \calE_\ell}\cdot
	\left(\prod_{F\in \calE_s} R_F(\bm x_F)\right).\nonumber
\end{align}

The other two summations in $\frac{\partial J(\bm\beta)}{\partial \beta_j}$ both
aggregate over $x_j$ and have one inequality that defines a ligament in
$\calE_\ell$. They can be expressed as $\faqai$ expressions.
Overall, the gradient $\grad J(\bm\beta)$ can be expressed as $O(n^2)$ $\faqai$
expressions.

\bdefn[$Q_\ell$: The ligament extension of $Q$]
\label{defn:Q_ell}
Given an $\faq$ query $Q$ with hypergraph $\calH=(\calV, \calE)$,
define the {\em ligament extension of $Q$}, denoted by $Q_\ell$, to be an $\faqai$ query with hypergraph $\calH_\ell = (\calV, \calE_s \cup \calE_\ell)$
whose set of skeleton edges $\calE_s$ is identical to $\calE$
and whose set of ligament edges $\calE_\ell$ contains a single ligament edge $\calV$, i.e. $\calE_s = \calE$ and $\calE_\ell =\{\calV\}$.
\edefn

\bthm
\label{thm:huber:loss}
Let $\bm I$ be an input database where $N$ is the largest relation in $\bm I$, and $Q$
be a feature extraction query. For any
robust linear regression model $\bm\beta^\top\bm x$, the objective $J(\bm\beta)$
and gradient $\grad J(\bm\beta)$ with Huber loss can be computed in time
$\tilde O(N^{\ssubfaqw_\ell(Q_\ell)})$ with $\spanda$ and
in time $O(N^{\faqw_\ell(Q_\ell)} \log N)$ with $\InsideOut$,
where $Q_\ell$ is the ligament extension of $Q$ (Def.~\ref{defn:Q_ell}).
\ethm

\bp
Let $n$ be the number of variables in $Q$.
We show in Sections~\ref{sec:pointevalhuber} and \ref{sec:gradevalhuber}
that we can rewrite objective $J(\bm\beta)$
and the gradient $\grad J(\bm\beta)$ into $O(n^2)$ $\faqai$
expressions with at most $|\calE_\ell| = 2$ ligament hyperedges. The overall
runtime bound for computing $J(\bm\beta)$ and $\grad J(\bm\beta)$ with
$\spanda$ follows from Theorem~\ref{thm:relaxed:subfaqw:general}, which
states that $\spanda$ can compute each $\faqai$ expression in time
$\tilde O(N^{\ssubfaqw_\ell(Q_\ell)})$.

The overall runtime bound for computing $J(\bm\beta)$ and $\grad J(\bm\beta)$
with $\InsideOut$ follows from Theorem~\ref{thm:relaxed:faqw}, which states that
$\InsideOut$ can compute each $\faqai$ expression in time $O(N^{\faqw_\ell(Q_\ell)} \log N)$.
\ep

\subsection{Further non-polynomial loss functions}
\label{sec:otherlosses}

In this section, we overview the following non-polynomial loss functions: (1) epsilon insensitive loss;
(2) ordinal hinge loss; and (3) scalene loss. For each
function, we define the loss function $\calL$, the corresponding objective
function $J(\bm\beta)$, and the partial (sub)derivative
$\frac{\partial J(\bm\beta)}{\partial \beta_j}$ which is used in
(sub)gradient-based optimization algorithms. (Recall notation from Section~\ref{sec:applications:intro}.)  In the derivations for the objective
$J(\bm\beta)$, we will focus on the loss function and ignore the regularizer for
better readability.

As in the previous section, the objective and (sub)derivative can be
reformulated into several $\faqai$ expressions of the form~\eqref{eqn:our:query}. Instead
of writing out the expressions explicitly, we annotate those terms that can be
reformulated. The actual reformulation should be clear from the examples in
the previous sections.

\subsubsection*{Epsilon insensitive loss}
The epsilon insensitive loss function~\cite{murphy2013} is defined as:
\begin{align*}
    \calL(a,b) =
    \begin{cases}
        0 & \text{if } | a - b| \leq \epsilon \\
        |a - b| - \epsilon & \text{otherwise}
    \end{cases}
\end{align*}

This loss function is used to learn SVM regression models. We consider 
a linear regression model $f_{\bm\beta} (\bm x) = \bm\beta^\top \bm x= \sum_{i \in [n]} \beta_i x_i$. The
objective function and the corresponding partial subderivative with respect to
$\beta_j$ are given by:
\begin{align*}
    J(\bm\beta)
    &= \sum_{(\bm x, y) \in \bm G} (|y - \bm\beta^\top \bm x| - \epsilon) \cdot
    \bm 1_{|y - f_{\bm\beta}(\bm x)| > \epsilon}\\
    &= \underbrace{\sum_{(\bm x, y) \in \bm G} (y - \bm\beta^\top \bm x - \epsilon) \cdot
        \bm 1_{y - f_{\bm\beta}(\bm x) > \epsilon}}
    _{\faqai \text{ query of form ~\eqref{eqn:our:query}}} + \underbrace{\sum_{(\bm x, y) \in \bm G}
        (\bm\beta^\top \bm x - y - \epsilon) \cdot
        \bm 1_{f_{\bm\beta}(\bm x)-y > \epsilon}}
    _{\faqai \text{ query of form ~\eqref{eqn:our:query}}}  \\
    \frac{\partial J(\bm\beta)}{\partial \beta_j}
    &= \underbrace{\sum_{(\bm x, y) \in \bm G}  x_j \cdot
        \bm 1_{f_{\bm\beta}(\bm x) - y > \epsilon}}
    _{\substack{\text{$\faqai$ query of form~\eqref{eqn:our:query}}\\ \text{for each $j \in [n]$}}} -
    \underbrace{\sum_{(\bm x, y) \in \bm G} x_j \cdot \bm 1_{y - f_{\bm\beta}(\bm x) > \epsilon}}
    _{\substack{\text{$\faqai$ query of form~\eqref{eqn:our:query}}\\ \text{for each $j \in [n]$}}}
\end{align*}

The objective can thus be reformulated into $O(1)$
$\faqai$ queries, while the gradient can be reformulated into $O(n)$ queries: one for each $\beta_j$ for $j\in[n]$.

\subsubsection*{Ordinal hinge loss}
The ordinal hinge loss~\cite{glrm} is defined as:
\begin{align*}
    \calL(a,b)
    &= \sum_{t = 1}^{a-1} \max(0, 1 - b + t) +
    \sum_{t = a+1}^{d} \max(0, 1 + b - t) \\
    & = \sum_{t = 1}^{d} \bigl[\max(0, 1 - b + t) \cdot \bm 1_{t < a} +
    \max(0, 1 + b - t) \cdot \bm 1_{t > a}\bigr] \\
    & = \sum_{t = 1}^{d} \bigl[(1 - b + t) \cdot \bm 1_{t < a} \cdot \bm 1_{b < t + 1}+(1 + b - t) \cdot \bm 1_{t > a} \cdot \bm 1_{b > t - 1} \bigr]\\
\end{align*}

The  loss  function is  used  to  learn  ordinal  regression models  or  ordinal
PCA~\cite{glrm}.  A   linear  ordinal   regression  model  is the linear  function
$f_{\bm\beta} (\bm  x) = \bm\beta^\top  \bm x$  which predicts an  ordinal label
$y \in [d]$.  The objective function and the partial subderivative with respect
to $\beta_j$ are given by:
\begin{align*}
    J(\bm\beta)
    &= \sum_{t = 1}^{d} \underbrace{\sum_{(\bm x, y) \in \bm G}  (1 - \bm\beta^\top  \bm x + t)
        \cdot \bm 1_{\bm\beta^\top  \bm x < 1 + t}\cdot \bm 1_{y < t}}_
    {\substack{\text{$\faqai$ query of form~\eqref{eqn:our:query}}\\ \text{for each $t \in [d]$}}}\\
    &\qquad+\sum_{t = 1}^{d} \underbrace{\sum_{(\bm x, y) \in \bm G} (1 + \bm\beta^\top  \bm x - t)
        \cdot \bm 1_{\bm\beta^\top  \bm x > t - 1}  \cdot \bm 1_{y > t}}_
    {\substack{\text{$\faqai$ query of form~\eqref{eqn:our:query}}\\ \text{for each $t \in [d]$}}}  \\
    \frac{\partial J(\bm\beta)}{\partial \beta_j}
    &=\sum_{t = 1}^{d}  \underbrace{\sum_{(\bm x, y) \in \bm G} x_j
        \cdot \bm 1_{\bm\beta^\top  \bm x > t - 1}  \cdot \bm 1_{y > t}}
    _{\substack{\text{$\faqai$ query of form~\eqref{eqn:our:query}}\\ \text{for each $t \in [d], j\in[n]$}}}- \sum_{t = 1}^{d} \underbrace{\sum_{(\bm x, y) \in \bm G}  x_j
        \cdot \bm 1_{\bm\beta^\top  \bm x < 1 + t}\cdot \bm 1_{y < t}}
    _{\substack{\text{$\faqai$ query of form~\eqref{eqn:our:query}}\\ \text{for each $t \in [d], j\in[n]$}}}\\
\end{align*}

The objective and partial subderivative can thus be reformulated as
$O(d \cdot n)$ $\faqai$ expressions.

\subsubsection*{Scalene loss}
The scalene loss function~\cite{glrm} is defined as:
\begin{align*}
    \calL(a,b)
    &= \alpha \cdot \max(0, a - b) + (1 - \alpha) \cdot \max(0, b - a) \\
    &= \alpha \cdot (a - b) \cdot \bm 1_{a > b} + (1 - \alpha) \cdot (b - a)
    \cdot \bm 1_{b > a}
\end{align*}
where $\alpha \in (0,1)$ is a constant.

The loss function is used to learn quantile regression models. We again consider a linear regression model $f_{\bm\beta} (\bm  x) = \bm\beta^\top  \bm x$. The objective
function and the partial subderivative with respect to $\beta_j$ are given by:
\begin{align*}
    J(\bm\beta)
    &= \alpha \cdot \underbrace{\sum_{(\bm x, y) \in \bm G}
        (y - f_{\bm\beta}(\bm x)) \cdot \bm 1_{y > f_{\bm\beta}(\bm x)}}
    _{\faqai \text{ query of form~\eqref{eqn:our:query}}} + (1 - \alpha)\cdot  \underbrace{\sum_{(\bm x, y) \in \bm G} (
        f_{\bm\beta}(\bm x) - y)  \cdot \bm 1_{f_{\bm \beta}(\bm x) >y}}
    _{\faqai \text{ query of form~\eqref{eqn:our:query}}}  \\
    \frac{\partial J(\bm\beta)}{\partial \beta_j}
    &= (1 - \alpha)\cdot  \underbrace{\sum_{(\bm x, y) \in \bm G} x_j  \cdot
        \bm 1_{f_{\bm \beta}(\bm x) > y}}
    _{\substack{\text{$\faqai$ query of form~\eqref{eqn:our:query}}\\ \text{for each $j \in [n]$}}}
    - \alpha \cdot\underbrace{\sum_{(\bm x, y) \in \bm G} x_j \cdot
        \bm 1_{y > f_{\bm\beta}(\bm x)}}
    _{\substack{\text{$\faqai$ query of form~\eqref{eqn:our:query}}\\ \text{for each $j \in [n]$}}}  \\
\end{align*}

The objective and partial subderivative can thus be reformulated as
$O(n)$ $\faqai$ expressions.

Overall, we can reformulate the (sub)gradients under each one of the loss functions discussed in this section as $\faqai$ queries that are {\em ligament extensions} of the feature extraction query $Q$ as per Def.~\ref{defn:Q_ell}.

\subsection{Linear support vector machines}

A linear SVM classification model is used for binary classification problems
where the label $y \in \{\pm 1\}$. For the features
$\bm x = (x_1 = 1, x_2, \ldots, x_n)$, the model learns the parameters
$\bm\beta = (\beta_1, \ldots, \beta_n)$ of a linear discriminant function
$f_{\bm\beta}(\bm x) = \bm\beta^\top\bm x$ such that $f_{\bm\beta}(\bm x)$
separates the data points in $\bm G$ into positive and negative classes with a
maximum margin. The parameters can be learned by minimizing the objective
function~\eqref{eqn:generic:J} with the hinge loss function:
\begin{align}
  \mathcal L (a, b) = \max\{0,1 - a \cdot b \}.
  \label{eqn:hinge:loss}
\end{align}

Hinge loss is non-differentiable, and thus standard gradient descent optimization is not
applicable. We next discuss two alternative approaches for solving this optimization.

The first approach is based on the observation that the loss function
is convex, and the objective admits subgradient vectors, which generalize the standard notion of
gradient. The optimization
problem can be solved with subgradient-based updates. Pegasos is a
well-known algorithm for this approach~\cite{pegasos}.

The alternative approach is to solve the primal formulation of the problem, which
avoids the non-differentiable objective by turning it into a constraint
optimization problem with slack variables. Joachims proposed a cutting-plane
algorithm which solves this optimization problem efficiently~\cite{Joachims:2006}.

For both approaches, the number of iterations of the optimization algorithm is
independent of the size $|\bm G|$ of training dataset $\bm G$~\cite{pegasos,
  Joachims:2006}. Since each iteration takes $O(|\bm G|)$ time and the number of iterations is $O(1)$, it follows that the overall time complexity is $O(|\bm G|)$.

Despite the fact that the two approaches solve the same problem, they have been
hugely influential in their own right. We therefore consider both approaches,
and show that by reformulating their computation as $\faqai$ we can solve them
asymptotically faster than materializing the training dataset $\bm G$, i.e.,
sublinear in $|\bm G|$.

\subsubsection{Background on Subgradient Descent}
\label{sec:background:subgradient}
If the objective function $J(\bm\beta)$ is convex but not differentiable, the
gradient $\grad J(\bm\beta)$ is not defined. Such objective functions do,
however, admit a subgradient, which can be used in subgradient-based
optimization algorithms. Algorithm~\ref{algo:bgd} naturally captures the batch
subgradient-descent algorithm, if the parameters are updated in the direction of
the subgradient as opposed to the gradient.

A popular application for subgradient-descent optimization algorithms is the
learning of linear SVM models. One such algorithm is the Pegasos
algorithm~\cite{pegasos}, which showed that subgradient methods can learn the
parameters of the model significantly faster than other approaches, including
Joachims' cutting plane algorithm~\cite{Joachims:2006}.

\subsubsection{Subgradient-based optimization for linear SVM classification}
\label{sec:hinge:loss}

We first use subgradient-based optimization to compute
the parameters of the SVM model;
see Section~\ref{sec:background:subgradient} for some background.
The core of the optimization
is the repeated computation of the objective and the partial
derivatives in terms of $(\beta_j)_{j\in [n]}$. The objective $J(\bm\beta)$
(with $\ell_2$ regularization) and the partial derivative
$\frac{\partial J(\bm\beta)}{\partial \beta_j}$ are:
\begin{align}
  J(\bm\beta)
  &= \sum_{(\bm x,y)\in\bm G} \max\{0,1- y\cdot\bm\beta^\top \bm x\} + \frac{\lambda}{2}\norm{\bm\beta}^2_2,   \label{eqn:hinge:objective}
\\
  \frac{\partial J(\bm\beta)}{\partial \beta_j}
  &= -\sum_{(\bm x,y)\in\bm G} y\cdot x_j \cdot \bm 1_{y\cdot\bm\beta^\top \bm x \leq 1} + \lambda\beta_j.
   \label{eqn:hinge:partial}
\end{align}

Both $J(\bm\beta)$ and
$\frac{\partial J(\bm \beta)}{\partial \beta_j}$ can be reformulated as $\faqai$
expressions and computed without materializing $\bm G$. We first rewrite the
objective:
\begin{align} \label{eqn:rewrite:obj:hinge}
  &\sum_{(\bm x,y)\in\bm G}
    \max\{0,1- y\cdot(\bm\beta^\top \bm x)\} + \frac{\lambda}{2}\norm{\bm\beta}^2_2  \\
    &\qquad = \frac{\lambda}{2}\norm{\bm\beta}^2_2+ \sum_{(\bm x,y)\in\bm G} (1- y\cdot(\bm\beta^\top \bm x)) \cdot
     \bm 1_{y\cdot(\bm\beta^\top \bm x) \leq 1} \\
   & \qquad= \frac{\lambda}{2}\norm{\bm\beta}^2_2+ \sum_{(\bm x,y)\in\bm G}
     (1-(\bm\beta^\top \bm x)) \cdot \bm 1_{y=1} \cdot \bm 1_{\bm\beta^\top \bm x \leq 1}+ \sum_{(\bm x,y)\in\bm G}
     (1+(\bm\beta^\top \bm x)) \cdot \bm 1_{y=-1} \cdot \bm 1_{\bm\beta^\top \bm x \geq -1}\nonumber\\
    &\qquad= \frac{\lambda}{2}\norm{\bm\beta}^2_2 +
\!\!\!\!\underbrace{\sum_{(\bm x,y)\in\bm G}
    \bm 1_{y=1} \cdot \bm 1_{\bm\beta^\top \bm x \leq 1}}_{\text{$\faqai$ of the
       form~\eqref{eqn:our:query}}} - \sum_{i=1}^n \underbrace{\sum_{(\bm
x,y)\in\bm G}
    \beta_i \cdot x_i \cdot \bm 1_{y=1} \cdot \bm 1_{\bm\beta^\top \bm x \leq 1}}
    _{\text{$\faqai$ of the form~\eqref{eqn:our:query}}} \nonumber\\
     &\qquad\qquad+  \underbrace{\sum_{(\bm x,y)\in\bm G}
    \bm 1_{y=-1} \cdot \bm 1_{\bm\beta^\top \bm x \geq -1}}_{\text{$\faqai$ of the
    form~\eqref{eqn:our:query}}}  +  \sum_{i=1}^n  \underbrace{\sum_{(\bm
x,y)\in\bm G}
   \beta_i \cdot x_i \cdot \bm 1_{y=-1}  \cdot \bm 1_{\bm\beta^\top \bm x \geq -1}}
   _{\text{$\faqai$ of the form~\eqref{eqn:our:query}}}. \nonumber
\end{align}
In the above, the sum $\sum_{(\bm x,y)\in\bm G}
\bm 1_{y=1} \cdot \bm 1_{\bm\beta^\top \bm x \leq 1}$
for example can be expressed as an $\faqai$ query of the form \eqref{eqn:our:query} as follows:
\begin{align}
    Q() &= \sum_{y, \bm x_{\calV}}
    \underbrace{\bm 1_{y=1} \cdot \bm 1_{\bm\beta^\top \bm x \leq 1}}_{\text{ligaments } \calE_\ell}\cdot
    \left(\prod_{F\in \calE_s} R_F(\bm x_F)\right).\nonumber
\end{align}
$\frac{\partial J(\bm\beta)}{\partial \beta_j}$ can also be rewritten into
two $\faqai$ expressions:
\begin{align}
  &-\sum_{(\bm x,y)\in\bm G} y\cdot x_j \cdot \bm 1_{
    y(\bm\beta^\top \bm x) \leq 1} + \lambda\beta_j
     = \lambda\beta_j - \underbrace{\sum_{(\bm x,y)\in\bm G}
     x_j \cdot \bm 1_{y=1} \cdot \bm 1_{\bm\beta^\top \bm x \leq
    1}}_{\text{$\faqai$ of the form~\eqref{eqn:our:query}}} +
    \underbrace{\sum_{(\bm x,y)\in\bm G}
    x_j \cdot \bm 1_{y=-1} \cdot \bm 1_{\bm\beta^\top \bm x \geq
    -1}}_{\text{$\faqai$ of the form~\eqref{eqn:our:query}}}. \nonumber
\end{align}

\bthm
\label{thm:hinge:loss}
Let $\bm I$ be an input database where $N$ is the largest relation in $\bm I$, and
$Q$ be a feature extraction query.  For
any linear SVM classification model $\bm\beta^\top\bm x$, the objective
$J(\bm\beta)$ and gradient $\grad J(\bm\beta)$ with hinge loss can be computed
in time $\tilde O(N^{\ssubfaqw_\ell(Q_\ell)})$ with $\spanda$
and in time $O(N^{\faqw_\ell(Q_\ell)} \log N)$ with
$\InsideOut$,
where $Q_\ell$ is the ligament extension of $Q$ (Def.~\ref{defn:Q_ell}).
\ethm

\bp
Let $n$ be the number of variables in $Q$.
We show above that $J(\bm\beta)$ and
$\grad J(\bm\beta)$ can be rewritten into $O(n)$ $\faqai$ expressions with a
single ligament hyperedge (i.e. $|\calE_\ell| = 1$). The overall runtime bound
for computing $J(\bm\beta)$ and $\grad J(\bm\beta)$ with $\spanda$
follows from Theorem~\ref{thm:relaxed:subfaqw:general}, which states that
$\spanda$ can compute each $\faqai$ query in time
$\tilde O(N^{\ssubfaqw_\ell(Q_\ell)})$. The runtime for computing $J(\bm\beta)$
and $\grad J(\bm\beta)$ with $\InsideOut$ follows from
Theorem~\ref{thm:relaxed:faqw}: This is $O(N^{\faqw_\ell(Q_\ell)} \cdot \log N)$ for a
$\faqai$ query $Q$.
\ep

\subsubsection{Cutting-plane algorithm for linear SVM classification in primal space}
\label{sec:svm:primal}

An alternative to learning linear SVM via subgradient-based
optimization is to pose the problem as a constraint optimization
problem. The equivalent formulation for minimizing the
objective~\eqref{eqn:hinge:objective} is the primal formulation of linear
SVM~\cite{murphy2013}:
\begin{align}
  \min_{\bm \beta, \xi_{\bm x, y}\geq 0}
  &\qquad \frac{1}{2} \norm{\bm\beta}^2 + \frac{C}{|G|} \sum_{(\bm x,y)\in\bm G}\xi_{\bm
    x,y} \label{eqn:svm:primal}\\
  \text{s.t.}
  &\qquad y \cdot f_{\bm\beta}(\bm x) \geq 1-\xi_{\bm x,y},
  && \forall (\bm x,y) \in \bm G.\nonumber
\end{align}
where $\xi_{\bm x,y}$ are slack variables and $C$ is the regularization
parameter.

The optimization problem solves for the hyperplane $f_{\bm\beta}(\bm x)$ that
classifies the data points $(\bm x, y) \in \bm G$ into two classes, so that the
margin between the hyperplane and the nearest data point for each class is
maximized. For each $(\bm x, y) \in \bm G$, the slack variable $\xi_{\bm x,y}$
encodes how much the point violates the margin of the hyperplane.

Joachims' cutting-plane algorithm
solves~\eqref{eqn:svm:primal} in linear time over the training
dataset~\cite{Joachims:2006}. The algorithm solves the following {\em structural
  classification} SVM formulation, which is equivalent
to~\eqref{eqn:svm:primal}:
\begin{align}
  \min_{\bm\beta, \xi \geq 0} &\qquad \frac 1 2 \norm{\bm\beta}^2 + C \xi
         \label{eqn:svm:primal:structural}\\
  \text{s.t.}
       &\qquad \frac{1}{|\bm G|} \inner{\bm\beta, \sum_{(\bm x, y) \in \bm T} y \bm x} \geq
         \frac{1}{|\bm G|} |\bm T| - \xi, && \forall \bm T \subseteq \bm G. \nonumber
\end{align}
This formulation has $2^{|\bm G|}$ constraints, one for each possible subset
$\bm T \subseteq \bm G$, and a single slack variable $\xi$ that is shared by all
constraints.

\begin{algorithm}[t]
  \caption{Training classification SVM via~\eqref{eqn:svm:primal:structural}}
  \label{algo:cutting:plane}
  $\calW \gets \emptyset$\tcp*[r]{Working set}
  $t \gets 0$\;
  \Repeat {
    $\frac{|\bm T^{(t)}|}{|\bm G|} - \frac{1}{|\bm G|} \inner{\bm\beta^{(t)}, \sum_{(\bm x,y) \in \bm
        T^{(t)}} y\bm x}
    \leq \xi^{(t)} + \epsilon$
  }
  {
    $t \gets t+1$\;
    $(\bm\beta^{(t)},\xi^{(t)}) \gets \argmin_{\bm\beta,\xi\geq 0}
    \left\{ \frac 1 2 \norm{\bm\beta}_2^2+C\xi \right.$\;
    \qquad $\text{ s.t. } \left.  \frac{1}{|\bm G|} \inner{\bm\beta, \sum_{(\bm x,y) \in \bm T}y\bm x} \geq
      \frac{|\bm T|}{|\bm G|}-\xi, \forall \bm T \in \calW \right\}$\;
    $\bm T^{(t)} := \{ (\bm x, y) \in \bm G \suchthat y\inner{\bm\beta^{(t)},\bm x}<1\}$\;
    $\calW \gets \calW \cup \{\bm T^{(t)}\}$
  }
\end{algorithm}

Algorithm~\ref{algo:cutting:plane} presents Joachims' cutting-plane algorithm
for solving~\eqref{eqn:svm:primal:structural}. It iteratively
constructs a set of constraints $\calW$, which is a subset of all constraints
in~\eqref{eqn:svm:primal:structural}. In each round $t$, it first
computes the optimal value for $\bm\beta^{(t)}$ and $\xi^{(t)}$ over the current
working set $\calW$. Then, it identifies the constraint $\bm T^{(t)}$ that is
most violated for the current $\bm\beta^{(t)}$, and adds this constraint to
$\calW$. It continues until $\bm T^{(t)}$ is violated by at most
$\epsilon$. Joachims showed that Algorithm~\ref{algo:cutting:plane} finds the
$\epsilon$-approximate solution to~\eqref{eqn:svm:primal:structural} in
$O(1)$-many iterations~\cite{Joachims:2006}. Hence $|\calW|$ and the
number of constraints of the optimization problem are bounded by a
number {\em independent} of $|\bm G|$.

Next, we consider the inner optimization problem at line 5. Although $|\calW|$
is small, the number $n$ of variables can still be large. This prohibits
solving with quadratic programming as it can take up to
$O(n^3)$~\cite{murphy2013}.  Its Wolfe dual, on the other hand, is a
quadratic program with only a constant number of variables that is independent
of $n$ and one constraint.  Let
$\bm x_{\bm T} = \sum_{(\bm x, y) \in \bm T} y\bm x$. We next present the derived Wolfe dual.

\subsubsection*{Wolfe dual for optimization problem at line 5 of
	Algorithm~\ref{algo:cutting:plane}}
\label{sec::wolfedual}

We consider the inner optimization problem at line 5 of
Algorithm~\ref{algo:cutting:plane}, show how to derive the Wolfe
dual~\eqref{eqn:svm:dual} from the structural SVM classification
formulation~\eqref{eqn:svm:primal:structural}. Let
$\bm x_{\bm T} = \sum_{(\bm x, y) \in \bm T} y\bm x$. The inner optimization
problem at line 5 of Algorithm~\ref{algo:cutting:plane} is of the form:
\begin{align}
	\min_{\bm\beta, \xi}
	& \qquad \frac 1 2 \norm{\bm\beta}^2 + C \xi \label{eqn:svm:primal:2}\\
	\text{s. t.}
	& \qquad \inner{ \bm\beta, \bm x_{\bm T}} \geq |\bm T| - |\bm G|\xi
	&& \bm T \in \calW\nonumber\\
	& \qquad \xi \geq 0.\nonumber
\end{align}

The Lagrangian function of this optimization problem is:
\begin{align*}
	&L(\bm\beta, \xi, \bm \alpha, \gamma) \\
	&=  \frac 1 2 \norm{\bm\beta}^2 + C \xi
	+ \sum_{\bm T \in \calW} \alpha_{\bm T}(|\bm T| - |\bm G|\xi -
	\inner{\bm\beta,\bm x_{\bm T}}) - \gamma\xi \\
	&=  \frac 1 2 \norm{\bm\beta}^2 -
	\inner{ \bm\beta, \sum_{\bm T\in \calW} \alpha_{\bm T} \bm x_{\bm T}} +
	\sum_{\bm T \in \calW} |\bm T| \alpha_{\bm T}+\left(C - |\bm G| \sum_{\bm T \in \calW} \alpha_{\bm T} -\gamma \right)\xi.
\end{align*}
where $\bm\alpha = (\alpha_{\bm T})_{\bm T \in \mathcal W}$ and $\gamma$ are
Lagrange multipliers.

Since the Lagrangian is convex and continuously differentiable, we can define
the Wolfe dual as the following optimization problem:
\begin{align}
	\max_{\bm\beta, \xi}
	& \qquad L(\bm\beta, \xi, \bm \alpha, \gamma) \\
	\text{s. t.}
	& \qquad \grad_{\bm \beta} L = \bm\beta - \sum_{\bm T \in \calW}
	\alpha_{\bm T}\bm x_{\bm T} = 0 \nonumber\\
	& \qquad \grad_{\xi}L = C - |\bm G| \sum_{\bm T\in \calW}\alpha_{\bm T} -\gamma = 0
	\nonumber \\
	& \qquad \bm\alpha \geq \bm 0 ,\gamma \geq 0. \nonumber
\end{align}

The optimal condition for $\bm\beta$ is
$\bm\beta = \sum_{\bm T \in \calW} \alpha_{\bm T}\bm x_{\bm T} $. We use this
equality to rewrite the above dual formulation and attain the following optimization
problem:
\begin{align}
  \max_{\bm \alpha \geq \bm 0}
  & \quad -\frac 1 2 \inner{
    \sum_{\bm T \in \calW}\alpha_{\bm T}\bm x_{\bm T},
    \sum_{\bm T \in \calW}\alpha_{\bm T}\bm x_{\bm T}}
    + \sum_{\bm T \in \calW} |\bm T| \alpha_{\bm T} \label{eqn:svm:dual}\\
  \text{s.t.}
  & \quad \sum_{\bm T \in \calW} \alpha_{\bm T} \leq \frac{C}{|\bm G|}\nonumber
\end{align}
where $\bm\alpha = (\alpha_{\bm T})_{\bm T \in \mathcal W}$ is the vector of
constraints.

\bthm
\label{thm:cutting:plane}
Let $\bm I$ be an input database where $N$ is the largest relation in $\bm I$, and
$Q$ be a feature extraction query. A linear SVM classification model can be
learned over the training dataset $Q(I)$ with Joachims' cutting-plane algorithm
in time $\tilde O(N^{\ssubfaqw_\ell(Q_\ell)})$ with $\spanda$
and in time $O(N^{\faqw_\ell(Q_\ell)} \log N)$ with $\InsideOut$,
where $Q_\ell$ is the ligament extension of $Q$ (Def.~\ref{defn:Q_ell}).
\ethm

\bp

Recall that for each iteration $t$ of Algorithm~\ref{algo:cutting:plane}, we add
one set $\bm T^{(t)}$ to $\calW$, and $\bm T^{(t)}$ is associated with a coefficient
vector $\bm\beta^{(t)}$. Our main observation is that we do not have to
materialize the set $\bm T^{(t)}$, since it is completely determined by the data
and the coefficient vector $\bm\beta^{(t)}$. Thus, instead of storing
$\bm T^{(t)}$ we can simply store $\bm\beta^{(t)}$ and reformulate the data
dependent term $\bm x_{T^{(t)}}$ in~\eqref{eqn:svm:dual} as a computation over
$\bm G$:
\begin{align*}
	\forall \bm T^{(t)} \in \mathcal W \;:\;\;\; \bm x_{\bm T^{(t)}} =
	\sum_{(\bm x, y) \in \bm T^{(t)}} y \bm x = \sum_{(\bm x,y) \in \bm G} y \bm x \cdot \bm
	1_{y\inner{\bm\beta^{(t)}, \bm x} < 1}.
\end{align*}

The vector $\bm x_{T^{(t)}}$ has size $n$. For each $j\in [n]$, we can compute
the $j$'th component of $\bm x_{T^{(t)}}$ as the summation of the following two
$\faqai$ expressions, which are of form~\eqref{eqn:our:query}:
\begin{align*}
	Q_1() &= \sum_{\bm x_\calV,y} y \cdot x_j \cdot \bm 1_{y = 1}
	\cdot \bm 1_{\sum_{j \in [n]} \beta_j^{(t)} \cdot x_j < 1} \cdot
	\left(\prod_{F \in \calE_s} R_F(\bm x_F)\right),\\
	Q_2() &= \sum_{\bm x_\calV,y} y \cdot  x_j  \cdot  \bm 1_{y = -1}
	\cdot  \bm 1_{\sum_{j \in [n]} \beta_j^{(t)}  \cdot
		 x_j > -1}  \cdot
	\left(\prod_{F \in \calE_s}  R_F(\bm
	x_F)\right).
\end{align*}

$Q_1$ and $Q_2$ have a single ligament hyperedge (i.e. $|\calE_\ell| = 1$).
Theorem~\ref{thm:relaxed:subfaqw:general} states that $\spanda$
computes $Q_i$ for $i\in[2]$ in time
$\tilde O(N^{\ssubfaqw_\ell(Q_i)})$. Consequently, the optimization problem at line
5 of Algorithm~\ref{algo:cutting:plane} can be computed in time
$\tilde O(N^{\ssubfaqw_\ell(Q_i)})$. This
determines the runtime of Algorithm~\ref{algo:cutting:plane}.

Using $\InsideOut$, the runtime of Algorithm~\ref{algo:cutting:plane}
follows from Theorem~\ref{thm:relaxed:faqw}: This is $O(N^{\faqw_\ell(Q_i)} \log N)$ for $Q_i$.
\ep

\subsection{$k$-means clustering}
\label{sec:k:means}

We next consider $k$-means clustering, which is a popular unsupervised machine learning algorithm.

An unsupervised machine learning model is computed over a dataset
$\bm G \subseteq \R^n$, for which each tuple $\bm x \in \bm G$ is a vector of
features without a label. A clustering task divides $\bm G$ into $k$
clusters of ``similar'' data points with respect to the $\ell_2$ norm:
$\bm G = \cup_{i=1}^{k} \bm G_i$, where $k$ is a given fixed positive
integer. Each cluster $\bm G_i$ is represented by a cluster mean
$\bm \mu_i \in \mathbb{R}^n$. One of the most ubiquitous clustering methods,
Lloyd's $k$-means clustering algorithm (also known as the $k$-means method),
involves the optimization problem~\eqref{eq:kmeans-optimization}
with respect to the partition
$(\bm G_i)_{i \in [k]}$ and the $k$ means $(\bm \mu_i)_{i\in [k]}$.
Other norms or distance measures can be used, e.g., if we replace $\ell_2$ with
$\ell_1$-norm, then we get the $k$-median problem. The subsequent development
considers the $\ell_2$-norm.

Lloyd's algorithm can be viewed as a special instantiation of the {\em
  Expectation-Maximization} (EM) algorithm. It iteratively computes two updating
steps until convergence. First, it updates the cluster assignments for each
$(\bm G_i)_{i \in [k]}$:
\begin{align}
    \bm G_i &= \left\{
        \bm x \in \bm G \suchthat \norm{\bm x-\bm \mu_i}^2 \leq
            \norm{\bm x-\bm \mu_j}^2, \forall j \in [k]\setminus\{i\}
        \right\}
        \label{eqn:kmeans:update1}
\end{align}
and then it updates the corresponding $k$-means $(\bm \mu_i)_{i \in [k]}$:
\begin{align}
    \bm\mu_i &= \frac{1}{|\bm G_i|} \sum_{\bm x \in \bm G_i} \bm x.
    \label{eqn:kmeans:update2}
\end{align}

Our observation is that we can reformulate both update steps~\eqref{eqn:kmeans:update1} and~\eqref{eqn:kmeans:update2} as
$\faqai$ expressions, without explicitly computing the partitioning
$(\bm G_i)_{i\in [k]}$. For a given set of $k$-means $(\bm\mu_j)_{j \in [k]}$,
let $c_{ij}(\bm x)$ be the following function:
\begin{align}
  c_{ij}(\bm x) &=
  \sum_{l \in [n]}[(x_l - \mu_{i,l})^2 - (x_l - \mu_{j,l})^2]
  = \sum_{l \in [n]} [\mu_{i,l}^2 - 2  x_l (\mu_{i,l} + \mu_{j,l})
    - \mu_{j,l}^2].
\end{align}
where $\mu_{j,l}$ is the $l$'th component of mean vector $\bm\mu_j$. A data
point $\bm x \in \bm G$ is closest to center $\bm\mu_i$ if and only if
$c_{ij}(\bm x) \leq 0$ holds $\forall j \in [k]$.
We use this inequality to reformulate the mean vector
$\bm\mu_i$ as $O(n)$
$\faqai$ expressions. First, we express $|\bm G_i|$ as:
\begin{align}
  Q_i() = \sum_{\bm x} \left(\prod_{j \in [k]} \bm 1_{c_{ij}(\bm x) \leq 0} \right)
  \left( \prod_{F \in \calE_s} R_F(\bm x_F) \right).
\end{align}
Then, for each $l \in [n]$, the sum $\sum_{\bm x \in \bm G_i} x_l$ can be
reformulated in $\faqai$ as follows (similarly to \eqref{eq:kmeans-query}):
\begin{align}
  Q_{il}() = \sum_{\bm x} x_l \left(\prod_{j \in [k]} \bm 1_{c_{ij}(\bm x) \leq 0}
  \right)\left( \prod_{F \in \calE_s} R_F(\bm x_F) \right).
\end{align}
Each component $(\mu_{i,l})_{l \in [n]}$ equals the division of
$Q_{il}$ by $Q_{i}$.

Overall, the mean vector $\bm\mu_i$ can be computed with $O(n)$
$\faqai$ expressions of the form~\eqref{eqn:our:query}.

\bthm
\label{thm:k:means}
Let $\bm I$ be an input database where $N$ is the largest relation in $\bm I$, and
$Q$ be a feature extraction query where $n$ is the number of its variables.
Each iteration of Lloyd's $k$-means algorithm can be computed in time
$\tilde O(N^{\ssubfaqw_\ell(Q_\ell)})$ with $\spanda$ and
in time $O(N^{\faqw_\ell(Q_\ell)} \log^{k-1} N)$ with $\InsideOut$,
where $Q_\ell$ is the ligament extension of $Q$ (Def.~\ref{defn:Q_ell}).
\ethm

\bp

We have shown above that each mean vector
$(\mu_j)_{j \in [k]}$ can be computed with $O(n)$ $\faqai$ expression of the
form~\eqref{eqn:our:query}, where each query has $|\calE_\ell| = k$ ligament
hyperedges. For $\spanda$, the overall runtime to update all
$k$-means follows from Theorem~\ref{thm:relaxed:subfaqw:general} (respectively
Theorem~\ref{thm:relaxed:faqw}), which states that the algorithm can compute
each $\faqai$ expression of form~\eqref{eqn:our:query} in time
$\tilde O(N^{\ssubfaqw_\ell(Q_\ell)})$.  Using $\InsideOut$, the runtime follows from
Theorem~\ref{thm:relaxed:faqw}: Any $\faqai$ query $Q$ of form~\eqref{eqn:our:query} can be computed in time
$O( N^{\faqw_\ell(Q_\ell)} \log^{k-1} N)$.
\ep

\section{Conclusion}
We presented a theoretical and algorithmic framework for solving a special class of functional aggregate queries that arise naturally within many in-database machine learning problems and captures a variety of database queries including inequality joins.
In this query class, called $\faqai$, some of the input factors happen to be additive inequalities over some input variables. We showed that $\faqai$ queries can be solved
more efficiently than general $\faq$ queries by relaxing the notion of tree decompositions leading to relaxed versions of commonly used width parameters.

While $\faq$ queries over the Boolean semiring are solvable within the tighter bound of submodular width~\cite{MR3144912, panda-pods}, such a bound is not known to be achievable
over arbitrary semirings, including count queries. Therefore, we first introduced a counting analog of the submodular width, denoted $\ssubw$, by relaxing the notion of polymatroids, and showed how to meet this bound for $\faq$ queries over any semiring. We then turned our attention back to the special case of $\faqai$ and showed how to strengthen the bound further in this case.

We showed how to use our framework to solve several common machine learning problems over relational data asymptotically faster than both out-of-database  and previously known in-database machine learning solutions.
These problems include $k$-means clustering, support vector machines, and regression over a variety of non-polynomial loss functions.

One interesting open problem is to prove a hardness result for count queries with unbounded $\ssubw$. On one hand, this would show the tightness of our positive result for solving $\faq$ queries over arbitrary semirings within $\ssubw$ bound.
On another, this would mirror the previously known dichotomy result for query classes over the Boolean semiring based on the submodular width~\cite{MR3144912}.

Another remaining problem is to measure the gap between the submodular width and its counting version $\ssubw$. More precisely, is there a class of queries where the submodular width is unboundedly smaller than $\ssubw$?

Marx~\cite{MR3144912} showed a class of queries where the submodular width is bounded while the fractional hypertree width is unbounded.
Proposition~\ref{prop:ssubw-fhtw} showed a class of queries where the gap between $\ssubw$ and the fractional hypertree width is unbounded (but $\ssubw$ is also unbounded). It remains open to show whether there exists a query class where $\ssubw$ is bounded and the fractional hypertree width is unbounded.

While the $\faqai$ framework can be used to optimize machine learning problems over several
non-polynomial loss functions including those presented in Section~\ref{sec:huber:loss}
and~\ref{sec:otherlosses},
other classes of loss functions are not representable as $\faqai$ queries
and do not benefit from this framework yet.
These classes include for example
the logistic and exponential losses commonly used for classification problems.
It would be interesting to see if such loss functions could eventually be optimized
in the same way in the in-database machine learning setting.

\section*{Acknowledgments}
This project has received funding from the European Union's Horizon 2020 research and innovation programme under grant agreement No 682588.
LN gratefully acknowledges support from NSF grants CAREER DMS-1351362 and CNS-1409303, Adobe Research and Toyota Research, and a Margaret and Herman Sokol Faculty Award.
BM's was supported in part by a Google Research Award, and NSF grants CCF-1830711, CCF-1824303, and CCF-1733873.

\bibliographystyle{acm}
\bibliography{main}

\appendix

\section{Recovering Two Existing Results}
\label{sec:applications-db}

In this section we review two prior results concerned with the evaluation of queries with inequalities: the evaluation of Core XPath queries over XML documents via relational encoding in the pre/post plane and the exact inference for IQ queries with inequality joins over probabilistic databases.
Our main observation is that their linearithmic complexity is due to the same structural property behind relaxed tree decompositions: Such queries admit trivially a relaxed tree decomposition, where each bag corresponds to one relation in the query and the ligament edges, i.e., the inequality joins, are covered by neighboring bags.

\subsection{Core XPath Queries}

We consider the problem of evaluating Core XPath queries over XML documents.
An XML document is represented as a rooted tree whose nodes follow the {\em document order}. Core XPath queries define traversals of such trees using two constructs: (1) a {\em context node} that is the starting point of the traversal; and (2) a tree of {\em location steps} with one distinguished branch that selects nodes and all other branches conditioning this selection. Given a context node $v$, a location step selects a set of the nodes in the tree that are accessible from $v$ via the step's {\em axis}. This set of nodes provides the context nodes for the next step, which is evaluated for each such node in turn. The result of the location step is the set of nodes accessible from any of its input context nodes, sorted in document order.

The {\em preorder rank} $pre(v)$ of a node $v$ is the index of $v$ in the list of all nodes in the tree that are visited in the (depth-first, left-to-right) preorder traversal of the tree; this order is the document order. Similarly, the {\em postorder rank} $post(v)$ of $v$ is its index in the list of all nodes in the tree that are visited in the (depth-first, left-to-right) postorder tree traversal. We can use the pre/post-order ranks of nodes to define the main axes \texttt{descendant}, \texttt{ancestor}, \texttt{following}, and \texttt{preceding}~\cite{Grust:2002}. Given two nodes $v$ and $v'$ in the tree, the four axes are defined using the pre/post two-dimensional plane:
\begin{itemize}
\item $v'$ is a descendant of $v$ or equivalently $v$ is an ancestor of $v'$
\[
\text{iff } pre(v)<pre(v')\wedge post(v')<post(v)
\]
\item $v'$ follows $v$ or equivalently $v$ precedes $v'$
\[
\text{iff } pre(v)<pre(v')\wedge post(v)<post(v')
\]
\end{itemize}

The remaining axes \texttt{parent}, \texttt{child}, \texttt{following-sibling}, and \texttt{preceding-sibling} are restrictions of the four main axes, where we also use the parent information $par$ for each node:
\begin{itemize}
\item $v'$ is a child of $v$ or equivalently $v$ is a parent of $v'$
\[
\text{iff } v = par(v')
\]
\item $v'$ is a following sibling of $v$ or equivalently $v$ is a preceding sibling of $v'$
\[
\text{iff } pre(v)<pre(v')\wedge post(v)<post(v')\wedge par(v) = par(v')
\]
\end{itemize}

We follow the standard approach to reformulate XPath evaluation in the relational domain~\cite{Grust:2002}. We represent the document by a factor $\bm G$ in the Boolean semiring with schema $(pre,post,par,tag)$. For each node in the tree there is one tuple in $\bm G$ with $pre$ and $post$ ranks, label $tag$, and preorder rank $par$ of the parent node. A query with $n$ location steps is mapped to an $\faqai$ expression $Q$ that is a join of $n+1$ copies of $\bm G$ where the join conditions are the inequalities encoding the axes of the $n$ steps. The first copy $\bm G_0$ is for the initial context node(s). The axis of the $i$-th step is translated into the conjunction of inequalities between pre/post rank variables of the copies $\bm G_{i-1}$ and $\bm G_i$. The query $Q$ has one free variable: This is the preorder rank variable from the copy of $\bm G$ corresponding to the location step that selects the result nodes.

\begin{example}
The Core XPath query
\begin{align*}
    v/\text{descendant}::a[\text{descendant}::c]/\text{following}::b
\end{align*}
selects all $b$-labeled nodes following $a$-labeled nodes that are descendants of the given context node $v$ and that have at least one $c$-labeled descendant node. The steps in the above textual representation of the query are separated by \text{/}. The brackets \text{[ ]} delimit a condition on the selection of the $a$-labeled nodes. We can reformulate this query in $\faqai$ over the Boolean semiring as follows:
\begin{align*}
    Q(pre_b)\leftarrow
    &\quad\bm G_v(pre_v,post_v,p_v,tag_v)\wedge \bm G_a(pre_a,post_a,p_a,\text{'a'}) \wedge\\
    &\quad\bm G_c(pre_c,post_c,p_c,\text{'c'})\wedge \bm G_b(pre_b,post_b,p_b,\text{'b'}) \wedge\\
    &\quad pre_v < pre_a\wedge post_a < post_v \wedge \hspace*{0.35em}\text{// a is descendant of v}\\
    &\quad pre_a < pre_c \wedge post_c < post_a \wedge \hspace*{0.55em}\text{// c is descendant of a}\\
    &\quad pre_a < pre_b \wedge post_a < post_b  \hspace*{1.5em}\text{// b is following a} \qed
\end{align*}
\end{example}
The hypergraph of a relational encoding of a Core XPath query has one skeleton hyperedge for each copy of the document factor and one ligament edge for each pair of inequalities over two of these copies. Any two skeleton hyperedges may only have one node, i.e., query variable, in common to express the parent/child or sibling relationship between their corresponding steps. This hypergraph admits a trivial relaxed tree decomposition, which mirrors the tree structure of the query. In particular, there is one bag of the decomposition consisting of the variables of each copy of the document factor. Each ligament edge represents a pair of inequalities over variables of two neighboring bags. The running intersection property holds since the equalities are by construction only over variables from neighboring bags.

It is known that the time complexity of answering a Core XPath query $Q$ with $n$ location steps over an XML document $\bm G$ is $O(n\cdot|\bm G|)$ (Theorem 8.5~\cite{GKP:XPath:2002}; it assumes the document factor sorted). We can show a linearithmic time complexity result using our $\faqai$ reformulation of Core XPath queries and the trivial relaxed tree decomposition.

\nop{
; a more refined complexity analysis shows that Core XPath is LOGCFL-complete and without conditions it is complete for nondeterministic logarithmic space~\cite{GKP:XPath:2003}
}

\begin{proposition}
For any Core XPath query $Q$ with $n$ location steps and XML document $\bm G$, the query answer can be computed in time $O(n\cdot|\bm G|\cdot\log|\bm G|)$.
\end{proposition}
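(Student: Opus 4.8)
The plan is to derive the statement directly from Theorem~\ref{thm:relaxed:faqw}, applied to the $\faqai$ reformulation $Q$ of the Core XPath query together with the trivial relaxed tree decomposition that mirrors the location-step tree. First I would fix the encoding precisely: $Q$ is a join of $n+1$ copies $\bm G_0,\dots,\bm G_n$ of the document factor $\bm G$, each on a fresh tuple of rank variables; these are the skeleton edges, i.e. $\calE_s = \calE_{\not\infty}$, each of size $|\bm G|$. Each of the $n$ axes becomes one or two ligament edges over pre/post rank variables of two copies that are adjacent in the step tree. Crucially, I would choose the encoding so that the equalities hidden in some axes are absorbed via shared variables rather than as extra inequalities: for a parent/child axis, identify the preorder-rank variable of one copy with the parent-rank variable of the other (no ligament edge at all); for a sibling axis, share the parent-rank variable between the two copies, leaving only the two pre/pre and post/post inequalities. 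With these choices, every axis contributes at most two ligament edges, each contained in the union of the variable sets of the corresponding pair of adjacent copies.

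Next I would exhibit the relaxed tree decomposition $(T,\chi)$: take $T$ isomorphic to the step tree, with one node $t_i$ per copy $\bm G_i$ and $\chi(t_i)$ the variable set of $\bm G_i$, and verify the conditions of Definition~\ref{defn:relaxed:td}. Property (a), running intersection, holds because each variable occurs in at most one bag, except for a shared parent-rank variable, which occurs in exactly two adjacent bags; property (b) holds because each skeleton edge is literally one bag; property (c) holds because, by construction, each ligament edge lies inside the union of two adjacent bags. The decomposition is also $F$-connex: the single free variable (the preorder rank of the result copy) lies in one bag, which is a connected subtree. Since each bag is covered by a single finite relation, $\rho^*_{\calE_{\not\infty}}(\chi(t_i)) = 1$, hence $\faqw_\ell(Q) = 1$, and the quantity $k$ in Theorem~\ref{thm:relaxed:faqw} (the maximum number of ligament edges covered by a pair of adjacent bags) is at most $2$.

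Applying Theorem~\ref{thm:relaxed:faqw} then yields that $Q$ can be answered in time $O(N^{\faqw_\ell(Q)}(\log N)^{\max(k-1,1)} + |Q|) = O(|\bm G|\log|\bm G|)$ in data complexity, using $N = |\bm G|$ and $|Q| \le |\bm G|$ (the answer is a set of preorder ranks). To recover the explicit factor $n$, I would unfold the algorithm behind Theorem~\ref{thm:relaxed:faqw} rather than invoke it as a black box: it materializes one near-linear-size factor $\Phi_{t_i}$ per bag (a semijoin-reduced copy of $\bm G$), $O(n)$ of them, and then eliminates variables leaf-to-root, each elimination being a single application of Lemma~\ref{lmm:two:bags} to a two-bag subquery with $k \le 2$ inequalities, costing $O(|\bm G|(\log|\bm G|)^{\max(k-1,1)}) = O(|\bm G|\log|\bm G|)$, with $O(n)$ such steps. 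Summing over all steps gives the claimed bound $O(n\cdot|\bm G|\cdot\log|\bm G|)$.

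The main obstacle is not the complexity accounting but setting up the encoding so that the step tree really is a relaxed tree decomposition with $k \le 2$: one must handle the condition branches of the step tree carefully and, above all, absorb the equality constraints of the parent/child and sibling axes through shared variables rather than as additional ligament inequalities — otherwise $k$ increases and an extra $\log$ factor appears, breaking the stated bound. A secondary, more mechanical point is that Theorem~\ref{thm:relaxed:faqw} is phrased in data complexity with constant query size, so extracting the linear dependence on the number $n$ of location steps requires re-reading its proof to count the $O(n)$ bag materializations and $O(n)$ elimination steps explicitly.
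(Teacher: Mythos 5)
Your proposal is correct and follows essentially the same route as the paper: the paper also uses the trivial relaxed tree decomposition mirroring the location-step tree (one bag per document copy, equalities absorbed by shared variables, the two axis inequalities as ligament edges between adjacent bags) and eliminates leaf bags bottom-up, each elimination being a two-bag subquery solved in $O(|\bm G|\log|\bm G|)$ time via the machinery of Theorem~\ref{thm:relaxed:faqw}, for $n$ steps total. The only cosmetic difference is that the paper spells out the per-step computation directly (sorting on pre/post so that matching tuples form a contiguous range) rather than citing Lemma~\ref{lmm:two:bags} with $k\le 2$, which yields the same bound.
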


\begin{proof}
Let $\varphi$ be the $\faqai$ reformulation of $Q$ and $F$ the factor representing the XML document $\bm G$.
There is a one-to-one correspondence between the trivial relaxed tree decomposition and the XPath query, with one bag per location step.
Let $n$ be the number of location steps in $Q$, or equivalently the number of bags in the tree decomposition.
We consider this trivial tree decomposition and choose its root as the bag corresponding to the location step that selects the answer node set. Our evaluation algorithm proceeds in a bottom-up left-to-right traversal of the tree decomposition and eliminates one bag at a time.

We index the bags and their corresponding factors in this traversal order. The first factor to eliminate is then denoted by $F_1$ while the last factor, which corresponds to the location step selecting the answer node set, is denoted by $F_n$.

We initially create factors $S_j$ that are copies of factors $F_j$ corresponding to leaf bags in the tree. Consider now two factors $S_j$ and $F_i$ corresponding to a leaf bag and respectively to its parent bag. Let $\phi_{i,j}$ be the conjunction of inequalities defining the axis relationship between the location steps corresponding to these bags. We then compute a new factor $S_i$ that consists of those tuples in $F_i$ that join with some tuples in $S_j$. This is expressed in $\faqai$ over the Boolean semiring:
\begin{align*}
S_i(pre_i,post_i,p_i,t_i) \quad\leftarrow\quad F_i(pre_i,post_i,p_i,t_i)\wedge S_j(pre_j,post_j,p_j,t_j) \wedge \phi_{i,j}
\end{align*}
The conjunction $\phi_{i,j}$ only has two inequalities on variables between the two bags. Computing $S_i$ takes time $O(|F|\log |F|)$ following the algorithm from the proof of Theorem~\ref{thm:relaxed:faqw}. We can sort both $F_i$ and $S_j$ in ascending order on the preorder column and in descending order on the postorder column. For each tuple $t$ in $F_i$, the tuples in $S_j$ that join with $t$ form a contiguous range in $S_j$. To assert whether $t$ is in $S_i$, it suffices to check that this range is not empty. There are $n$ such steps and $|F|=|F_i|=|\bm G|$, with an overall time complexity of $O(n \cdot |\bm G|\log |\bm G|)$.
\end{proof}

\nop{
This work proposes an index structure, the XPath accelerator,
that can completely live inside a relational database
system

In a preorder traversal, a tree node v is visited and assigned
its preorder rank pre(v) before its children are recursively
traversed from left to right.

A postorder traversal is the dual of preorder traversal: a
node v is assigned its postorder rank post(v) after all its
children have been traversed from left to right.

, one can use pre(v) and
post(v) to efficiently characterize the descendants v ~ of v.

v ~ is a descendant of v
pre(v) < pre(v') ^ post(v') < post(v)

This characterizes the descendant axis of context node v,
but we can use pre(v) and post(v) to characterize all four
major axes equally simple.

For context node v, axes ancestor-or-self and descendant-or-self
simply add v to the ancestor or descendant
regions, respectively. Node v is easily identified in the plane
since its preorder rank pre(v) is unique. For axes followings
ibling and pre ceding- s ibl ing it is sufficient to keep track
of the parent's preorder rank par(v) for each node v (siblings
share the same parent), par(v) readily characterizes
axes child and parent, too.

Each node v is represented
by its 5-dimensional descriptor
desc(v) : (pre(v), post(v), par(v), att(v), tag(v) .

Evaluating a step along a major axis thus amounts to
respond to a rectangular region query in the pre/post plane.

Further optimizations to shrink the search space on the pre/post plane.

Algorithm in \cite{GKP:XPath:2002}:  a linear-time algorithm (in both data
and query size) for a practically useful fragment of
XPath, which we will call Core XPath:

Theorem 8.5 Core XPath queries can be evaluated
in time $O(|\bm G| \cdot |Q|)$, where |\bm G| is the size of the data
and |Q| is the size of the query.

-- implicit assumption that the data is sorted.

More refined complexity analysis in \cite{GKP:XPath:2003}: Core XPath is LOGCFL-complete; without conditions: complete for nondeterministic logarithmic space.
}

\subsection{Probabilistic Queries with Inequalities}

The problem of query evaluation in probabilistic databases is \#P-hard for general queries and probabilistic database formalisms~\cite{SORK:PDB:2011}. Extensive prior work focused on charting the tractability frontier of this problem, with positive results for several classes of queries on so-called tuple-independent probabilistic databases. We discuss here one such class of queries with inequality joins called IQ~\cite{OH09}.

A tuple-independent probabilistic database is a database where each tuple $t$ is associated with a Boolean random variable $v(t)$ that is independent of the other tuples in the database. This is the database formalism of choice for studies on query tractability since inference is hard already for trivial queries on more expressive probabilistic database formalisms~\cite{SORK:PDB:2011}.

FAQ factors naturally capture tuple-independent probabilistic databases: A tuple-independent probabilistic relation $R$ is a factor that maps each tuple $t$ in $R$ to the probability that the associated random variable $v(t)$ is true.

We next define the class IQ of inequality queries and later show how to recover the linearithmic time complexity for their inference.

\begin{definition}[adapted from Definitions 3.1, 3.2~\cite{OH09}]
Let a hypergraph $\calH=(\calV=[n], \calE_s \cup \calE_\ell)$, where
$\calE_s$ and $\calE_\ell$ are disjoint,
$\calE_s$ consists of pairwise disjoint sets,
$\calE_\ell$ consists of sets $\{i,j\}$
for which there is a vector $c_{i,j}\in\{[1,-1]^{\text{T}},[-1,1]^{\text{T}}\}$,
and
$\forall F\in\calE_s: |(\bigcup_{I\in\calE_\ell} I) \cap F|\leq 1$.
An IQ query has the form
\begin{equation}
Q() \leftarrow \bigwedge_{F\in\calE_s} R_F(\bm X_F)\wedge \bigwedge_{\{i,j\}\in\calE_\ell}
[X_i,X_j]^{\text{T}}\cdot c_{i,j}\leq 0
\label{eq:iq}
\end{equation}
where $(R_F)_{F\in\calE_s}$ are distinct factors.
\qed
\end{definition}

The edges (i.e., binary hyperedges) in $\calE_\ell$ correspond to inequalities of the query variables. These inequalities are restricted so that there is at most one node (query variable) from any hyperedge in $\calE_s$. Inequalities on variables of the same factor are not in $\calE_\ell$; they can be computed trivially in a pre-processing step.

The inequalities may only have the form $X_i\leq X_j$ or $X_j\leq X_i$.
They induce an {\em inequality graph} where $X_i$ is a parent of $X_j$ if $X_i\leq X_j$.
This graph can be minimized by removing edges corresponding to redundant inequalities implied by other inequalities~\cite{IO:INEQSAT:1997}.  Each graph node thus corresponds to precisely one factor. We categorize the IQ queries based on the structural complexity of their inequality graphs into (forests of) paths, trees, and graphs.

\begin{example}\label{ex:iq-queries}
Consider the following IQ queries:
\begin{align*}
Q_1() &\rightarrow R(A)\wedge S(B)\wedge T(C)\wedge A\leq B\wedge B\leq C\\
Q_2() &\rightarrow R(A)\wedge S(B)\wedge T(C)\wedge A\leq B\wedge A\leq C
\end{align*}
The inequalities form a path in $Q_1$ and a tree in $Q_2$.
\end{example}

The probability a query over a probabilistic database $\bm I$ is the probability of its {\em lineage}~\cite{SORK:PDB:2011}. The lineage is a propositional formula over the random variables associated with the input tuples. It is equivalent to the disjunction of all possible derivations of the query answer from the input tuples.

\begin{example}\label{ex:iq-lineage}
Consider the factors $R$, $S$, $T$, where $r_i$, $s_j$, $t_k$ denote the variables associated with  the tuples in these factors and for a random variable $a$, $p_a$ denotes the probability that $a=\text{true}$:

\begin{center}
\begin{small}
\begin{tabular}{ccc}
\begin{tabular}{c|c|c}
$R$ & $A$ & \\\hline
$r_1$  & 1 & $p_{r_1}$\\
$r_2$  & 2 & $p_{r_2}$\\
$r_3$  & 3 & $p_{r_3}$
\end{tabular}
&
\begin{tabular}{c|c|c}
$S$ & $B$ & \\\hline
$s_1$  & 2 & $p_{s_1}$\\
$s_2$  & 3 & $p_{s_2}$\\
$s_3$  & 4 & $p_{s_3}$
\end{tabular}
&
\begin{tabular}{c|c|c}
$T$ & $C$ & \\\hline
$t_1$  & 3 & $p_{t_1}$\\
$t_2$  & 4 & $p_{t_2}$\\
$t_3$  & 5 & $p_{t_3}$
\end{tabular}
\end{tabular}
\end{small}
\end{center}

The lineage of $Q_1$ and $Q_2$ over these factors is:
\begin{center}
\begin{minipage}{0.3\textwidth}
\begin{align*}
& r_1 [s_1(t_1+t_2+t_3) + s_2(t_2+t_3)+s_3t_3]+\\
& r_2 [\phantom{s_1(t_1+t_2+t_3) +\;\;} s_2(t_2+t_3)+s_3t_3]+\\
& \underbrace{r_3[\phantom{s_1(t_1+t_2+t_3) + s_2(t_2+t_3)+\;\;} s_3t_3]}_{\text{lineage of } Q_1}
\end{align*}
\end{minipage}
\begin{minipage}{0.2\textwidth}
\begin{align*}
& r_1 (s_1 + s_2 + s_3)(t_1+t_2+t_3)+\\
& r_2 (\phantom{s_1 +\;\; } s_2 + s_3)(t_1+t_2+t_3)+\\
& \underbrace{r_3 (\phantom{s_1 + s_2 +\;\;} s_3)(\phantom{t_1+\;\;} t_2+t_3)}_{\text{lineage of } Q_2}
\end{align*}
\end{minipage}
\end{center}
\end{example}

Prior work (Theorem 4.7~\cite{OH09}) showed that the probability of an IQ query
$Q$ with an inequality tree with $k$ nodes over a tuple-independent
probabilistic database of size $N$ can be computed in time $O(2^k\cdot N\log
N)$ using a construction of the query lineage in an Ordered Binary Decision
Diagram (OBDD). We show next that a variant of the algorithm in the proof of
Lemma~\ref{lmm:two:bags}, adapted from counting to {\em weighted counting},
i.e., probability computation, can compute the probability in time $O(N\log
N)$, thus shaving off an exponential factor in the number of inequalities.

We first explain this result using two examples, which draw on a crucial observation made in prior work~\cite{OH09}: The lineage of IQ queries has a chain structure: For each factor, there is an order on its random variables that defines a chain of logical implications between their cofactors in the lineage: the cofactor of the first variable implies the cofactor of the second variable, which implies the cofactor of the third variable, and so on.

\begin{example}\label{ex:iq-inference}
We continue Example~\ref{ex:iq-lineage}. The lineage of $Q_1$ and $Q_2$ is arranged so that the chain structure becomes apparent. This structure allows for an equivalent rewriting of the lineage~\cite{OH09}, as shown next for the lineage $\phi_{r_1}$ of $Q_1$ (for a random variable $a$, $\overline{a}$ denotes its negation):
\begin{align*}
\phi_{r_i} &= r_i \phi_{s_i} + \overline{r_i}\phi_{r_{i+1}}, \forall i\in[3];\hspace*{2em} \phi_{r_4} = \mbox{false}\\
\phi_{s_j} &= s_j \phi_{t_j} + \overline{s_j}\phi_{s_{j+1}}, \forall j\in[3];\hspace*{2em} \phi_{s_4} = \mbox{false}\\
\phi_{t_k} &= t_k \phantom{\phi_{t_j}} + \overline{t_k}\phi_{t_{k+1}}, \forall k\in[3];\hspace*{1.6em} \phi_{t_4} = \mbox{false}
\end{align*}

In disjunctive normal form, the lineage of $Q_1$ may have size cubic in the size of the database. The factorization of the lineage in Example~\ref{ex:iq-lineage} lowers the size to quadratic. The above rewriting further reduces the size to linear. The rewritten form can be read directly from the input factors following the structure of the inequality tree.

Since the above expressions are sums of two mutually exclusive formulas, their probabilities are the sums of the probabilities of their respective two formulas. Their probabilities can be computed in one bottom-up right-to-left  pass: First for $\phi_{t_k}$ in decreasing order of $k$, then for $\phi_{s_j}$ in decreasing order of $j$, and finally for $\phi_{r_i}$ in decreasing order of $i$. We extend the probability function $p$ from input random variables to formulas over these variables. The probability of $Q_1$'s lineage, which is also the probability of $Q_1$, is ($\forall i, j, k \in [3]$):
\begin{align*}
p(\phi_{r_i}) &= p(r_i)\cdot p(\phi_{s_i}) + [1- p(r_i)]\cdot p(\phi_{r_{i+1}}) \\
p(\phi_{s_j}) &= p(s_j)\cdot p(\phi_{t_j}) + [1-p(s_j)]\cdot p(\phi_{s_{j+1}}) \\
p(\phi_{t_k}) &= p(t_k) \phantom{\cdot p(\phi_{t_j})\;} + [1-p(t_k)]\cdot p(\phi_{t_{k+1}})
\end{align*}
Since there are no variables $r_4$, $s_4$, and $t_4$, we use $p(\phi_{r_4}) = p(\phi_{s_4}) = p(\phi_{t_4}) = 0$. This computation corresponds to a decomposition of $\phi_{r_1}$ that can be captured by a linear-size OBDD~\cite{OH09}.

The probability of the lineage $\psi_{r_1}$ of $Q_2$ is computed similarly ($\forall i, j, k \in [3]$):
\begin{align*}
p(\psi_{r_i}) &= p(r_i)\cdot p(\psi_{s_i})\cdot p(\psi_{t_i}) + [1- p(r_i)]\cdot p(\psi_{r_{i+1}})\\
p(\psi_{s_j}) &= p(s_j)  + [1-p(s_j)]\cdot p(\psi_{s_{j+1}})\\
p(\psi_{t_k}) &= p(t_k) + [1-p(t_k)]\cdot p(\psi_{t_{k+1}})
\end{align*}
This computation would correspond to a decomposition of $\psi_{r_1}$ that can be captured by an OBDD with several nodes for a random variable from $S$ and $T$; in general, such an OBDD would have a size linear in $N$ but with an additional exponential factor in the size of the inequality tree due to the inability to represent succinctly the products of lineage over $T$ and of lineage over $S$~\cite{OH09}. (OBDDs with AND nodes can capture such products without this exponential factor, though in this article we do not use them.)\qed
\end{example}

\begin{proposition}
Given a tuple-independent probabilistic database ${\bm I}$ of size $N$ and an IQ query $Q$ with a forest of inequality trees, we can compute the probability of $Q$ over ${\bm I}$ in time $O(N\log N)$.
\end{proposition}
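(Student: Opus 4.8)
The plan is to recast the probability computation as a weighted‑counting version of the leaf‑elimination procedure behind Theorem~\ref{thm:relaxed:faqw}, run on the trivial relaxed tree decomposition of the IQ query (one bag per factor, the decomposition tree being the inequality forest, each inequality covered by the two neighbouring bags it connects). First I would reduce to a single inequality tree: if the inequality forest has connected components $c_1,\dots,c_r$, then $Q$ is satisfiable iff each component‑restricted sub‑query $Q_{c_i}$ is, and the $Q_{c_i}$ range over pairwise disjoint, hence independent, sets of input tuples, so $\Pr[Q]=\prod_{i=1}^{r}\Pr[Q_{c_i}]$. A factor carrying no inequality variable contributes an independent factor $1-\prod_{t\in R_F}(1-p_t)$, and inequalities internal to one factor are discharged in an $O(N)$ preprocessing pass; hence within a component every tree node is a distinct factor over one variable and every tree edge is an inequality $X_u\le X_v$ or $X_v\le X_u$.

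Fix a component, root its inequality tree arbitrarily, and process it bottom‑up. For a node $u$ with factor $R_u$ over variable $X_u$, the goal is a step function $g_u$ with at most $|R_u|$ breakpoints (a sorted array) such that $g_u(\tau)$ is the probability that the sub‑lineage of the subtree rooted at $u$ is satisfiable when $X_u$ is constrained, through $u$'s parent edge, by the threshold $\tau$. For a leaf this is a prefix‑ or suffix‑product $1-\prod_x(1-p_x)$ over $R_u$ sorted on $X_u$, according to the direction of the parent edge. For an internal node I would exploit the \emph{chain structure} of IQ lineages from~\cite{OH09} (illustrated in Example~\ref{ex:iq-inference}): ordering $R_u$ by $X_u$ yields a totally ordered family of cofactors, so that the only information the children's subtrees need about $R_u$ is a single monotone threshold, and the only information the rest of the tree needs about the sweep over $R_u$'s tuples is a single accumulated probability. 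Combining the already‑computed child step functions $g_c$ into $g_u$ is then a weighted dominance aggregation — the $k=1$ counting primitive of Lemma~\ref{lmm:two:bags}, but with ``number of dominating points'' replaced by a probability accumulated from the factors $p_x$ and $(1-p_x)$: independence of distinct input tuples turns ORs into $1-\prod(1-p)$ and ANDs across sibling subtrees into products, so the merge is one sorted sweep of $R_u$ with $O(\log N)$ binary‑search lookups into each child's array. At the root there is no parent constraint and $g_{\text{root}}$ at the trivial threshold equals $\Pr[Q_{c_i}]$.

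For the running time, sorting each of the $O(1)$ factors on its inequality variable costs $O(N\log N)$; node $u$ is swept once, with $O(\log N)$ work per tuple per incident tree edge, so the total is $\sum_u |R_u|\cdot\deg(u)\cdot O(\log N)\le N\cdot\big(\sum_u\deg(u)\big)\cdot O(\log N)=O(N\log N)$ since $\sum_u\deg(u)=O(1)$. Adding the $O(N)$ contributions of inequality‑free factors and the final product over components gives the claimed $O(N\log N)$ bound, with no $2^{k}$ factor: we never materialise the lineage or an OBDD, only the numeric step functions.

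The step I expect to be the main obstacle is proving this DP correct, i.e.\ that the probability of each intermediate sub‑lineage — a disjunction that, written out, may have size polynomial in $N$ — is exactly the weighted aggregation computed. This rests on two structural facts. The \emph{cofactor‑chain property}: because every inequality‑variable domain is totally ordered and, by the IQ definition, each factor contributes at most one inequality variable, the cofactors of a factor's tuples read in value order form a linear chain of implications; this is what lets the OR over a factor's tuples be accumulated by a single sweep with the $(1-p_x)$ ``compensation'' that keeps the contributions disjoint, mirroring the distributivity/elimination step of Theorem~\ref{thm:relaxed:faqw} but over the (non‑distributive) independent‑OR of probabilities — which is precisely why one cannot simply invoke Theorem~\ref{thm:relaxed:faqw} over a ``probability semiring''. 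The \emph{forest hypothesis}: sibling subtrees involve disjoint factors and hence independent tuples, so their sub‑lineage probabilities genuinely multiply; for a general inequality graph this fails and exact inference is \#P‑hard, so the forest restriction is essential. Carefully threading the orientation of each edge (which makes a threshold a ``$\le$'' or ``$\ge$'' half‑line, hence a prefix or a suffix in the sorted order) through the merge, and handling nodes that are local ``peaks'' or ``valleys'' of the inequality orientation, is where the bookkeeping is delicate.
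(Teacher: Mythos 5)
Your proposal is correct and follows essentially the same route as the paper's proof: reduce each factor to a unary (per inequality-variable) probability table by OR-aggregating away the other variables, fold inequality-free factors into an independent constant, and then sweep the inequality tree bottom-up, maintaining for each node a sorted step function combined with its children via least-upper-bound lookups, using the mutually exclusive chain (Shannon-expansion) decomposition of the lineage for correctness and multiplying across forest components. This is precisely the paper's adaptation of the Lemma~\ref{lmm:two:bags} primitive from counting to weighted counting, with the same $O(N\log N)$ accounting.
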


\begin{proof}
We next present the inference algorithm for a given IQ query $Q$ with an
inequality tree. It uses a minor variant of the algorithm from the proof of
Lemma~\ref{lmm:two:bags} to compute a functional aggregate query with additive
inequalities over two factors.

We first reduce the input database ${\bm I}$ to a simplified database of unary and nullary factors that is constructed by aggregating away all query variables that do not contribute to inequalities.

Let us partition $\calE_s$ into the hyperedges $\calE_1$ that contain query variables involved in inequalities and all other hyperedges $\calE_2$.

We reduce each factor $(R_F)_{F\in\calE_1}$ with a query variable $X_i$ occurring in inequalities to a unary factor $S_{\{i\}}$ by aggregating away all other query variables. For an $X_i$-value $x_i$, $S_{\{i\}}(x_i)$ gives the probability of the disjunction of the independent random variables associated with the tuples in $R_F$ that have the $X_i$-value $x_i$:
\begin{align*}
S_{\{i\}}(x_i) = 1 - \prod_{{\bm x}\in\dom({\bm X}_F-\{i\})} \big(1 - R_F({\bm x}_F)\big)
\end{align*}
We also reduce all factors $(R_F)_{F\in\calE_2}$ with no query variable occurring in inequalities to one nullary factor $S_\emptyset$ by aggregating away all query variables. $S_\emptyset()$ gives the probability of the conjunction of all factors without query variables in inequalities:
\begin{align*}
S_\emptyset() = \prod_{F\in\calE_2l} \big[ 1 - \prod_{{\bm x}\in\dom({\bm X}_F)} \big(1 - R_F({\bm x}_F)\big)\big]
\end{align*}

This simplification reduces the set $\calE_s$ of hyperedges to a new set $\calE_u$ of unary edges, one per query variable in the inequalities, and one nullary edge: $\calE_u=\{\emptyset\}\cup\bigcup_{\{i,j\}\in\calE_\ell}\{\{i\},\{j\}\}$. The simplification does not affect the inference problem: The probability of $Q$ is the same as the probability of the query $Q'$ over $\calE_u\cup\calE_\ell$:
\begin{equation}
Q'() \leftarrow \bigwedge_{F\in\calE_u} S_F(\bm X_F)\wedge
\bigwedge_{\{i,j\}\in\calE_\ell} [X_i,  X_j]^{\text{T}}\cdot c_{i,j} \leq 0
\end{equation}
The hypergraph of $Q'$ trivially admits the relaxed tree decomposition whose structure is that of the inequality tree of $Q'$ (and of $Q$): The skeleton edges are $\calE_u$ and the ligament edges are $\calE_\ell$.

The inference algorithm traverses the inequality tree bottom-up and eliminates one level of
query variables at a time. For a variable $X_p$ with children $X_{c_1},\ldots,X_{c_k}$, it computes recursively the factor
\begin{equation*}
    Q_p(x_p) \quad=\quad
    S_p(x_p)\cdot \prod_{i\in[k]} S_{c_i}(\text{lub}_i(x_p)) + (1 - S_p(x_p))\cdot Q_p(\text{lsub}_p(x_p))
\end{equation*}
We use $\text{lub}_i(x_p)$ to find the value in $S_{c_i}$ that is the least upper bound of $x_p$ and $\text{lsub}_p(x_p)$ to find the value in $Q_{p}$ that is the least strict upper bound of $x_p$, i.e., the next value in ascending order. The definition of $Q_p$ is recursive: It first computes the probability for $x_p$ and then for its previous values. In case $X_p$ has no children, i.e., $k=0$ the product over $S_{c_i}$ is one.

The probability of $Q$ is then the product of $S_\emptyset$ and the probability of the first tuple in the factor of the root variable. If $Q$ has a forest of inequality trees, then the subqueries for the trees would be disconnected and thus correspond to independent random variables. The probability of $Q$ is then the product of the probabilities of the independent subqueries.
\end{proof}

The case of inequality graphs can be reduced to that of inequality trees by variable elimination. The elimination of a variable $X_i$ repeatedly replaces it in the query by a value from its domain. The inequality graph of this residual query has no node for $X_i$ and none of its edges. By removing $k$ variables to obtain an inequality tree, the complexity of computing the query probability increases by at most the product of the sizes of the factors having these $k$ variables.

\section{Omitted Details about Tree Decompositions}
\label{appendix:td}

Here we prove Proposition~\ref{prop:non-redundant-td}, which is re-stated below.
\bprop[Re-statement of Proposition~\ref{prop:non-redundant-td}]
For every tree decomposition $(T, \chi)$ of a query $Q$, there exists a non-redundant
tree decomposition $(T', \chi')$ of $Q$ that satisfies
\[\{\chi'(t) \suchthat t \in V(T')\} \subseteq \{\chi(t) \suchthat t \in V(T)\}. \]
Moreover, if $(T, \chi)$ is $F$-connex, then $(T', \chi')$ can be chosen to be $F$-connex as well.
\label{prop:non-redundant-td:app}
\eprop
\bp
Given a redundant tree decomposition $(T, \chi)$, by Definition~\ref{defn:non-redundant-td}
there must exist $t_1 \neq t_2 \in V(T)$
where $\chi(t_1) \subseteq \chi(t_2)$. We claim that $t_1$ and $t_2$ can be chosen to be {\em adjacent}
in the tree $T$.
In particular, if $t_1$ and $t_2$ from Definition~\ref{defn:non-redundant-td} are already adjacent,
we are done. Otherwise, consider the node $t_1'$ that is adjacent to $t_1$ on the path from $t_1$
to $t_2$ in the tree $T$. By the running intersection property, we have $\chi(t_1)\subseteq\chi(t_1')$.
Therefore if we replace $t_2$ with $t_1'$, we obtain two new adjacent nodes $t_1$ and $t_2$
satisfying $\chi(t_1) \subseteq \chi(t_2)$.

Now we modify the tree decomposition $(T, \chi)$ by removing $t_1$ from $T$ and connecting all
the neighbors of $t_1$ (other than $t_2$) directly to $t_2$.
It is straightforward to verify that this modification results in a valid tree decomposition
$(T', \chi')$. Moreover this modification maintains the $F$-connex property of the original tree
decomposition, if it was $F$-connex in the first place.
If the new tree decomposition $(T', \chi')$ is non-redundant, we are done.
Otherwise, we inductively repeat the above process by finding a new adjacent pair $t_1\neq t_2$
satisfying $\chi(t_1) \subseteq \chi(t_2)$.
(This induction is over the number of bags in the tree decomposition since each time we are
dropping one bag.)
\ep

\section{The $\InsideOut$ Algorithm}
\label{app:insideout}
In this section, we aim to provide a proof sketch for Theorem~\ref{thm:insideout}.
We refer the reader to~\cite{DBLP:conf/pods/KhamisNR16} and its extended version~\cite{faq-arxiv}
for more details.
The proof also sheds light on many omitted technical details in the proofs of Theorems~\ref{thm:relaxed:faqw}
and~\ref{thm:subfaqw:boolean} including how to generalize theorems from the case of no
free variables $F = \emptyset$ to the case of an arbitrary set of free variables.

\begin{theorem}[Re-statement of Theorem~\ref{thm:insideout}]
$\InsideOut$ answers query~\eqref{eqn:faq} in time $O(N^{\faqw(Q)}\cdot\log N+|Q|)$.
\label{thm:insideout:app}
\end{theorem}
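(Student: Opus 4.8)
The plan is to exhibit a variable-elimination order induced by an optimal $F$-connex tree decomposition and to show that every intermediate factor $\InsideOut$ creates has support bounded by $N^{\faqw(Q)}$, so that each step is carried out by a worst-case optimal join algorithm within the claimed budget. First I would fix, using Proposition~\ref{prop:non-redundant-td} and the equivalent formulation~\eqref{eqn:faqw2}, a non-redundant $F$-connex tree decomposition $(T,\chi)\in\td_F$ attaining the width, i.e. with $\max_{t\in V(T)}\rho^*_{\calE_{\not\infty}}(\chi(t))=\faqw(Q)$. By $F$-connexity there is a connected subtree $V'\subseteq V(T)$ with $\bigcup_{t\in V'}\chi(t)=F$; root $T$ at a node of $V'$ and process it bottom-up, so that every non-free variable is eliminated before any bag in $V'$ is touched, and the running intersection property guarantees a variable eliminated at a leaf occurs in no surviving bag.

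Next I would describe one elimination step together with the indicator-projection device. To eliminate the variables of $\chi(t)\setminus\chi(t')$ at a leaf $t$ with parent $t'$, we form the product of all factors $R_K$ with $K\subseteq\chi(t)$ and then $\bigoplus$ away those variables. To keep the product small we first pre-multiply, for every $K\in\calE_{\not\infty}$ with $K\cap\chi(t)\neq\emptyset$, the $\{\bm 0,\bm 1\}$-valued indicator projection $\pi_{K,\chi(t)}$ (Definition~\ref{defn:indicator:proj}), which leaves the product unchanged by identity~\eqref{eq:indicator:proj:prop}. Hence the tuples over $\chi(t)$ on which the product is nonzero lie inside the join $\Join_{K\cap\chi(t)\neq\emptyset}\overline{\pi}_{K,\chi(t)}$, whose output size is at most $N^{\rho^*_{\calE_{\not\infty}}(\chi(t))}\le N^{\faqw(Q)}$ by the AGM bound~\cite{10.1109/FOCS.2008.43,DBLP:journals/talg/GroheM14}. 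The new factor produced after marginalization is over $\chi(t)\cap\chi(t')\subseteq\chi(t')$, so it is absorbed into $t'$ and the invariant — every surviving factor's support lies inside a current bag, all of which are fractionally covered with cost $\le\faqw(Q)$ by finite input edges — is preserved; induction on the number of remaining variables (base case: a query with no variables) then finishes the correctness argument. This is exactly the skeleton that Theorem~\ref{thm:relaxed:faqw} specializes.

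For the running time: at each of the $O(1)$ elimination steps (query size is constant) the relevant join is computed by a worst-case optimal join algorithm~\cite{DBLP:conf/pods/NgoPRR12,DBLP:conf/icdt/Veldhuizen14} in time $O(N^{\rho^*_{\calE_{\not\infty}}(\chi(t))}\log N)=O(N^{\faqw(Q)}\log N)$, the $\log N$ absorbing the sorting/indexing cost, and the marginalization costs no more. Once all non-free variables are gone we are left with factors over the bags of $V'$; joining them worst-case optimally yields $Q(\bm x_F)$, and since $|\bigcup_{t\in V'}\chi(t)|=|F|$ with these bags covered by finite edges, the extra work is $O(N^{\faqw(Q)}\log N+|Q|)$, the additive $|Q|$ accounting for materializing the answer (when $F=\emptyset$ this phase is trivial and the output is a single semiring element).

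The main obstacle is the exact bookkeeping that keeps the width invariant tight through the whole elimination: one must verify that after folding an intermediate factor into its parent bag the family of ``active'' edges — the original finite inputs plus the newly introduced indicator-projection edges — still fractionally covers every remaining bag with cost at most $\faqw(Q)$, and that indicator projections of intermediate factors (not merely of inputs) are themselves cheap to compute. A secondary subtlety, reused verbatim in the proofs of Theorems~\ref{thm:relaxed:faqw} and~\ref{thm:subfaqw:boolean}, is the reduction of the free-variable case to $F=\emptyset$ via the $F$-connex structure of $(T,\chi)$; the remaining details are in~\cite{DBLP:conf/pods/KhamisNR16,faq-arxiv}.
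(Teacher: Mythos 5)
Your proposal is correct and follows essentially the same route as the paper's own proof in Appendix~\ref{app:insideout}: fix an optimal non-redundant $F$-connex tree decomposition via the dual characterization $\rho^*_{\calE_{\not\infty}}$, root it inside $V'$, and eliminate leaf bags one at a time using indicator projections, a worst-case optimal join per bag bounded by the AGM bound $N^{\faqw(Q)}$, and induction on the number of remaining variables. The only point to tighten is the final free-variable phase: a generic worst-case optimal join over the $V'$-bag factors is not output-sensitive, so the additive $O(|Q|)$ term should instead be obtained as in the paper's Case~2, by continuing the leaf-by-leaf elimination over $V'$ with indicator projections, the invariant $|Q'|\leq|Q|$, and constant-delay reconstruction of the product (a Yannakakis-style argument that your sketch implicitly relies on).
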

\bp
Let $Q$ be an $\faq$-query of the form~\eqref{eqn:faq}
with hypergraph $\calH=(\calV=[n],\calE)$ and free variables $F\subseteq \calV$.
Let $w:=\faqw(Q)$.
By definition of $\faqw(Q)$ from~\eqref{eqn:faqw2}, there must exist an $F$-connex tree decomposition
$(T, \chi) \in \td_F$ where all bags $t\in V(T)$ satisfy
\begin{equation}
    \rho^*_{\calE_{\not\infty}}(\chi(t))\leq w.
    \label{eq:IO:w-bound}
\end{equation}
Moreover by Proposition~\ref{prop:non-redundant-td}, the above tree decomposition $(T, \chi)$
can be assumed to be non-redundant.
By Definition~\ref{defn:F-connex}, there must exist a (possibly empty) subset
$V'\subseteq V(T)$ that forms a connected subtree of $T$ and satisfies
$\bigcup_{t\in V'}\chi(t)=F$.
Fix a root $r$ of the tree decomposition $(T, \chi)$ to be:
\bi
\item either an arbitrary node from $V'$ if $V'$ is not empty,
\item or an arbitrary node from $V(T)$ if $V'$ is empty.
\ei
Based on the above choice of the root $r$, the following holds:
\begin{claim}
If $V' \neq V(T)$, then there must exist a {\em leaf} node
$t_1\in V(T)\setminus V'$.
\label{clm:IO:non-free-leaf}
\end{claim}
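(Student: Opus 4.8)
The plan is to prove Claim~\ref{clm:IO:non-free-leaf} by a purely tree-theoretic argument, ignoring the bag map $\chi$ entirely and using only that $V'$ is a connected subtree of $T$ that contains the chosen root $r$ whenever $V' \neq \emptyset$.

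First I would handle the degenerate case $V' = \emptyset$. Here $r$ was chosen as an arbitrary node of $T$, and since $T$ is a nonempty tree it has at least one leaf $t_1$ (when $|V(T)| = 1$, the unique node itself counts as a leaf); as $V' = \emptyset$ we trivially get $t_1 \in V(T)\setminus V'$, and we are done. For the main case $V' \neq \emptyset$ we have $r \in V'$ by construction, and since $V'\neq V(T)$ the set $V(T)\setminus V'$ is nonempty. Among the nodes of $V(T)\setminus V'$, pick $t_1$ of maximum depth, i.e., maximum distance from $r$ in $T$. Since $r\in V'$ while $t_1\notin V'$, we have $t_1\neq r$, so $t_1$ has a parent in the rooted tree $T$. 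It remains to show that $t_1$ has no children. Suppose, for contradiction, that $c$ is a child of $t_1$. Then $c$ lies strictly deeper than $t_1$, so by the maximality of $t_1$ we must have $c \in V'$. But the unique path from $r$ to $c$ in $T$ runs through $t_1$ (it is the $r$--$t_1$ path extended by the edge $\{t_1,c\}$), and both of its endpoints $r$ and $c$ lie in the connected set $V'$; connectedness of $V'$ then forces $t_1 \in V'$, contradicting $t_1 \notin V'$. Hence $t_1$ has a parent and no children, so it is a leaf of $T$, and $t_1 \in V(T)\setminus V'$, which is exactly what the claim asserts.

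This argument is entirely elementary, so there is no real ``hard part''; the only points requiring a little care are the boundary cases ($V'=\emptyset$, and $T$ consisting of a single vertex) and the observation that ``has a parent and no children in the rooted tree'' is genuinely equivalent to ``is a leaf of $T$'' --- which holds here precisely because $t_1 \neq r$, ruling out the one vertex (the root) at which those two notions could a priori differ.
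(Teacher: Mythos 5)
Your proof is correct and follows essentially the same route as the paper, which simply asserts the claim from the fact that the root $r$ lies in the connected subtree $V'$; your deepest-node-outside-$V'$ argument is just the natural spelling-out of that one-line justification, with the boundary cases ($V'=\emptyset$, single-vertex $T$) handled carefully.
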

If $V'$ is empty, then the above claim holds trivially.
Otherwise, the above claim holds because the root $r$ belongs to the connected subtree
$V'\neq V(T)$.

We recognize two cases:

{\bf Case 1:} $F \neq \calV$.
In this case, $V' \neq V(T)$
(since $F = \bigcup_{t \in V'} \chi(t)$ and
$\calV = \bigcup_{t \in V(T)} \chi(t)$).
By Claim~\ref{clm:IO:non-free-leaf}, let $t_1$ be a leaf node
from $V(T)\setminus V'$, and let $t_2$ be the parent of $t_1$. Let $L=\chi(t_1)$, $U=\chi(t_2)$,
and $M=L\setminus U$.
Because the tree decomposition $(T, \chi)$ is non-redundant
(thanks to Proposition~\ref{prop:non-redundant-td}), we have $M\neq \emptyset$.

\begin{claim}
For any $K\in\calE$ with $K \cap M \neq \emptyset$,
we must have $K \subseteq L$.
\label{clm:IO:rip:leafs}
\end{claim}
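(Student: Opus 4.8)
Here the plan is short, because this is essentially a routine consequence of the two tree-decomposition axioms together with the fact that $t_1$ is a \emph{leaf} whose unique neighbor is $t_2$. I would argue by exhibiting, for each vertex of $K\cap M$, the unique bag in which it can live, and then using edge coverage.

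First I would pick an arbitrary vertex $v \in K \cap M$; such a $v$ exists by the hypothesis $K\cap M\neq\emptyset$. Since $(T,\chi)\in\td_F$ is a tree decomposition of $\calH=(\calV,\calE)$ and $K\in\calE$, the coverage axiom gives some bag $\chi(t^*)$, $t^*\in V(T)$, with $K\subseteq\chi(t^*)$; in particular $v\in\chi(t^*)$. Next I would examine the set $T_v:=\{t\in V(T)\suchthat v\in\chi(t)\}$, which by the running intersection property is a nonempty connected subtree of $T$. Because $v\in M=L\setminus U=\chi(t_1)\setminus\chi(t_2)$, we have $t_1\in T_v$ and $t_2\notin T_v$. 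Since $t_1$ is a leaf of $T$, its only neighbor is $t_2$, so any connected subtree of $T$ that contains $t_1$ but not $t_2$ must equal $\{t_1\}$; hence $T_v=\{t_1\}$, i.e.\ $v$ appears in no bag other than $\chi(t_1)$.

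Combining the two observations, $t^*\in T_v=\{t_1\}$ forces $t^*=t_1$, and therefore $K\subseteq\chi(t^*)=\chi(t_1)=L$, which is the claim. The only point I would double-check is the innocuous one that the chosen $(T,\chi)$ genuinely covers \emph{every} edge $K\in\calE$, including infinite ones; this holds since $\td_F$ consists of tree decompositions of the full hypergraph $\calH$, and the fractional-cover width is measured separately via $\rho^*_{\calE_{\not\infty}}$. There is no real obstacle beyond this bookkeeping. I would then remark why the claim is the one needed downstream: it guarantees that every factor $R_K$ touching the variables $M=L\setminus U$ is entirely contained in the leaf bag $\chi(t_1)$, so that eliminating the variables of $M$ at the leaf $t_1$ (as $\InsideOut$ does) affects only $\Phi_{t_1}$ and is therefore well-defined, exactly mirroring condition (c)/running-intersection used in the proof of Theorem~\ref{thm:relaxed:faqw}.
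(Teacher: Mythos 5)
Your argument is correct and is essentially the paper's own: the paper dispatches the claim in one line by appealing to edge coverage and the running intersection property ("otherwise the running intersection property would break"), and your proof simply spells out that same reasoning, using that $t_1$ is a leaf whose only neighbor is $t_2$ so that any $v\in K\cap M$ can live only in the bag $\chi(t_1)$, forcing the covering bag of $K$ to be $\chi(t_1)=L$. No issues.
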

The above claim holds by the definition of a tree decomposition from  Section~\ref{subsec:width:params}:
Otherwise, the running intersection property would break.

To rewrite query~\eqref{eqn:faq}, we need to utilize the notion of indicator projection from
Definition~\ref{defn:indicator:proj} along with its property given by~\eqref{eq:indicator:proj:prop}.
Query~\eqref{eqn:faq} can be written as:
\begin{eqnarray}
    Q(\bm x_F) &=&
    \bigoplus_{\bm x_{\calV \setminus F}}
    \bigotimes_{K\in\calE} R_K(\bm x_K)\\
    &=&\bigoplus_{\bm x_{(\calV \setminus F)\setminus M}}
    \bigoplus_{\bm x_{M}}
    \left[
    \bigotimes_{\substack{K\in\calE\\K\cap M = \emptyset}} R_K(\bm x_K)
    \otimes
    \bigotimes_{\substack{K\in\calE\\K\cap M \neq \emptyset}} R_K(\bm x_K)
    \right]\\
    \text{(by \eqref{eq:indicator:proj:prop})}&=&\bigoplus_{\bm x_{(\calV \setminus F)\setminus M}}
    \bigoplus_{\bm x_{M}}
    \left[
    \bigotimes_{\substack{K\in\calE\\K\cap M = \emptyset}} R_K(\bm x_K)
    \otimes
    \bigotimes_{\substack{K\in\calE\\K\cap M \neq \emptyset}} R_K(\bm x_K)
    \otimes
    \bigotimes_{\substack{K\in\calE_{\not\infty}\\K\cap L \neq \emptyset}} \pi_{K,L}(\bm x_{K\cap L})
    \right]\\
    &=&\bigoplus_{\bm x_{(\calV \setminus F)\setminus M}}
    \left[
    \bigotimes_{\substack{K\in\calE\\K\cap M = \emptyset}} R_K(\bm x_K)
    \otimes
    \bigoplus_{\bm x_{M}}\underbrace{\left(
    \bigotimes_{\substack{K\in\calE\\K\cap M \neq \emptyset}} R_K(\bm x_K)
    \otimes
    \bigotimes_{\substack{K\in\calE_{\not\infty}\\K\cap L \neq \emptyset}} \pi_{K,L}(\bm x_{K\cap L})
    \right)}_{\text{defined as }\Phi_{t_1}(\bm x_{L})}
    \right]
    \label{eqn:IO:distributivity}
\end{eqnarray}
The last equality above holds because of the distributive property of semirings.
We define the product inside the inner sum $\bigoplus_{\bm x_{M}}$ to be a query
$\Phi_{t_1}(\bm x_{L})$, which is associated with the bag $t_1$.
Note that by Claim~\ref{clm:IO:rip:leafs}, all factors $R_K(\bm x_K)$ and $\pi_{K,L}(\bm x_{K\cap L})$
in this product involve only
variables from $\bm x_L$.

Query $\Phi_{t_1}$ can be computed with the help of worst-case
optimal join algorithms~\cite{DBLP:conf/pods/000118,
DBLP:conf/pods/NgoPRR12, DBLP:conf/icdt/Veldhuizen14}.
In particular, for every $K\in\calE_{\not\infty}$ where $K\cap L \neq \emptyset$,
define $\overline\pi_{K,L}$ to be the {\em support} of the factor $\pi_{K,L}$, i.e.
\[\overline\pi_{K,L}:=
\left\{\bm x_{K\cap L} \in \prod_{v\in K \cap L} \dom(X_v)\suchthat
\pi_{K,L}(\bm x_{K \cap L}) \neq \bm 0
\right\}.\]
$\overline\pi_{K,L}$ can be viewed as a relation over variables $\bm x_{K\cap L}$.
Solving the $\faq$-query $\Phi_{t_1}$ can be reduced to solving the
join query $\overline\Phi_{t_1}$ defined as follows:
\begin{eqnarray}
    \overline\Phi_{t_1}(\bm x_{L}) &:=&
\Join_{\substack{K\in\calE_{\not\infty}\\K\cap L \neq \emptyset}} \overline\pi_{K,L}(\bm x_{K\cap L})
\end{eqnarray}
This is because once we solve the join query $\overline\Phi_{t_1}$,
the $\faq$-query $\Phi_{t_1}$ can be computed as follows:
\begin{eqnarray*}
    \Phi_{t_1}(\bm x_{L}) =
    \begin{cases}
        \bigotimes_{\substack{K\in\calE\\K\cap M \neq \emptyset}} R_K(\bm x_K)
        &\text{if $\bm x_{L}\in \overline\Phi_{t_1}$}\\
        \bm 0&\text{ otherwise}
    \end{cases}
\end{eqnarray*}
where $\overline\Phi_{t_1}$ above denotes the {\em output} of the join query $\overline\Phi_{t_1}$.
The join query $\overline\Phi_{t_1}$ can be computed using a worst-case optimal
join algorithm in time
$O(N^{\rho^*_{\calE_{\not\infty}}(L)}\cdot \log N))$,
which is $O(N^w \cdot \log N)$ by~\eqref{eq:IO:w-bound}.

Once we have computed $\Phi_{t_1}$, we use it to compute $\Psi_{t_1}$ defined as follows:
\begin{eqnarray}
\Psi_{t_1}(\bm x_{L\cap U}) &:=&
\bigoplus_{\bm x_{M}} \Phi_{t_1}(\bm x_{L}).
\end{eqnarray}
The above can be computed by sorting tuples $\bm x_{L}$ that satisfy
$\Phi_{t_1}(\bm x_{L}) \neq \bm 0$
lexicographically based on $(\bm x_{L\cap U}, \bm x_{M})$
so that tuples $\bm x_L$ sharing the same $\bm x_{L\cap U}$-prefix become consecutive.
Then for each distinct $\bm x_{L\cap U}$-prefix, we aggregate away $\Phi_{t_1}(\bm x_L)$ over all
tuples $\bm x_L$ sharing that prefix.

Finally, expression~\eqref{eqn:IO:distributivity} can be rewritten as:
\begin{eqnarray}
    Q(\bm x_F) &=&\bigoplus_{\bm x_{(\calV \setminus F)\setminus M}}
    \left[
    \left(\bigotimes_{\substack{K\in\calE\\K\cap M = \emptyset}} R_K(\bm x_K)\right)
    \otimes\Psi_{t_1}(\bm x_{L\cap U})
    \right]
    \label{eq:IO:M-elimination}
\end{eqnarray}
The above is an $\faq$-query of the same form as~\eqref{eqn:faq}.
It admits an $F$-connex tree decomposition that results from the original $F$-connex
tree decomposition $(T, \chi)$ by removing the leaf bag $t_1$.
In particular, the newly added hyperedge $L\cap U$ (corresponding to $\Psi_{t_1}(\bm x_{L\cap U})$)
is contained in $\chi(t_2)$, and all other properties of $F$-connex tree decompositions continue to
hold after the removal of $t_1$.
Moreover thanks to the fact that $M\neq \emptyset$, the new query~\eqref{eq:IO:M-elimination}
has strictly less variables than the original query~\eqref{eqn:faq}.
In particular, the new query only involves the variables $\bm x_{\calV\setminus M}$
while the original query involves $\bm x_\calV$.
(We say that variables $\bm x_M$ have been {\em eliminated} from the original query hence
the term ``variable elimination''.)
By induction on the number of variables,
we can solve the original query~\eqref{eqn:faq} in the claimed time of $O(N^{w}\cdot\log N+|Q|)$.
(In the base case, we have an $\faq$-query with no variables, where the theorem
holds trivially.)

{\bf Case 2:} $F = \calV$.
Let $t_1$ be an arbitrary leaf node and $t_2$ its parent.
Let $L, U$ and $M$ be defined as before.
Claim~\ref{clm:IO:rip:leafs} continues to hold.
In this case, query~\eqref{eqn:faq} can be written as:
\begin{eqnarray}
Q(\bm x_{\calV}) &=&
\bigotimes_{K\in\calE} R_K(\bm x_K)\\
&=&
\bigotimes_{\substack{K\in\calE\\K\cap M = \emptyset}} R_K(\bm x_K)
\otimes
\bigotimes_{\substack{K\in\calE\\K\cap M \neq \emptyset}} R_K(\bm x_K)\\
\text{(by \eqref{eq:indicator:proj:prop})}&=&
\bigotimes_{\substack{K\in\calE\\K\cap M = \emptyset}} R_K(\bm x_K)
\otimes
\underbrace{
\bigotimes_{\substack{K\in\calE\\K\cap M \neq \emptyset}} R_K(\bm x_K)
\otimes
\bigotimes_{\substack{K\in\calE_{\not\infty}\\K\cap L \neq \emptyset}} \pi_{K,L}(\bm x_{K\cap L})
}_{\text{defined as }\Phi_{t_1}(\bm x_L)}
\label{eq:IO:free-vars}
\end{eqnarray}
Just like in the previous case, we use a worst-case optimal join algorithm to compute
$\Phi_{t_1}$ above in time
$O(N^{\rho^*_{\calE_{\not\infty}}(L)}\cdot \log N)) = O(N^w\cdot \log N)$.
Once we do, we compute its indicator projection:
\begin{align}
    \Psi_{t_1}(\bm x_{L\cap U})  &:= \pi_{L, U}(\bm x_{L\cap U}) =
    \begin{cases}
        \bm 1 & \exists \bm x_{M} \text { s.t. } \Phi_{t_1}((\bm x_{L\cap U}, \bm x_{M})) \neq \bm 0,\\
        \bm 0 & \text{otherwise.}
    \end{cases}
\end{align}
Now~\eqref{eq:IO:free-vars} can be written as:
\begin{eqnarray}
Q(\bm x_\calV) &=&
\bigotimes_{\substack{K\in\calE\\K\cap M = \emptyset}} R_K(\bm x_K)
\otimes
\Phi_{t_1}(\bm x_L)\\
\text{(by \eqref{eq:indicator:proj:prop})} &=&
\underbrace{
\bigotimes_{\substack{K\in\calE\\K\cap M = \emptyset}} R_K(\bm x_K)
\otimes
\Psi_{t_1}(\bm x_{L\cap U})}_{\text{defined as }Q'(\bm x_{\calV \setminus M})}
\otimes
\Phi_{t_1}(\bm x_L)
\label{eq:IO:output-free}
\end{eqnarray}
Note that thanks to the indicator projection $\Psi_{t_1}$ that is included
in the new query $Q'$ above, the following holds:
For every tuple
$\bm x_{\calV \setminus M}$ that satisfies
$Q'(\bm x_{\calV \setminus M}) \neq \bm 0$, there must exist at least one tuple $\bm x_M$
that satisfies
$Q((\bm x_{\calV \setminus M}, \bm x_M)) \neq \bm 0$. This in turn implies that:
\begin{equation}
    |Q'| \leq |Q|.
    \label{eq:IO:Q'-Q}
\end{equation}

By induction on the number of variables, we solve the new query $Q'$
(which doesn't have a bag $t_1$ nor variables $\bm x_M$) in time $O(N^{w}\cdot\log N+|Q'|)$,
which is $O(N^{w}\cdot\log N+|Q|)$ thanks to~\eqref{eq:IO:Q'-Q}.
Finally, we compute the original query $Q$ using the expression
\begin{eqnarray}
    Q(\bm x_\calV) = Q'(\bm x_{\calV\setminus M}) \otimes \Phi_{t_1}(\bm x_L).
\end{eqnarray}
In particular, the above expression can be computed in $O(|Q|)$ time as follows.
First, we index tuples $\bm x_L = (\bm x_{L \cap U}, \bm x_M)$ satisfying
$\Phi_{t_1}(\bm x_L) \neq \bm 0$ so that for a given
$\bm x_{L \cap U}$ we can enumerate in constant delay all tuples $\bm x_M$ where
$\Phi_{t_1}((\bm x_{L \cap U}, \bm x_M)) \neq \bm 0$.
After that, we iterate over tuples $\bm x_{\calV\setminus M}$ satisfying
$Q'(\bm x_{\calV\setminus M}) \neq \bm 0$,
extract the $\bm x_{L \cap U}$-part out of each such $\bm x_{\calV\setminus M}$-tuple,
and then use the previous index
of $\Phi_{t_1}$ to enumerate
$\bm x_M$-tuples corresponding to $\bm x_{L \cap U}$.
\ep

\section{The $\panda$ Algorithm}
\label{app:panda}
In this section, we give an overview of the $\panda$ algorithm developed in~\cite{panda-pods}
along with its extended version~\cite{panda-arxiv}.
The aim is to fill out omitted technical details in the proof of
Theorem~\ref{thm:subfaqw:general}, which introduces a variant of $\panda$ called $\spanda$.

Following notation from Section~\ref{sec:intro:faqai},
the input to the $\panda$ algorithm is as follows:
\bi
\item A multi-hypergraph~\footnote{See definition in Section~\ref{sec:intro:faqai}.} $\calH=(\calV =[n], \calE)$.
\item A relation $R_K$ associated with each hyperedge $K \in \calE$. The arity of $R_K$ is $|K|$.
\item A disjunctive Datalog query of the form
\begin{equation}
	\bigvee_{B \in \calB}G_B(\bm x_B) \leftarrow \bigwedge_{K\in\calE} R_K(\bm x_K)
	\label{eqn:disjunctive:datalog:query:app}
\end{equation}
where $\calB \subseteq 2^\calV$.
\ei

The output of $\panda$ is a collection of tables $(G_B)_{B\in \calB}$
that form a solution to the disjunctive Datalog query~\eqref{eqn:disjunctive:datalog:query:app}
(which can have many solutions).
In particular, the tables $(G_B)_{B\in \calB}$ must satisfy the following condition:
\begin{quote}
    Each tuple $\bm x_\calV$ that satisfies the conjunction
    $\bigwedge_{K\in\calE} R_K(\bm x_K)$ must also satisfy the disjunction
    $\bigvee_{B \in \calB}G_B(\bm x_B)$.
\end{quote}

Following notation from Sections~\ref{subsec:width:params} and~\ref{subsec:insideout:panda},
the runtime of $\panda$ is $\tilde O(N^{e})$, where
\begin{equation}
	e =
	\max_{h\in\ed_{\not\infty}\cap\Gamma_n}\min_{B\in\calB} h(B).
	\label{eq:panda-runtime:app}
\end{equation}
(Recall that $\tilde O$ hides a polylogarithmic factor in $N$.)
We start with some preliminaries. The following lemma shows how to convert the expression~\eqref{eq:panda-runtime:app}
into a linear program.
\begin{lemma}[\cite{panda-pods, panda-arxiv}]
    There exists a non-negative vector $\bm \lambda := (\lambda_B)_{B \in \calB}$ satisfying
    $\norm{\bm \lambda}_1 = 1$ and
    \begin{equation}
        \max_{h\in\ed_{\not\infty}\cap\Gamma_n}\min_{B\in\calB} h(B) =
        \max_{h\in\ed_{\not\infty}\cap\Gamma_n}\sum_{B \in\calB} \lambda_B \cdot h(B).
        \label{eq:panda:LP-form}
    \end{equation}
\label{lmm:panda:LP-form}
\end{lemma}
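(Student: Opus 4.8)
The plan is to recognize $e := \max_{h\in\ed_{\not\infty}\cap\Gamma_n}\min_{B\in\calB} h(B)$ as the value of a finite linear program and to extract $\bm\lambda$ via LP duality; equivalently, and more transparently, I would phrase it as a minimax equality. First I would observe that $P := \ed_{\not\infty}\cap\Gamma_n$ is a compact convex polytope: it is cut out by finitely many linear inequalities (the monotonicity inequalities $h(X)\le h(Y)$ for $X\subseteq Y$, the submodularity inequalities, $h(\emptyset)=0$, and $h(S)\le 1$ for $S\in\calE_{\not\infty}$), and it is bounded because the standing assumption $\bigcup_{S\in\calE_{\not\infty}}S=\calV$ together with submodularity and monotonicity gives $0\le h(X)\le h(\calV)\le\sum_{S\in\calE_{\not\infty}}h(S)\le|\calE_{\not\infty}|$ for every $X\subseteq\calV$.

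Next I would introduce the probability simplex $\Delta_{\calB}:=\{\bm\lambda\in\R_+^{\calB} : \norm{\bm\lambda}_1=1\}$ and the bilinear form $f(h,\bm\lambda):=\sum_{B\in\calB}\lambda_B\,h(B)$ on $P\times\Delta_{\calB}$. Since a linear functional over the simplex attains its minimum at a vertex, $\min_{\bm\lambda\in\Delta_{\calB}}f(h,\bm\lambda)=\min_{B\in\calB}h(B)$, hence $e=\max_{h\in P}\min_{\bm\lambda\in\Delta_{\calB}}f(h,\bm\lambda)$. Both $P$ and $\Delta_{\calB}$ are compact and convex and $f$ is bilinear, so Sion's minimax theorem (equivalently von Neumann's theorem, or plain strong LP duality) yields
\[
e=\max_{h\in P}\min_{\bm\lambda\in\Delta_{\calB}}f(h,\bm\lambda)=\min_{\bm\lambda\in\Delta_{\calB}}\max_{h\in P}f(h,\bm\lambda).
\]
The inner maximum $g(\bm\lambda):=\max_{h\in P}f(h,\bm\lambda)$ is a maximum of affine functions of $\bm\lambda$, hence convex and continuous on the compact set $\Delta_{\calB}$, so it attains its minimum at some $\bm\lambda^\star\in\Delta_{\calB}$; I would then set $\bm\lambda:=\bm\lambda^\star$, which is nonnegative with $\norm{\bm\lambda}_1=1$ by construction, and the displayed equality reads exactly $e=\max_{h\in P}\sum_{B\in\calB}\lambda^\star_B\,h(B)$, which is the claim. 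As a sanity check the easy inequality is immediate: for every $h\in P$ and every $\bm\lambda\in\Delta_{\calB}$, $\min_B h(B)\le\sum_B\lambda_B h(B)\le g(\bm\lambda)$, so $e\le g(\bm\lambda)$ for all $\bm\lambda$; the content of the lemma is that equality is achieved at the minimizer $\bm\lambda^\star$.

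I do not expect a serious obstacle; the only points requiring care are (i) verifying compactness of $P$ (the boundedness step is where the covering assumption on $\calE_{\not\infty}$ enters) and (ii) confirming the hypotheses of the minimax theorem. If one prefers to avoid invoking Sion and stay purely within LP duality, I would instead write $e$ as the optimum of $\max\{\,t : t\le h(B)\ \forall B\in\calB,\ h\in P\,\}$ with variables $(t,h)$ and dualize: the dual variable attached to each constraint $t\le h(B)$ is $\lambda_B\ge 0$, and the dual equality constraint produced by the free variable $t$ is precisely $\sum_{B\in\calB}\lambda_B=1$, so strong duality (the primal is feasible via $h\equiv 0,\ t=0$, and bounded by step (i)) delivers the same $\bm\lambda$.
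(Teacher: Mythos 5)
Your proof is correct: compactness of $\ed_{\not\infty}\cap\Gamma_n$ (using the standing assumption $\bigcup_{S\in\calE_{\not\infty}}S=\calV$ together with monotonicity and subadditivity), the identification of $\min_{B\in\calB}h(B)$ with a minimum of the bilinear form over the simplex, and the minimax/LP-duality step yield exactly the claimed $\bm\lambda$ with $\norm{\bm\lambda}_1=1$. The paper does not prove this lemma itself but cites it from the $\panda$ papers, whose ``minimax lemma'' argument is essentially the same LP-duality/minimax reasoning over the simplex of weights, so your route matches the intended proof.
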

Note that the right-hand side of~\eqref{eq:panda:LP-form} is a linear program: Its variables are
$\left(h(S)\right)_{S\subseteq \calV}$, its objective function is $\sum_{B \in\calB} \lambda_B \cdot h(B)$,
and its constraints are $h \in \Gamma_n$ and $h \in \ed_{\not\infty}$, which are all linear.
(Recall the definitions of $\Gamma_n$ and $\ed_{\not\infty}$ from Section~\ref{subsec:width:params}
and~\eqref{eqn:ed:f} respectively.)
Our next step is to reduce solving this linear program
into finding a Shannon inequality, defined below.
\bdefn[Shannon inequality]
Given real constants $(\alpha_S)_{S \subseteq \calV}$
(where each $\alpha_S$ could be either positive, negative, or zero),
the linear inequality $\sum_{S \subseteq \calV} \alpha_S \cdot h(S) \geq 0$ is called a
{\em Shannon inequality} if it holds for all $h \in \Gamma_n$.
\edefn
Let $\opt$ be the optimal solution to the linear program from the right-hand side of~\eqref{eq:panda:LP-form}:
    \begin{equation}
        \opt := \max_{h\in\ed_{\not\infty}\cap\Gamma_n}\sum_{B \in\calB} \lambda_B \cdot h(B).
        \label{eq:panda:opt}
    \end{equation}
By linear programming duality, the following lemma was proved in~\cite{panda-arxiv}.
\begin{lemma}[\cite{panda-pods, panda-arxiv}]

    There exists a non-negative vector $\bm C = (C_K)_{K \in \ed_{\not\infty}}$ satisfying
    the following conditions:
    \bi
    \item The inequality
    \begin{equation}
        \sum_{B \in\calB} \lambda_B \cdot h(B) \leq \sum_{K \in \ed_{\not\infty}} C_K\cdot h(K)
        \label{eq:panda:LP-to-inequality}
    \end{equation}
    is a Shannon inequality.
    \item
    \begin{equation}
        \sum_{K \in \ed_{\not\infty}} C_K = \opt.
        \label{eq:panda:opt:dual}
    \end{equation}
    \ei
    \label{lmm:panda:LP-to-inequality}
\end{lemma}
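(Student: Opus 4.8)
The plan is to read off $\bm C$ from linear programming (LP) duality applied to the optimization on the right-hand side of~\eqref{eq:panda:LP-form}, whose value is $\opt$ by~\eqref{eq:panda:opt}. First I would write that maximization as an explicit LP in the variables $\left(h(S)\right)_{S\subseteq\calV}$ (recall $n$ is a constant, so there are finitely many variables and constraints): maximize $\sum_{B\in\calB}\lambda_B\, h(B)$ subject to $h(\emptyset)=0$, the monotonicity constraints $h(X)-h(Y)\le 0$ for all $X\subseteq Y$, the submodularity constraints $h(X\cup Y)+h(X\cap Y)-h(X)-h(Y)\le 0$ for all $X,Y\subseteq\calV$, and the edge-domination constraints $h(K)\le 1$ for all $K\in\calE_{\not\infty}$. (Nonnegativity $h(S)\ge 0$ is then implied, so it need not be imposed separately.) This LP is feasible, with $h\equiv 0$, and bounded above: since $\bigcup_{S\in\calE_{\not\infty}}S=\calV$, every $B$ admits a finite fractional edge cover by $\calE_{\not\infty}$, so $h(B)\le\rho^*_{\calE_{\not\infty}}(B)<\infty$ on the feasible region by Remark~\ref{rmk:alternate-fhtw}. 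Hence strong LP duality applies.

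Next I would form the dual, attaching a nonnegative multiplier to each monotonicity constraint, a nonnegative multiplier to each submodularity constraint, a nonnegative multiplier $C_K$ to each edge-domination constraint $h(K)\le 1$, and an unrestricted multiplier to the equality $h(\emptyset)=0$. Among the primal right-hand sides only the edge-domination ones are nonzero (each equal to $1$), so the dual objective is exactly $\min\sum_{K\in\calE_{\not\infty}}C_K$. Strong duality then yields an optimal dual solution with $\sum_{K\in\calE_{\not\infty}}C_K=\opt$, which is~\eqref{eq:panda:opt:dual}; the resulting $\bm C=(C_K)_{K\in\calE_{\not\infty}}$ is our candidate, and it only remains to check~\eqref{eq:panda:LP-to-inequality}.

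The key point is that dual feasibility — the constraint $A^{\top}y=c$ in matrix form — becomes, once written out coefficient-by-coefficient and recombined, the identity
\[
\sum_{K\in\calE_{\not\infty}}C_K\, h(K)\;-\;\sum_{B\in\calB}\lambda_B\, h(B)\;=\;\Delta(h),
\]
where $\Delta(h)$ is the fixed nonnegative combination, with the chosen multipliers, of the expressions $h(Y)-h(X)$ (for $X\subseteq Y$) and $h(X)+h(Y)-h(X\cup Y)-h(X\cap Y)$ (for $X,Y\subseteq\calV$); restricted to $h\in\Gamma_n$ the multiplier of $h(\emptyset)$ plays no role since $h(\emptyset)=0$. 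For any polymatroid $h$ each of these expressions is $\ge 0$ by monotonicity and submodularity, so $\Delta(h)\ge 0$ and hence $\sum_{K\in\calE_{\not\infty}}C_K\, h(K)\ge\sum_{B\in\calB}\lambda_B\, h(B)$ for all $h\in\Gamma_n$, which is precisely the statement that~\eqref{eq:panda:LP-to-inequality} is a Shannon inequality.

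I expect no genuine obstacle: the whole content is the observation that being valid over all polymatroids is the same as being feasible for the dual LP, which is built into the formulation. The care needed is routine — fixing consistent sign conventions when dualizing, confirming that the primal is bounded so that strong (not merely weak) duality is available, and checking that the auxiliary equality $h(\emptyset)=0$ together with the implicit nonnegativity constraints do not disturb the identification of dual feasibility with the Shannon inequality.
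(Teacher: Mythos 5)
Your proposal is correct and takes the same route the paper indicates: the lemma is exactly strong LP duality applied to the polymatroid program defining $\opt$ in \eqref{eq:panda:opt}, with dual feasibility (rewritten as a nonnegative combination of monotonicity and submodularity slacks, the $h(\emptyset)$ multiplier vanishing on $\Gamma_n$) certifying that \eqref{eq:panda:LP-to-inequality} is a Shannon inequality and the dual objective giving \eqref{eq:panda:opt:dual}. The paper itself only cites this result from the PANDA papers ``by linear programming duality,'' so your write-up just fills in that standard dualization.
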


Shannon-flow inequalities~\cite{panda-arxiv} is a special class of Shannon inequalities that
subsumes inequality~\eqref{eq:panda:LP-to-inequality}. It enjoys certain properties that
the $\panda$ algorithm relies on.
Given $X \subset Y \subseteq \calV$, let $h(Y|X)$ denote
\begin{equation}
    h(Y|X) := h(Y) - h(X).
\end{equation}

\bdefn[Shannon-flow inequality~\cite{panda-arxiv}]
Given $\calB \subseteq 2^\calV$, let
$\bm\lambda = (\lambda_B)_{B\in\calB}$ be a non-negative vector.
Let $\bm \delta = (\delta_{Y|X})_{X \subset Y \subseteq \calV}$ be another non-negative vector.
A {\em Shannon-flow inequality} is a Shannon inequality that has the following form:
\begin{equation}
    \sum_{B \in\calB} \lambda_B \cdot h(B) \leq
    \sum_{X \subset Y \subseteq \calV} \delta_{Y|X}\cdot h(Y|X).
    \label{eq:panda:shannon-flow}
\end{equation}
\edefn
Note that~\eqref{eq:panda:LP-to-inequality} is a special case of~\eqref{eq:panda:shannon-flow}
where $\delta_{K|\emptyset} = C_K$ for $K \in \ed_{\not\infty}$ and $\delta_{Y|X} = 0$
otherwise.

\blmm[Proof sequence construction~\cite{panda-pods,panda-arxiv}]
Every Shannon-flow inequality~\eqref{eq:panda:shannon-flow} admits a proof of the following form.
Start from the right-hand side of~\eqref{eq:panda:shannon-flow}, apply a sequence of proof steps
each of which replaces a term (or more) with a smaller term (or more),
until we end up with the left-hand side of~\eqref{eq:panda:shannon-flow}
(which proves that the left-hand side is smaller than the right-hand side).
Each proof step in the sequence
has one of the following forms:
\begin{eqnarray}
h(Y) \rightarrow h(X) + h(Y | X), & \text{for }X \subset Y \subseteq \calV, &\text{(decomposition step)}
\label{eq:step:decomposition}\\
h(X) + h(Y | X) \rightarrow h(Y), & \text{for }X \subset Y \subseteq \calV, &\text{(composition step)}
\label{eq:step:composition}\\
h(X|X\cap Y) \rightarrow h(X \cup Y | Y), & \text{for }X, Y \subseteq \calV, &\text{(submodularity step)}
\label{eq:step:submodularity}\\
h(Y) \rightarrow h(X), & \text{for }X \subset Y \subseteq \calV, &\text{(monotonicity step)}
\label{eq:step:monotonicity}
\end{eqnarray}
\label{lmm:panda:ps}
\elmm

Each proof step in~\eqref{eq:step:decomposition}-\eqref{eq:step:monotonicity} is interpreted
as replacing the term(s) on the left-hand side of the step with the terms(s) on the right-hand side.
Note that for each step in~\eqref{eq:step:decomposition}-\eqref{eq:step:monotonicity}
and each $h \in \Gamma_n$,
the right-hand side of the step is guaranteed to be smaller than the left-hand side.
For example, consider the submodularity step~\eqref{eq:step:submodularity}, where we replace
$h(X|X\cap Y)$ with $h(X \cup Y | Y)$. Because $h \in \Gamma_n$, it must satisfy the inequality
$h(X \cup Y) + h(X \cap Y) \leq h(X) + h(Y)$ (Recall the definition of $\Gamma_n$ in
Section~\ref{subsec:width:params}). But this inequality can be rearranged into
$h(X \cup Y | Y) \leq h(X|X\cap Y)$.
Similarly consider the monotonicity step~\eqref{eq:step:monotonicity}, where we replace $h(Y)$
with $h(X)$ for some $X \subset Y$. Since $h \in \Gamma_n$, it must satisfy $h(X) \leq h(Y)$
whenever $X \subset Y$.

The $\panda$ algorithm starts from the target runtime bound $N^e$
where $e$ is given by~\eqref{eq:panda-runtime:app},
computes a corresponding Shannon-flow inequality~\eqref{eq:panda:LP-to-inequality}
from Lemma~\ref{lmm:panda:LP-to-inequality}
(where $\norm{\bm \lambda}_1 = 1$ thanks to Lemma~\ref{lmm:panda:LP-form}), and then
uses Lemma~\ref{lmm:panda:ps} to construct
a proof sequence $s$ for this inequality
consisting of $\ell$ proof steps $s=(s_1, \ldots, s_\ell)$.
After that the algorithm mimics the process of using this proof sequence to prove
inequality~\eqref{eq:panda:LP-to-inequality}.
In particular, it starts from the right-hand side of~\eqref{eq:panda:LP-to-inequality}
associating each entropy term $h(K)$ with a corresponding input relation $R_K$.
After that it starts applying the proof steps one by one: Each time a proof step $s_i$
is applied to replace some entropy terms on the right-hand side of~\eqref{eq:panda:LP-to-inequality}
with new entropy terms,
the algorithm takes the relations associated with the old terms,
applies some relational operator on them to produce new relations,
and associates the new relations with the new entropy terms.
At the end of the proof sequence, we would have completely transformed the right-hand
side of~\eqref{eq:panda:LP-to-inequality} into the left-hand side completing the proof.
At that time, $\panda$ would have computed relations $G_B$ associated with entropy terms
$h(B)$ on the left-hand side of~\eqref{eq:panda:LP-to-inequality}.
Those particular $G_B$ relations form a solution to the input disjunctive Datalog
rule~\eqref{eqn:disjunctive:datalog:query:app}.
Moreover the algorithm ensures that every relational operator that was performed
while mimicking the proof sequence took time within our target runtime bound
of $N^e$.

Before formally describing the invariants maintained by the algorithm, we need some notation.
\bdefn[Degrees in a relation]
Given a relation $R_Y$ and a set $X \subset Y$,
the degree of $Y$ w.r.t. a tuple $\bm t_X \in \pi_X R_Y$
and w.r.t. to $X$
are defined as follows:
\begin{eqnarray}
	\deg_{R_Y}(Y|\bm t_X) &:=& \left|\left\{\bm t'_Y \in R_Y\suchthat \bm t'_X
	= \bm t_X\right\}\right|,\\
    \deg_{R_Y}(Y|X) &:=& \max_{\bm t_X \in \pi_X R_Y}\deg_{R_Y}(Y|\bm t_X).
\end{eqnarray}
As a special case, we have $\deg_{R_Y}(Y|\emptyset) = |R_Y|$.
\edefn

Although the $\panda$ algorithm starts from a Shannon-flow inequality of the special
from~\eqref{eq:panda:LP-to-inequality}, after applying a decomposition proof step
$h(K) \rightarrow h(X) + h(K | X)$ (for some $X \subset K$) replacing some term $h(K)$
on the right-hand side of~\eqref{eq:panda:LP-to-inequality} with new terms $h(X) + h(K | X)$,
the resulting inequality no longer falls under the special form~\eqref{eq:panda:LP-to-inequality}.
Instead it falls back to the more general form of a Shannon-flow inequality~\eqref{eq:panda:shannon-flow}.
Therefore, in general the $\panda$ algorithm maintains a Shannon-flow
inequality~\eqref{eq:panda:shannon-flow}.

The $\panda$ algorithm maintains the following invariants:
\bi
\item[(I1)] Every term $h(Y|X)$ on the right-hand side of~\eqref{eq:panda:shannon-flow}
is associated with a relation $R_Z$ satisfying $Y \setminus X\subseteq Z\subseteq Y$.
The relation $R_Z$ is called the {\em guard} of the term $h(Y|X)$.
(Note that if $X = \emptyset$, then $Z = Y$.)
\item[(I2)] The guards satisfy the following:
\begin{eqnarray}
    \sum_{\substack{X \subset Y \subseteq \calV\\R_Z : R_Z \text{ guards }h(Y|X)}}
    \delta_{Y|X}\cdot \underbrace{\log_2\left[\deg_{R_Z}(Z|X\cap Z)\right]}_{\text{defined as $n_{Y|X}$}}
    &\leq& e \cdot \log_2 N,
    \label{eq:panda:I2}
\end{eqnarray}
where $N$ is the input size and $e$ is given by~\eqref{eq:panda-runtime:app}.
For convenience, we define $n_{Y|X} := \log_2\left[\deg_{R_Z}(Z|X\cap Z)\right]$ where
$R_Z$ is the guard of $h(Y|X)$.
\item[(I3)] Every guard $R_Z$ satisfies
\begin{equation}
    |R_Z| \leq N^e.
    \label{eq:panda:I3}
\end{equation}
\ei
Initially, the above invariants are satisfied.
In particular, inequality~\eqref{eq:panda:shannon-flow} at the beginning is just
\eqref{eq:panda:LP-to-inequality} where each $h(K)$ is guarded by $R_K$ thus satisfying
invariant (I1). Moreover,~\eqref{eq:panda:I2} is satisfied as follows:
\begin{eqnarray}
    \sum_{K \in \ed_{\not\infty}} C_K\cdot n_{K|\emptyset} &\leq&
    \sum_{K \in \ed_{\not\infty}} C_K\cdot \log_2 N\\
    \text{(by~\eqref{eq:panda:opt:dual})}&=&\opt \cdot \log_2 N\\
    \text{(by~\eqref{eq:panda:LP-form})}&=&e \cdot \log_2 N.
\end{eqnarray}
Also~\eqref{eq:panda:I3} is satisfied because initially each input relation $R_K$ satisfies
$|R_K| \leq N$. (It is straightforward to verify that $e$ defined by~\eqref{eq:panda-runtime:app}
is at least $1$.)

Next we describe how $\panda$ handles each type of proof
steps~\eqref{eq:step:decomposition}-\eqref{eq:step:monotonicity} while maintaining the above
invariants and also ensuring that all operations are performed in time $N^e$.

{\bf Case 1:} Submodularity step
$h(X|X\cap Y) \rightarrow h(X \cup Y | Y)$ for some $X, Y \subseteq \calV$.
Let $R_Z$ be the guard of the term $h(X|X\cap Y)$.
We can directly use $R_Z$ as a guard of the new term $h(X \cup Y | Y)$ thus
satisfying invariant (I1).
Since both terms share the same guard, we have $n_{X \cup Y | Y} = n_{X|X\cap Y}$ hence
the left-hand side of~\eqref{eq:panda:I2} remains unchanged and invariant~\eqref{eq:panda:I2}
remains satisfied. Invariant~\eqref{eq:panda:I3} remains satisfied as well.

{\bf Case 2:} Monotonicity step
$h(Y) \rightarrow h(X)$ for $X \subset Y$.
Let $R_Y$ be the guard of $h(Y)$. We use $R_X := \pi_X R_Y$ as a guard of the new term $h(X)$.
We have $n_{X|\emptyset} = \log_2|R_X| \leq \log_2|R_Y| = n_{Y|\emptyset}$ hence
the left-hand side of~\eqref{eq:panda:I2} does not increase and invariant~\eqref{eq:panda:I2}
remains satisfied.
Invariant~\eqref{eq:panda:I3} remains satisfied because $|R_X| \leq |R_Y| \leq N^e$.
Moreover since $|R_Y| \leq N^e$ by invariant~\eqref{eq:panda:I3}, the projection
$\pi_X R_Y$ can be computed in our target runtime bound of $N^e$.

{\bf Case 3:} Decomposition step
$h(Y) \rightarrow h(X) + h(Y | X)$ for $X \subset Y$.
Let $R_Y$ be the guard of $h(Y)$.
In this case, we partition $R_Y$ into a small number $k = O(\log |R_Y|)$ of
relations $R_Y^{(1)}, \ldots, R_Y^{(k)}$, branch the execution of the algorithm into $k$ different branches
where $R_Y$ is replaced with $R_Y^{(j)}$ on the $j$-th branch for $j \in [k]$, and continue the algorithm
on each branch separately, and combine the outputs at the very end.
Because of the logarithmic number of branches created at each decomposition step, the runtime
of the algorithm blows up from the ideal bound of $N^e$ to $\tilde O(N^e)$ where $\tilde O$ hides
a polylogarithmic factor in $N$.

In particular, we partition tuples $\bm t_X \in \pi_X R_Y$ into $k$
buckets based on $\deg_{R_Y}(Y|\bm t_X)$ and partition $R_Y$ accordingly.
Specifically, for each $j \in [k]$, we define
\begin{eqnarray}
R_X^{(j)} &:=& \left\{\bm t_X \in \pi_X R_Y \suchthat 2^{j-1} \leq \deg_{R_Y}(Y|\bm t_X) < 2^j\right\},\\
R_Y^{(j)} &:=& \left\{\bm t_Y \in R_Y \suchthat \bm t_X \in R_X^{(j)} \right\}.
\end{eqnarray}
After partitioning, $\panda$ creates $k$ independent branches of the problem,
where in the $j$-th branch, $R_Y$ is replaced by both $R_X^{(j)}$ and $R_Y^{(j)}$.
Note that for each $j \in [k]$, the size of $R_X^{(j)}$ is at most $|R_Y|/2^{j-1}$ therefore:
\begin{equation}
    |R_Y| \quad\geq\quad |R_X^{(j)}| \quad\times\quad \frac{\deg_{R_Y^{(j)}}(Y|X)}{2}\\
\label{eq:panda:partition}
\end{equation}
Moreover if we partition each $R_Y^{(j)}$ further into two parts, we can get rid of the division by $2$
in~\eqref{eq:panda:partition} at the cost of doubling the number of branches.

Now on the $j$-th branch, we replace the term $h(Y)$ with the two terms $h(X)$ and $h(Y|X)$,
which are guarded by $R_X^{(j)}$ and $R_Y^{(j)}$ respectively.
By taking the $\log$ of both sides of~\eqref{eq:panda:partition} (and ignoring the division by 2),
we have $n_{Y|\emptyset} \geq n_{X|\emptyset} + n_{Y|X}$
hence the left-hand side of~\eqref{eq:panda:I2} does not increase and invariant~\eqref{eq:panda:I2}
remains satisfied. Moreover, because $\max(|R_X^{(j)}|, |R_Y^{(j)}|) \leq |R_Y| \leq N^e$,
invariant~\eqref{eq:panda:I3} remains satisfied as well.
Finally because $|R_Y| \leq N^e$ thanks to invariant~\eqref{eq:panda:I3},
the above partitioning of $R_Y$ can be performed in time $N^e$ as needed.

{\bf Case 4:} Composition step
$h(X) + h(Y|X) \rightarrow h(Y)$ for $X \subset Y$.
Let $R_X$ be the guard of $h(X)$ and $R_Z$ be the guard of $h(Y|X)$.
Recall from (I1) that this implies $Y\setminus X \subseteq Z \subseteq Y$.
In this case, we compute the join $R_Y := R_X \Join R_Z$ by going over tuples in $R_X$,
projecting each one of them onto $Z$, and finding the matching tuple in $R_Z$
(which can be done efficiently by a proper indexing of $R_Z$). The output size of this join satisfies
\begin{equation}
|R_Y| \quad\leq\quad |R_X| \quad\times\quad\deg_{R_Z}(Z | X \cap Z).
\label{eq:panda:join}
\end{equation}
Moreover the join can be computed in time proportional to $|R_Y|$.
We use the join result $R_Y$ as a guard for the new term $h(Y)$ that results from applying
the proof step.
From~\eqref{eq:panda:join}, we have $n_{Y|\emptyset} \leq n_{X|\emptyset} + n_{Y|X}$
hence the left-hand side of~\eqref{eq:panda:I2} does not increase and invariant~\eqref{eq:panda:I2}
remains satisfied.

It remains to verify that the above step maintains invariant~\eqref{eq:panda:I3}
and can also be performed in the desired time of $N^e$. As it turns out, neither is true:
Both the size of the new relation $R_Y$ and the time it takes to compute it can exceed $N^e$.
In order to enforce these invariants, some new technical ideas are needed that are beyond the scope
of this short introduction to $\panda$. We refer the reader to~\cite{panda-arxiv} for
a detailed explanation
of how to handle this last case properly without violating the invariants of the algorithm.

\end{document}